\newif\ifllncs
\newif\ifieee      \ieeefalse 
\newif\ifblind     \blindfalse % \blindtrue
\ifllncs\usepackage{smallsubsub}\else\usepackage{amsthm}\fi
\renewcommand\paragraph[1]{\par\bigskip\noindent\textbf{#1}\hspace{1ex}}
\newcounter{running}[section]
\newenvironment{renumerate}{\begin{enumerate}\setcounter{enumi}{\value{running}}}%
{\setcounter{running}{\value{enumi}}\end{enumerate}}
\def\squareforqed{\hbox{\rlap{$\sqcap$}$\sqcup$}}
\newcommand{\myqed}{\ifieee{}\else\hfill\ensuremath{\squareforqed}\fi}
\newcommand{\bnd}{\ensuremath{\mathcal{B}}}
\newcommand{\bndA}{\ensuremath{\mathcal{A}}}
\newcommand{\Channels}{\ensuremath{\mathcal{CH}}}
\newcommand{\Locations}{\ensuremath{\mathcal{LO}}}
\newcommand{\Data}{\ensuremath{\mathcal{D}}}
\newcommand{\Endpoints}{\ensuremath{\mathcal{EP}}}
\newcommand{\Frame}{\ensuremath{\mathcal{F}}}
\newcommand{\Events}{\ensuremath{\mathcal{E}}}
\newcommand{\qdot}{\,\mathbf{.}\,}
\newcommand{\seq}[1]{ \langle {#1}  \rangle } 
\newcommand{\lts}[1]{\ensuremath{\mathsf{lts}(#1)}}
\newcommand{\restricted}{\mathbin{\mid\!\hspace{-1.17pt}\grave{}}}
\newcommand{\restrict}[2]{#1 \restricted\, #2}
\spnewtheorem{eg}[theorem]{Example}{\bfseries}{\rmfamily}
\spnewtheorem{rmk}[theorem]{Remark}{\bfseries}{\rmfamily}
\theoremstyle{definition} 
\newtheorem{theorem}{Theorem}
\newtheorem{definition}[theorem]{Definition}
\newtheorem{lemma}[theorem]{Lemma}
\newtheorem{corollary}[theorem]{Corollary}
\newtheorem{eg}[theorem]{Example}
\newenvironment{repeatthm}[2]{\medskip\par\noindent\textbf{#1\ 
    \ref{#2}.}  \em}{\medskip\par\noindent}
\newif\ifcites
\newcommand{\newop}[2]{\newcommand{#1}{\ensuremath{\mathsf{#2}}}}
\newop{\restr}{Rstr}
\newop{\exec}{Exc}
\newop{\ES}{ES}
\newop{\entry}{entry}
\newop{\exit}{exit}
\newop{\states}{states}
\newop{\initial}{initial}
\newop{\labels}{labels}
\newop{\accept}{recv}
\newop{\deliver}{xmit}
\newop{\mpair}{mpair}
\newop{\acceptep}{acc\_ep}
\newop{\deliverep}{del\_ep}
\newop{\createch}{create}
\newop{\pends}{ends}
\newop{\ipends}{init\_ends}
\newop{\LEFT}{left}
\newop{\RIGHT}{right}
\newop{\MID}{mid}
\newcommand{\compat}{J}
\newop{\local}{local}
\newop{\Tag}{Tag}
\newop{\sender}{sender}
\newop{\recipt}{rcpt} 
\newop{\chan}{chan} 
\newop{\chans}{chans} 
\newop{\msg}{msg} 
\newop{\fresh}{fresh} 
\newop{\events}{ev} 
\newop{\proj}{proj} 
\newop{\loc}{loc}
\newop{\Rt}{Rt}
\newop{\Rg}{Rg}
\newop{\gr}{gr}
\newop{\un}{un}
\newop{\run}{run}
\newop{\runs}{runs}
\newop{\intf}{intf}
\newop{\Host}{Host}
\newop{\traces}{traces}
\newop{\dom}{dom}
\newop{\invisible}{invis}
\newop{\visible}{vis}
\newop{\inp}{input}
\newop{\ND}{ND}
\newop{\NI}{NI}
\newop{\lab}{lab}
\newop{\select}{select}
\newop{\lsrc}{src}
\newop{\lobs}{obs}
\newop{\lcut}{cut}
\newop{\inb}{inb}
\newop{\outb}{outb}
\newop{\Dir}{Dir}
\newop{\vis}{\mathsf{vis}}
\newop{\hid}{\mathsf{hid}}
\newcommand{\LEFTZ}{\LEFT_0}
\newcommand{\lcutz}{\lcut_0}
\newcommand{\cmpt}[3]{\ensuremath{\compat_{{#2}\triangleleft{#1}}(#3)}}
\newcommand{\lruns}[1]{\ensuremath{{#1}\mbox{-}\runs}}
\newcommand{\lobsequiv}{\ensuremath{\approx_{\lobs}}}
\newcommand{\fequiv}{\ensuremath{\approx^{p}}}    %% _{i}
\newcommand{\lrunsone}[1]{\ensuremath{{#1}\mbox{-}\runs_1}}
\newcommand{\lrunstwo}[1]{\ensuremath{{#1}\mbox{-}\runs_2}}
\newcommand{\cmptone}[3]{\ensuremath{\compat^1_{{#2}\triangleleft{#1}}(#3)}}
\newcommand{\cmpttwo}[3]{\ensuremath{\compat^2_{{#2}\triangleleft{#1}}(#3)}}
\def\pushright#1{{              % set up
    \parfillskip=0pt            % so \par doesnt push \square to left
    \widowpenalty=10000         % so we dont break the page before \square
    \displaywidowpenalty=10000  % ditto
    \finalhyphendemerits=0      % TeXbook exercise 14.32
    %% 
    %% horizontal
    \leavevmode                 % \nobreak means lines not pages
    \unskip                     % remove previous space or glue
    \nobreak                    % don't break lines
    \hfil                       % ragged right if we spill over
    \penalty50                  % discouragement to do so
    \hskip.2em                  % ensure some space
    \null                       % anchor following \hfill
    \hfill                      % push \square to right
    {#1}                        % the end-of-proof mark (or whatever)
    %% 
    %% vertical
    \par}}                      % build paragraph
\newcommand{\defsymbol}{\ensuremath{\scriptstyle /// }}
\newcommand{\defend}{\pushright{\defsymbol}\penalty-700 \smallskip}
\title{A Cut Principle for Information Flow\thanks{Copyright
    \copyright\ 2015 The MITRE Corporation.  All rights reserved.  %Draft of \today.
  }}
\author{Joshua D. Guttman and Paul D. Rowe} 
\ifllncs\institute{The MITRE Corporation\\\{guttman,prowe\}@mitre.org}
\iffalse\textsf{Approved for
  Public Release; Distribution Unlimited. \\
  Case Number 14-3139}
\begin{document}
\maketitle

\ifllncs\else{\small
  \tableofcontents}
\fi

\pagestyle{plain}
%   \pagestyle{myheadings} 
%   \markboth{\textsf{Draft, \today}}{\textsf{Draft, \today}}

\begin{abstract}
  We view a distributed system as a graph of active locations with
  unidirectional channels between them, through which they pass
  messages.  In this context, the graph structure of a system
  constrains the propagation of information through it.  

  Suppose a set of channels is a cut set between an information source
  and a potential sink.  We prove that, if there is no disclosure from
  the source to the cut set, then there can be no disclosure to the
  sink.  We introduce a new formalization of partial disclosure,
  called \emph{blur operators}, and show that the same cut property is
  preserved for disclosure to within a blur operator.  A related
  compositional principle ensures limited disclosure for a class of
  systems that differ only beyond the cut.  \hfill\textbf{\today}

\end{abstract}

\section{Introduction}
\label{sec:intro}

In this paper, we consider information flow in a true-concurrency,
distributed model.  Events in an execution may be only partially
ordered, and locations communicate via synchronous message-passing.
Each message traverses a channel.  The locations and channels form a
directed graph.

Evidently, the structure of this graph constrains the flow of
information.  Distant locations may have considerable information
about each other's actions, but only if the information in
intermediate regions accounts for this.  If a kind of information does
not traverse the boundary of some portion of the graph (a \emph{cut
  set}), then it can never be available beyond that.  We represent
these limits on disclosure, i.e.~kinds of information that do not
escape, using \emph{blur operators}.  A blur operator returns a set of
behaviors local to the information source; these should be
indistinguishable to the observer.  When disclosure from a source to a
cut set is limited to within a blur operator, then disclosure to a
more distant region is limited to within the same blur operator (see
Thm.~\ref{thm:cut:II}, the \emph{cut-blur} principle).

Thus, disclosure to within a blur operator is similar to other forms
of \emph{what}-dimension
declassification~\cite{sabelfeld2009declassification}.  Blur operators
are our formalization of the semantic content of limited disclosures.
The cut-blur principle combines this with a \emph{where}-dimension
perspective.  It gives a criterion that localizes those disclosure
limits within a system architecture.

A related result, Thm.~\ref{cor:compose}, supports
\emph{compositional} security.  Consider any other system that differs
from a given one only in its structure beyond the cut.  That system
will preserve the flow limitations of the first, assuming that it has
the same local behaviors as the first in the cut set.  We illustrate
this (Examples~\ref{eg:compositional}--\ref{eg:voting:compositional})
to show that secrecy and anonymity properties of a firewall and a
voting system are preserved under some environmental changes.  Flow
properties of a simple system remain true for more complex systems, if
the latter do not distort the behavior of the simple system.

% Since systems are developed incrementally, we introduce a notion of
% homomorphism that can be used to define different refinement-like
% relations.  Theorem~\ref{thm:blur:hom} gives a sufficient condition
% for security properties to be preserved across a homomorphism.  We
% also show how to use Theorems~\ref{thm:cut:II}
% and~\ref{thm:blur:hom} together to show that some security
% properties are preserved under well-chosen refinements.

Our model is intended for many types of systems, including networks,
software architectures, virtualized systems, and distributed protocols
such as voting systems.  Our (somewhat informal) examples are meant to
be suggestive.  Network examples, which involve little local state,
are easy to describe, and rely heavily on the directed graph
structure.  Blur operators pay off by allowing many of their security
goals now to be regarded as information-flow properties.  Voting
systems offer an interesting notion of limited disclosure, since they
must disclose the result but not the choices of the individual voters.
Their granularity encourages composition, since votes are aggregated
from multiple precincts (Example~\ref{eg:voting:compositional}).

\paragraph{Motivation.}  A treatment of information flow---relying on
the graph structure of distributed systems---facilitates compositional
security design and analysis.

Many systems have a natural graph structure, which is determined early
in the design process.  Some are distributed systems where the
components are on separate platforms, and the communication patterns
are a key part of the specifications.  In other cases, the components
may be software, such as processes or virtual machines, and the
security architecture is largely concerned with their communication
patterns.  The designers may want to validate that these communication
patterns support the information flow goals of the design early in the
life cycle.  Thm.~\ref{cor:compose} justifies the designers in
concluding that a set of eventual systems all satisfy these security
goals, when those systems all agree on ``the part that really
matters.''

% We have also established a refinement result, which characterizes a
% class of system refinements that will preserve a security goal
% \ifblind(see Appendix~\ref{sec:hom}, included here for the
% convenience of referees)%
% \else\cite{GuttmanRowe14:cut}\fi.  We will study it in future work.

\paragraph{Contributions of this paper.}  Our main result is the
cut-blur principle, Thm.~\ref{thm:cut:II}.  The definition of
\emph{blur operator} is a supplementary contribution.  We show that
any reasonable notion of partial disclosure satisfies the conditions
for a blur (Lemma~\ref{lemma:partition}); these conditions are simple
structural properties that shape the proof of the cut-blur principle.
Thm.~\ref{cor:compose} brings the ideas to a compositional form.

\paragraph{Structure of this paper.}  After discussing motivating
examples (Section~\ref{sec:eg}) and some related work
(Section~\ref{sec:related}), we introduce our systems, called
\emph{frames}, and their execution model in Section~\ref{sec:model}.
This is a static model, in the sense that the channels connecting
different locations do not change during execution.
Section~\ref{sec:nondisclosure} proves the cut-blur principle for the
simple case of no disclosure of information at all across the
boundary.

We introduce blur operators (Section~\ref{sec:blur}) to formalize
partial disclosure.
In Section~\ref{sec:cut:blur}, we show that the cut idea is also
applicable to blurs (Thms.~\ref{thm:cut:II},~\ref{cor:compose}).
%
% In Section~\ref{sec:hom}, we introduce frame homomorphisms, and give
% a sufficient criterion for them to preserve disclosure within a
% relevant blur (Thm.~\ref{thm:blur:hom}).
%   
%   In Section~\ref{sec:refine}, we
%   combine the cut-blur principle and frame homomorphisms to provide one
%   way to check for secure refinement.
%   

Section~\ref{sec:blur:non:interference} provides rigorous results to
relate our model to the literature.  We end by indicating some future
directions.  Appendix~\ref{sec:lemmas} contains longer proofs, and
additional lemmas.\iffalse\footnote{Appendix~\ref{sec:lemmas} is meant
  as part of the submission, while Appendix~\ref{sec:hom} is not.  It
  is included for the convenience of the referees.}\fi

%%% Local Variables: 
%%% mode: latex
%%% TeX-master: "ieee_ccut"
%%% End: 

\section{Two Motivating Examples} 
\label{sec:eg}

We first propose two problems we view in terms of information flow.
One is about network filtering; the other concerns anonymity in
voting.  In each, we want to prove an information flow result once,
and then reuse it compositionally under variations that do not affect
the part that really matters.

\begin{eg}[Network filtering]
  \label{eg:net:intro}
  \begin{figure}[tb]
    \centering
    {\footnotesize $ \xymatrix@R=4mm@C=4mm{
        && \bullet\ i\ar@/^/[d]\ar@(ru,rd)[]  && \\
        && \squareforqed\ r_1\ar@/^/[u]\ar@/^/[d] && \\
        && \squareforqed\ r_2\ar@/^/[u]\ar@/^/[dl]\ar@/^/[dr] && \\
        & \bullet\ {n_1}\ar@/^/[ur]\ar@(ld,lu)[] && \bullet\
        {n_2}\ar@/^/[ul]\ar@(rd,ru)[] & } $}
    \caption{A Two-Router Firewall}
    \label{fig:firewall}
  \end{figure}
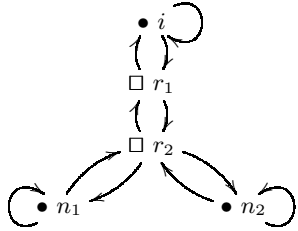
  Fig.~\ref{fig:firewall} shows a two-router firewall separating the
  public internet (node $i$) from two internal network regions
  $n_1,n_2$.  The firewall should ensure that any packet originating
  in the internal regions $n_1,n_2$ reaches $i$ only if it satisfies
  some property of its source and destination addresses, protocol, and
  port (etc.); we will call these packets \emph{exportable}.
  Likewise, any packet originating in $i$ reaches $n_1,n_2$ only if it
  satisfies a related property of its source and destination
  addresses, protocol, and port (etc.); we will call these packets
  \emph{importable}.

  These are information flow properties.  The policy provides
  \emph{confidentiality} for non-exportable packets within $n_1,n_2$,
  ensuring that they are not observable at $i$.  It provides a kind of
  \emph{integrity} protection for $n_1,n_2$ from non-importable
  packets from $i$, ensuring that $n_1,n_2$ cannot be damaged, or
  affected at all, if they are malicious.

  We assume here that packets are generated independently, so that
  (e.g.) no process on a host in $n_1,n_2$ generates exportable
  packets encoding confidential non-exportable packets it has sent or
  received.  If some process on a host is observing packets and coding
  their contents into packets to a different destination, this is a
  problem firewalls were not designed to solve, and security
  administrators worry about it separately.

  A firewall configuration enforcing a flow goal against the internet
  viewed as a single node $i$ should still succeed if $i$ has internal
  structure.  Similarly, the internal regions $n_1,n_2$ may vary
  without risk of security failure.  \defend
\end{eg}
We will return to this example several times to illustrate how we
formalize the system and specify its flow goals.
Example~\ref{eg:compositional} proves that some information flow goals
of Fig.~\ref{fig:firewall} remain true as the structure of $i,n_1,n_2$
varies.

\begin{eg}
  \label{eg:voting:intro} 
  As another key challenge, consider an electronic voting system such
  as ThreeBallot~\cite{rivest2007three}.
  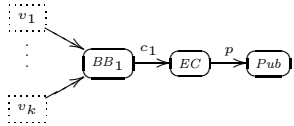
\begin{figure}[tb] \centering\tiny
    $$\xymatrix@R=.5mm@C=5mm{
      *+[F.]{v_1} \ar[rd] & & \\ \vdots &
*+[F-:<3pt>]{\mathit{BB}_1}\ar[r]^{c_1} &
*+[F-:<3pt>]{\mathit{EC}}\ar[r]^p & *+[F-:<3pt>]{\mathit{Pub}} \\
*+[F.]{v_k} \ar[ru] & & }
    $$
    \caption{A single precinct}
    \label{fig:precinct}
  \end{figure}
  Fig.~\ref{fig:precinct} shows the voters $v_1,\ldots,v_k$ of a
  single precinct, their ballot box ${\mathit{BB}_1}$, a channel
  delivering the results to the election commission ${\mathit{EC}}$,
  and then a public bulletin board ${\mathit{Pub}}$ that reports the
  results.

  The ballot box should provide voter anonymity:  neither
  ${\mathit{EC}}$ nor anyone observing the results ${\mathit{Pub}}$
  should be able to associate any particular vote with any particular
  voter $v_i$.  This also is an information flow goal.

  However, elections generally concern many precincts.
  Fig.~\ref{fig:election} contains $i$ precincts, all connected to the
  election commission $\mathit{EC}$.
  \begin{figure}[tb]
    \centering\tiny
    $$\xymatrix@R=.5mm@C=3mm{
      *+[F.]{v_1} \ar[rd] & & \\
      \vdots & *+[F-:<3pt>]{\mathit{BB}_1}\ar[rdd]^{c_1}
      &  \\
      *+[F.]{v_k} \ar[ru] & &  \\
      \vdots & & *+[F-:<3pt>]{\mathit{EC}}\ar[r]^p & *+[F-:<3pt>]{\mathit{Pub}} \\
      *+[F.]{w_1} \ar[rd] & & \\
      \vdots & *+[F-:<3pt>]{\mathit{BB}_i}\ar[ruu]_{c_i}
      & \\
      *+[F.]{w_j} \ar[ru] & &  }
    $$
    \caption{Multiple precincts report to $\mathit{EC}$}
    \label{fig:election}
  \end{figure}
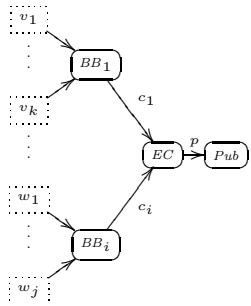
  Intuitively, a voter $v_n$ cannot lose their anonymity in the larger
  system:  ${\mathit{BB}_1}$ has already anonymized the votes in this
  first precinct.  Accumulating precinct summaries at ${\mathit{EC}}$
  cannot change the causal consequences of ${\mathit{BB}_1}$'s
  actions.  \defend
\end{eg}
We formalize the flow goals of this example in
Example~\ref{eg:voting}, and extend the results to
Fig.~\ref{fig:election} in Example~\ref{eg:voting:compositional}.

These simple examples illustrate the payoff from a compositional
approach to flow goals.  Conclusions about a firewall should be
insensitive to changes in the structure of the networks to which it is
attached.  An anonymity property achieved by a ballot mechanism should
be preserved as we collect votes from many precincts.  These are
situations where we want to design and reuse mechanisms, with a
criterion to ensure the mechanism is remains safe under changes in the
part that does not really matter.  Thm.~\ref{cor:compose} below is the
criterion we propose.

%%% Local Variables:
%%% mode: latex
%%% TeX-master: "ieee_ccut"
%%% End:

\section{Some related work}
\label{sec:related}  

We return to related work also in
Sections~\ref{sec:blur:non:interference}--\ref{sec:summary}.

%   We supplement this discussion of related work in the conclusion, and
%   also with a more rigorous treatment in
%   Section~\ref{sec:blur:non:interference}, which formalizes relations
%   with some earlier work (Lemmas~\ref{lemma:ni:nd},
%   \ref{lemma:nd:blur}).

\paragraph{Noninterference and nondeducibility.}  There is a massive
literature on information-flow security; Goguen and Meseguer were key
early contributors~\cite{GoguenMeseguer82}.  Sutherland introduced the
non-deducibility idea~\cite{Sutherland86} as a way to formalize lack
of information flow, which we have adopted in our ``non-disclosure''
(Def.~\ref{def:non:disclosure}).  Subsequent work has explored a wide
range of formalisms, including state
machines~\cite{rushby1992noninterference}; process algebras such as
CSP~\cite{Roscoe1995csp,RoscoeGoldsmith99,RyanSchneider2001} and
CCS~\cite{FocardiGorrieri97,focardi2001classification,bossi2004verifying};
and bespoke
formalisms~\cite{mccullough1988noninterference,mantel2000possibilistic}.

Irvine, Smith, and Volpano reinvigorated a language-based
approach~\cite{VolpanoEtAl96}, inherited from Denning and
Denning~\cite{denning1977certification}, in which systems are
programs.  Typing ensures that their behaviors satisfy
information-flow goals.  Sabelfeld and Myers~\cite{SabelfeldMyers03}
reviewed the early years of this approach.  Distributed execution has
been considered in language-based work,
e.g.~\cite{zdancewic2002secure,chong2007secure,bohannon2009reactive}.
Our work here is not specifically language-based, since the behaviors
of our locations are defined only as sets of traces.  If these
behaviors are specified as programs, however, our results may be
usefully combined with language-based methods.

\paragraph{Declassification.}  Declassification is a major concern for
us.  A blur operator (Def.~\ref{def:blur}) determines an upper bound
on what a system may declassify.  It may declassify information unless
its blur operators require those details to be blurred out.  Like
escape-hatches~\cite{sabelfeld2004model}, this is disclosure along the
\emph{what}-dimension, in the Sabelfeld-Sands
classification~\cite{sabelfeld2009declassification}.  The cut-blur
principle connects this \emph{what} declassification to \emph{where}
the processing responsible for the declassification will occur in a
system architecture.  In this regard, it combines a semantic view of
what information is declassified with an architectural view related to
intransitive
noninterference~\cite{rushby1992noninterference,van2007indeed}.
Balliu et al.~\cite{balliu2011epistemic} connect \emph{what},
\emph{where}, and \emph{when} declassification via epistemic logic,
although without a compositional method.  

McCamant and Ernst~\cite{mccamant2008quantitative} study quantitative
information flow when programs run.  A directed acyclic graph
representing information flow is generated dynamically by inspecting a
particular execution or set of executions.  The max-flow/min-cut
theorem then bounds flow through the program by what can traverse
minimal cuts.  Apparently, they do not ensure that every possible
execution will respect the bounds.  They do not provide a
compositional way to reuse flow conclusions like our main theorems.

\paragraph{Composability and refinement.}  McCullough first raised the
questions of non-determinism and composability of information-flow
properties~\cite{mccullough1987specifications,mccullough1988noninterference}.
This was a major focus of work through much of the period since,
persisting until
today~\cite{johnson1988security,zakinthinos1995composability,mantel2001preserving,mantel2002composition,rossi2009information,RafnssonSabelfeld14}.
Mantel, Sands, and Sudbrock~\cite{mantel2011assumptions} use a
rely/guarantee method for compositional reasoning about flow in the
context of imperative programs.
Roscoe~\cite{roscoe1994non,Roscoe1995csp} offers a definition based on
determinism, which is intrinsically composable.
Morgan's~\cite{morgan2006shadow} programming language treatment
clarifies the refinements that preserve security.  Our results do not
run afoul of the refinement paradox
either~\cite{jacob1992basic,morgan2006shadow}:  our theorems identify
the assumptions that ensure that blurs are preserved.

Van der Meyden~\cite{van2012architectural} provides an architectural
treatment designed to achieve preservation under refinement.  Our work
is distinguished from it in offering a new notion of composition,
illustrated in Examples~\ref{eg:net:intro}--\ref{eg:voting:intro}; in
focusing on declassification; and in applying uniformly to a range of
declassification policies, defined by the blur operators.

Van der Meyden's work with
Chong~\cite{chong2009deriving,ChongvanderMeyden14} is most closely
related to ours.  They consider ``architectures,'' i.e.~directed
graphs that express an intransitive noninterference style of
\emph{what}-dimension flow policy.  The nodes of an architecture are
security domains, intended to represent levels of information
sensitivity.  The authors define when a (monolithic) deterministic
state machine, whose transitions are annotated by domains,
\emph{complies} with an architecture.  The main result
in~\cite{chong2009deriving} is a cut-like epistemic property on the
architecture graph:  Roughly, any knowledge acquired by a recipient
about a source implies that the same knowledge is available at every
cut set in the architecture graph.

A primary contrast between this paper and~\cite{chong2009deriving} is
our distributed execution model.  We consider it a more localized link
to development, since components are likely to be designed,
implemented, and upgraded piecemeal.  Chong and van der Meyden focus
instead on the specification level, in which security specifications
(rather than system components) form the directed graph.  This is a
new and unfamiliar artifact needed before analysis can occur.  Their
epistemic logic allows nested occurrences of the \emph{knowledge}
modality $K_G$, or occurrences of $K_G$ in the hypothesis of an
implication.  However, this surplus extra expressiveness is not used
in their examples, which do not have nested $K_G$ operators, or
occurrences of $K_G$ in the hypothesis of an implication.  Indeed, our
clean proof methods suggest that our model may have the right degree
of generality, and be easy to understand, apply, and enrich.

Recently~\cite{ChongvanderMeyden14}, they label the arrows by
functions $f$, where $f$ filters information from its source, bounding
visibility to its target.  They have not re-established their cut-like
epistemic property in the richer model, however.

Van der Meyden and Chong's refinement
method~\cite{van2012architectural,ChongvanderMeyden14}
%
%
%   \footnote{Ours is summarized 
%   %
%     \ifblind for the convenience of referees in 
%   %
%     \else in the accompanying technical report,~\cite{GuttmanRowe14:cut}.\fi
%   %
%   %  Appendix~\ref{sec:hom}.
%   }
%
applies when
%    relates to one possible use of
%   homomorphisms~\ifblind(App.~\ref{sec:hom})\else\cite{GuttmanRowe14:cut}\fi,
%   namely those in which 
%
the refined system has a homomorphism \emph{onto} the less refined
one.  Although it may cover Example~\ref{eg:net:intro}, it does not
cover Example~\ref{eg:voting:intro}, where the refined system contains
genuinely new components and events.

%%% Local Variables:
%%% mode: latex
%%% TeX-master: "ieee_ccut"
%%% End:

\section{Frames and Executions}
\label{sec:model}

We represent systems by \emph{frames}.  Each frame is a directed
graph.  Each node, called a \emph{location}, is equipped with a set of
traces defining its possible local behaviors.  The arrows are called
\emph{channels}, and allow the synchronous transmission of a message
from the location at the arrow tail to the location at the arrow head.
Each message also carries some \emph{data}.  

\subsection{The Static Model}
\label{sec:model:static}

In this paper, we will be concerned with a static version of the
model, in which channel endpoints are never transmitted from one
location to another, although we will briefly describe a dynamic
alternative in Section~\ref{sec:summary:dynamic}, in which channels
may also be delivered over other channels.

\begin{definition}[Frame] 
  \label{def:frame} 
  A (static) \emph{frame} $\Frame$ has three disjoint domains:
  \begin{description}
    \item[Locations $\Locations$:] Each location $\ell\in\Locations$
    is equipped with a set of traces, $\traces(\ell)$ 
%
%   labeled transition system $\lts{\ell}$ 
%
    and other information, further constrained below.
    \item[Channels $\Channels$:] Each channel $c\in\Channels$ is
    equipped with two endpoints, $\entry(c)$ and $\exit(c)$.  It is
    intended as a one-directional conduit of data values between the
    endpoints.  
    \item[Data values $\Data$:] Data values $v\in\mathcal{D}$
    may be delivered through channels.
  \end{description}
  We will write $\Endpoints$ for the set of channel endpoints, which
  we formalize as $\Endpoints=\{\entry,\exit\}\times\Channels$,
  although we will generally continue to type $\entry(c)$ and
  $\exit(c)$ to stand for $\seq{\entry,c}$ and $\seq{\exit,c}$.
  For each $\ell\in\Locations$, there is:
  \begin{renumerate}
    \item a set of endpoints $\pends({\ell})\subseteq\Endpoints$ such
    that
    \begin{enumerate}
      \item $\seq{e,c}\in\pends({\ell})$ and
      $\seq{e,c}\in\pends({\ell'})$ implies $\ell=\ell'$; and
      \item there is an $\ell$ such that $\entry(c)\in\pends({\ell})$
      iff \\
      there is an $\ell'$ such that $\exit(c)\in\pends({\ell'})$;
    \end{enumerate}
  \end{renumerate}
    When $\entry(c)\in\pends({\ell})$ we will write $\sender(c)=\ell$;
    when $\exit(c)\in\pends({\ell})$ we will write $\recipt(c)=\ell$.
    Thus, $\sender(c)$ is the location that can send messages on $c$,
    while $\recipt(c)$ is the recipient location that can receive
    them.  We write $\chans(\ell)$ for $\{c\colon\sender(c)=\ell$ or
    $\recipt(c)=\ell\}$.  

    We say that $\lambda$ is a \emph{label} for $\ell$ if
    $\lambda=(c,v)$ where $c\in\chans(\ell)$ and $v\in\Data$.  We
    divide labels into \emph{local}, \emph{transmission}, and
    \emph{reception} labels, where $c,v$ is:
      \begin{description}
        \item[local to $\ell$] if $\sender(c)=\ell=\recipt(c)$;
        \item[a transmission for $\ell$] if $\sender(c)=\ell\not=\recipt(c)$;
        \item[a reception for $\ell$] if
        $\sender(c)\not=\ell=\recipt(c)$.  
      \end{description}

  \begin{renumerate}
    \item a prefix-closed set $\traces(\ell)$, each trace being a
    finite or infinite sequence of labels $\lambda$.  \defend
%   
%       labeled transition system $\lts{\ell}$ with:
%       \begin{description}
%         \item[states:]  $\states({\ell})$; it may be finite or infinite,
%         and varies freely with $\ell$;
%         \item[initial states:]  $\initial(\ell)$, with
%         $\initial(\ell)\subseteq\states({\ell})$;
%         \item[labels:]
%         $\labels({\ell})\subseteq(\chans(\ell)\times\Data)$.  
%   
%         \item[transition relation:] $\leadsto_\ell$ where
%         $\leadsto_\ell\; \subseteq \states({\ell})
%         \times\labels({\ell})\times \states({\ell})$.  
%       \end{description}
  \end{renumerate}
%
%     A (finite or infinite) sequence
%   %
%     % 
%     $s_0\stackrel{\lambda_0}\leadsto{s_1}\stackrel{\lambda_1}\leadsto{s_2}
%     \stackrel{\lambda_2}\leadsto\ldots\stackrel{\lambda_i}\leadsto{s_{i+1}}\ldots$
%     % 
%     is a \emph{computation for} $\lts{\ell}$ iff $s_0\in\initial(\ell)$
%     and for each $s_j\stackrel{\lambda_j}\leadsto{s_{j+1}}$, the triple
%     is in the transition relation $\leadsto_\ell$ of $\lts{\ell}$;
%     i.e.~$(s_j,\lambda_j,s_{j+1})\in\,\leadsto_\ell$.
%   
%     The (finite or infinite) sequence $\seq{\lambda_0,\lambda_1,\ldots}$
%     of labels is called a \emph{trace for} $\lts{\ell}$;
%     $\traces(\lts{\ell})$ is the set of all traces of $\lts{\ell}$.
%  \defend
\end{definition}
In this definition, we do not require that the local behaviors
$\traces(\ell)$ should be determined in any particular way.  They
could be specified by associating a program to each location, or a
term in a process algebra, or a labeled transition system, or a
mixture of these for the different locations.

Each $\Frame$ determines directed and undirected graphs, $\gr(\Frame)$
and $\un\gr(\Frame)$:
\begin{definition}
  If $\Frame$ is a frame, then the \emph{graph of} $\Frame$, written
  $\gr(\Frame)$, is the directed graph $(V,E)$ whose vertices
  $V$ are the locations $\Locations$, and such that there is an edge
  $(\ell_1,\ell_2)\in E$ iff, for some $c\in\Channels$,
  $\sender(c)=\ell_1$ and $\recipt(c)=\ell_2$.  

  The undirected graph $\un\gr(\Frame)$ has those vertices, and an
  undirected edge $(\ell_1,\ell_2)$ whenever either $(\ell_1,\ell_2)$
  or $(\ell_2,\ell_1)$ is in the edges of $\gr(\Frame)$.  \defend
\end{definition}

\subsection{Execution semantics}
\label{sec:model:execution:semantics}

We give an execution model for a frame via partially ordered sets of
events.  The key property is that the events that involve any single
location $\ell$ should be in $\traces(\ell)$.  Our semantics is
reminiscent of Mattern~\cite{Mattern1989}, although his model lacks
the underlying graph structure.  We require executions to be
well-founded, but none of the results in this paper depend on the
assumption.

\begin{definition}[Events; Executions]
  \label{def:events}
  Let $\Frame$ be a frame, and let $\Events$ be a structure
  $\seq{E,\chan,\msg}$.  The members of $E$ are \emph{events},
  equipped with the functions:
  \begin{description}
    \item[$\chan\colon E\rightarrow\Channels$] returns the
    channel of each event; and
    \item[$\msg\colon E\rightarrow\Data$] returns the message
    communicated in each event.
    
  \end{description}
  \begin{enumerate}
    \item $\bnd=(B,\preceq)$ is a \emph{system of events}, written
    $\bnd\in\ES(\Events)$, iff (i) $B\subseteq E$; (ii) $\preceq$ is a
    partial ordering on $B$; and (iii), for every $e_1\in B$,
    $\{e_0\in B\colon e_0\preceq e_1\}$ is
    finite.\label{cl:finite:predecessors}
  \end{enumerate}
  Hence, $\bnd$ is well-founded.  If $\bnd=(B,\preceq)$, we refer to
  $B$ as $\events(\bnd)$ and to $\preceq$ as $\preceq_\bnd$.  
  Now let $\bnd=(B,\preceq)\in\ES(\Events)$:
  \begin{enumerate}\setcounter{enumi}{1}
    \item Define $\proj(B,\ell)=$
    $$\{e\in B\colon \sender(\chan(e))=\ell\mbox{ or
    }\recipt(\chan(e))=\ell\}.$$
    \item $\bnd$ is an \emph{execution}, written
    $\bnd\in\exec(\Frame)$ iff, for every $\ell\in\Locations$,
    \begin{enumerate} 
      \item $\proj(B,\ell)$ is linearly ordered by $\preceq$,
      and\label{cl:exec:linear}
      \item $\proj(B,\ell)$ yields an element of
      $\traces(\ell)$\label{cl:exec:trace} when viewed as a sequence.
      \defend 
       \end{enumerate}
     \end{enumerate} 
%   %
%       
%     
\end{definition}
By Def.~\ref{def:events}, Clauses~\ref{cl:finite:predecessors}
and~\ref{cl:exec:linear}, $\proj(B,\ell)$ is a finite or
infinite sequence, ordered by $\preceq$.  Thus
Clause~\ref{cl:exec:trace} is well-defined.  We will often write
$\bndA,\bndA'$, etc., when $\bndA,\bndA'\in\exec(\Frame)$.  For a
frame $\Frame$, the choice between two different structures
$\Events_1,\Events_2$ makes little difference:  If
$\Events_1,\Events_2$ have the same cardinality, then to within
isomorphism they lead to the same systems of events and hence also
executions.  Thus, we will suppress the parameter $\Events$,
henceforth.
%   When no confusion is possible, we will also suppress
%   $\Frame$, and write $\exec$.

\begin{definition} Let $\bnd_1=(B_1,\preceq_1),
  \bnd_2=(B_2,\preceq_2)\in\ES(\Frame)$.
  \begin{enumerate}
%       \item $\ES(\Frame)$ and $\exec(\Frame)$ are defined to be the sets
%       of all systems of events and of all executions of frame $\Frame$.
    \item $\bnd_1$ is a \emph{substructure} of $\bnd_2$ iff
    $B_1\subseteq B_2$ and
    $\mathop\preceq_1=({\preceq_2}\cap{B_1\times B_1})$.
    \item $\bnd_1$ is an \emph{initial substructure} of $\bnd_2$ iff
    it is a substructure of the latter, and for all $y\in B_1$, if
    $x\preceq_2 y$, then $x\in B_1$.  \defend
  \end{enumerate}
\end{definition}

\begin{lemma}
  \label{lemma:events}
  \begin{enumerate}
    \item If $\bnd_1$ is a substructure of $\bnd_2\in\ES(\Frame)$,
    then $\bnd_1\in\ES(\Frame)$.
    \item If $\bnd_1$ is an initial substructure of
    $\bnd_2\in\exec(\Frame)$, then $\bnd_1\in\exec(\Frame)$.
    \item Being an execution is preserved under chains of initial
    substructures: Suppose that $\seq{\bnd_i}_{i\in\mathbb{N}}$ is a
    sequence where each $\bnd_i\in\exec(\Frame)$, such that $i\le j$
    implies $\bnd_i$ is an initial substructure of $\bnd_j$.  Then
    $(\bigcup_{i\in\mathbb{N}}{\bnd_i})\in\exec(\Frame)$.  \defend
  \end{enumerate}
\end{lemma}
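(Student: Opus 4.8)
The plan is to prove the three parts in order, each reusing the previous one, keeping an eye on the two infinitary points: finiteness of predecessor sets and membership in the prefix-closed $\traces(\ell)$. For part~1, I would verify the three $\ES(\Frame)$ clauses for $\bnd_1=(B_1,\preceq_1)$: clause~(i) is immediate from $B_1\subseteq B_2\subseteq E$; clause~(ii) holds because restricting the partial order $\preceq_2$ to $B_1\times B_1$ is again a partial order; and clause~(iii) holds because each set $\{e_0\in B_1\colon e_0\preceq_1 e_1\}$ sits inside the finite set $\{e_0\in B_2\colon e_0\preceq_2 e_1\}$. For part~2, part~1 already gives $\bnd_1\in\ES(\Frame)$, so only the two execution clauses at each $\ell$ remain. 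Linearity of $\proj(B_1,\ell)$ is inherited from that of $\proj(B_2,\ell)$. The key use of \emph{initiality} is that $\proj(B_1,\ell)$ is down-closed inside $\proj(B_2,\ell)$: if $e\in\proj(B_1,\ell)$ and $e'\preceq_2 e$ with $e'\in\proj(B_2,\ell)$, then $e'\in B_1$ and $e'$ involves $\ell$, so $e'\in\proj(B_1,\ell)$. A down-closed subset of a linear order is an initial segment, i.e.\ a prefix of the sequence $\proj(B_2,\ell)$. Since finite predecessor sets force the order type of $\proj(B_2,\ell)$ to be at most $\omega$, a proper initial segment is finite, so prefix-closure of $\traces(\ell)$ already yields $\proj(B_1,\ell)\in\traces(\ell)$ with no limit argument needed.

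For part~3, write the union as $\bnd=(B,\preceq)$ with $B=\bigcup_i B_i$ and $\preceq=\bigcup_i\preceq_i$. Because $i\le j$ makes $\bnd_i$ a substructure of $\bnd_j$, the $B_i$ increase and the orders are compatible, so any finitely many elements of $B$ lie in a common $B_j$; the partial-order axioms for $\preceq$ then transfer pointwise from those of $\preceq_j$, giving clauses~(i)--(ii), and the same observation makes each $\proj(B,\ell)$ linearly ordered. For clause~(iii), given $e_1\in B_j$, I would show $\{e_0\in B\colon e_0\preceq e_1\}=\{e_0\in B_j\colon e_0\preceq_j e_1\}$: if $e_0\preceq e_1$ then $e_0\preceq_k e_1$ for some $k\ge j$, and initiality of $\bnd_j$ in $\bnd_k$ forces $e_0\in B_j$. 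This gives finiteness and, as a byproduct, shows that each $\bnd_j$ is an initial substructure of $\bnd$.

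The hard part will be the trace clause for the union. By the prefix argument of part~2, the sequences $\proj(B_j,\ell)$ form a chain of traces increasing under the prefix order, and $\proj(B,\ell)$ is their union. If this union is finite it equals some $\proj(B_j,\ell)\in\traces(\ell)$, and in any case every \emph{finite} prefix of $\proj(B,\ell)$ is a prefix of some $\proj(B_j,\ell)$, hence lies in $\traces(\ell)$ by prefix-closure. The real obstacle is an \emph{infinite}, non-stabilizing union: to conclude that $\proj(B,\ell)$ itself is a trace I need $\traces(\ell)$ to contain the limit of every increasing chain of its elements, equivalently, that an infinite sequence all of whose finite prefixes are traces is itself a trace. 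This limit-closure is strictly stronger than prefix-closure, so I would either fold it into the reading of ``prefix-closed'' for infinite sequences or record it as an explicit standing assumption on frames; that is the step I expect to need the most care.
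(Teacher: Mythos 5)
Your proofs of parts~1 and~2 are correct, and they are the natural arguments: restriction of a partial order is a partial order, predecessor sets only shrink, initiality makes $\proj(B_1,\ell)$ down-closed in the linear order $\proj(B_2,\ell)$, and a \emph{proper} down-closed subset of a linear order with finite predecessor sets is finite (it lies below any excluded element), so prefix-closure of $\traces(\ell)$ finishes the job. Note there is nothing in the paper to compare against: Lemma~\ref{lemma:events} is stated without proof and does not appear in the appendix, so your write-up fills a hole rather than retracing one.

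The obstruction you flag in part~3 is genuine, and it is worth being blunt about its status: it is not a defect of your argument but of the lemma as literally stated. Under Def.~\ref{def:frame}, $\traces(\ell)$ is only required to be prefix-closed, and prefix-closure does not imply closure under limits of increasing chains. Concretely, take a frame with a single location $\ell$, one self-loop channel $c$, and let $\traces(\ell)$ consist of all \emph{finite} sequences of the label $(c,v)$; this set is prefix-closed. Let $\bnd_i$ consist of $i$ events on $c$, linearly ordered. Then each $\bnd_i\in\exec(\Frame)$ and $i\le j$ implies $\bnd_i$ is an initial substructure of $\bnd_j$, yet the union projects at $\ell$ onto an infinite sequence not in $\traces(\ell)$, so the union is not an execution. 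Hence part~3 is false without an additional hypothesis, and the limit-closure you propose (every infinite sequence all of whose finite prefixes are traces is itself a trace) is exactly the right repair; with it, your chain argument---each $\bnd_j$ is initial in the union, so $\proj(B,\ell)$ is the supremum of the prefix-chain of the $\proj(B_j,\ell)$---goes through. The rest of your part~3 (the union order is a partial order, the identity $\{e_0\in B\colon e_0\preceq e_1\}=\{e_0\in B_j\colon e_0\preceq_j e_1\}$ for $e_1\in B_j$ via initiality, hence finite predecessor sets and linear projections) is correct as written. Reassuringly for the paper, the flaw is quarantined: the only later use of Lemma~\ref{lemma:events} is part~1, to conclude $\restrict\bnd C\in\ES(\Frame)$, and the combination arguments such as Lemma~\ref{lemma:combine:executions:two:frames} never take limits along chains, so no downstream result is affected.
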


\begin{eg}[Network with filtering]
  \label{eg:net:idea}\label{eg:net:details}
  To localize our descriptions of functionality, we expand the network
  of Fig~\ref{fig:firewall}; see Fig.~\ref{fig:net:1}.  Regions are
  displayed as $\bullet$; routers, as $\squareforqed$; and interfaces,
  as $\bigtriangleup$.  When a router has an interface onto a segment,
  a pair of locations---representing that interface as used in each
  direction---lie between this router and each peer
  router~\cite{GuttmanHerzog05}.  

  Let $\Dir=\{\inb,\outb\}$ represent the inbound direction and the
  outbound directions from routers, respectively.  Suppose $\Rt$ is a
  set of routers $r$, each with a set of interfaces $\intf(r)$, and a
  set of network regions $\Rg$ containing end hosts.

  Each member of $\Rt,\Rg$ is a \emph{location}.  Each
  interface-direction pair
  $(i,r)\in(\bigcup_{r\in\Rt}\intf(r))\times\Dir$ is also a location.
  The \emph{channels} are those shown.  Each interface has a pair of
  channels that allow datagrams to pass between the router and the
  interface, and between the interface and an adjacent entity.  We
  also include a self-loop channel at each network region $i,n_1,n_2$;
  it represents transmissions and receptions among the hosts and
  network infrastructure coalesced into the region.  Thus: 
  \begin{description}
    \item[$\Locations=$] $\Rt\cup\Rg\cup((\bigcup_{r\in\Rt}\intf(r))\times\Dir)$;
    \item[$\Channels=$]
    $\{(\ell_1,\ell_2)\in\Locations\times\Locations\colon \ell_1$
    delivers datagrams directly to $\ell_2\}$;
    \item[$\Data=$] the set of IP datagrams;
    \item[$\pends(\ell)=$]
    $\{\entry(\ell,\ell_2)\colon
    (\ell,\ell_2)\in\Channels\}\cup\{\exit(\ell_1,\ell)\colon $
    $ (\ell_1,\ell)\in\Channels\}$, \ifllncs \\ \fi for each
    $\ell\in\Locations$.
  \end{description}
  Behaviors are easily specified.  Each router $r\in\Rt$ receives
  packets from inbound interfaces, and chooses an outbound interface
  for each.  Its state is a set of received but not yet routed
  datagrams, and the sole initial state is $\emptyset$.  The
  transition relation, when receiving a datagram, adds it to this set.
  When transmitting a datagram $d$ in the current set, it removes $d$
  from the next state and selects an outbound channel as determined by
  the routing table.  For simplicity, the routing table is an
  unchanging part of determining the transition relation.

  A directed interface enforces filtering rules.  The state again
  consists of the set of received but not-yet-processed datagrams.
  The transition relation uses an unchanging filter function to
  determine, for each datagram, whether to discard it or retransmit
  it.

  If $n\in\Rg$ is a region, its state is the set of datagrams it has
  received and not yet retransmitted.  It can receive a datagram;
  transmit one from its state; or else initiate a new datagram.  If
  assumed to be well-configured, these all have source address in a
  given range of IP addresses.  Otherwise, they may be arbitrary.
  \defend
\end{eg}
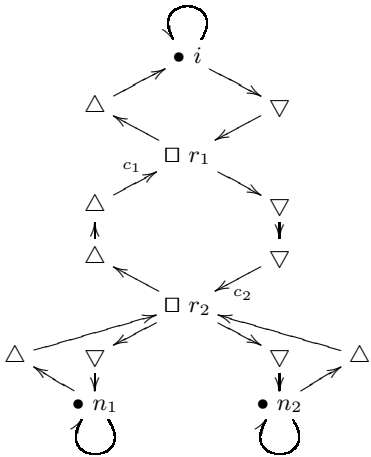
\begin{figure}[tb]  %{r}{0.5\textwidth}
  \centering 
%
%     (a) {\footnotesize $ \xymatrix@R=8mm@C=4mm{
%         & \bullet\ i\ar@/^/[d]\ar@(ur,ul)[]  & \\
%         & \squareforqed\ r_1\ar@/^/[u]\ar@/^/[d] & \\
%         & \squareforqed\ r_2\ar@/^/[u]\ar@/^/[dl]\ar@/^/[dr] & \\
%         \bullet\ {n_1}\ar@/^/[ur]\ar@(dr,dl)[] && \bullet\ {n_2}\ar@/^/[ul]\ar@(dr,dl)[] } $}
%   %
%     \qquad\qquad
%   %
%     (b)
  {\footnotesize $ \xymatrix@R=2mm@C=4mm{
      & & \bullet\ i\ar[dr]\ar@(ur,ul)[] & &\\
      &\bigtriangleup\ar[ur] & & \bigtriangledown\ar[dl] & \\
      & & \squareforqed\ r_1\ar[ul]\ar[dr] & & \\
      &\bigtriangleup\ar[ur]^{c_1} & & \bigtriangledown\ar[d] & \\
      &\bigtriangleup\ar[u] & & \bigtriangledown\ar[dl]^{c_2} & \\
      & & \squareforqed\ r_2\ar[ul]\ar[dl]\ar[dr] & & \\
      \bigtriangleup\ar[urr] & \bigtriangledown\ar[d] & &
      \bigtriangledown\ar[d] &  \bigtriangleup\ar[ull] \\
      & \bullet\ {n_1}\ar[ul]\ar@(dr,dl)[] && \bullet\
      {n_2}\ar[ur]\ar@(dr,dl)[] & } $} \vspace{4mm}
  \caption{Expanded representation of network from Fig.~\ref{fig:firewall}}
\label{fig:net:1}
\end{figure}

If the router is executing other sorts of processing, for instance
Network Address Translation or the IP Security Protocols, then the
behavior is slightly more
complex~\cite{adao2014mignis,GuttmanHerzog05}, but sharply localized.
Many other problems can be viewed as frames.  We have mentioned voting
schemes (Ex.~\ref{eg:voting:intro}).  An attestation architecture for
secure virtualized systems~\cite{CokerEtAl10} is, at one level, a set
of virtual machines communicating through one-directional channels.
%
%   \begin{eg} 
%   %
%     $\Locations$ has $n$ locations representing the $n$ voters, together
%     with:
%     \begin{description}
%       \item[UA] a user agent to authenticate users and receive their
%       encrypted ballots;
%       \item[BB] a web bulletin board on which the encrypted ballots and,
%       subsequently, the decrypted ballots will be posted;
%       \item[MIX] a mix location responsible for shuffling the encrypted
%       ballots; and
%       \item[DEC] a decryption node that acts on the shuffled list of
%       ballots.  \defend
%     \end{description}
%   
%   \end{eg}
%   %

\paragraph{Partially vs.~totally ordered executions.}
Def.~\ref{def:events} does not require the ordering $\preceq$ of
``occurring before'' to be total.  When events occur on different
channels, neither has to precede the other.  Thus, our executions need
not be sequential.

This has three advantages.  First, it is more inclusive, since
executions with total orders satisfy our definition as do those with
(properly) partial orders.  Indeed, the main claims of this paper
remain true when restricted to executions that are totally ordered.
Second, reasoning is simplified.  We do not need to interleave events
when combining two local executions to construct a global one, as
encapsulated in the proof of
Lemma~\ref{lemma:combine:executions:two:frames}.  Nor do we need to
``compact'' events, when splitting off a local execution, as we would
if we used a particular index set for sequences.  This was probably an
advantage to us in developing these results.  Third, the minimal
partial order is a reflection of causality, which can be used also to
reason about independence.  We expect this to be useful in future
work.

There is also a disadvantage:  unfamiliarity.  It requires some
caution.  Moreover, mechanized theorem provers have much better
support for induction over sequences than over well-founded orders.
This was an inconvenience when a colleague used PVS to formalize parts
of this work, eventually choosing to use totally ordered executions.
With that difference, Thms.~\ref{thm:cut:II} and~\ref{cor:compose}
have been confirmed in PVS, as have the basic properties of
Example~\ref{eg:voting:compositional}.

%%% Local Variables: 
%%% mode: latex
%%% TeX-master: "ieee_ccut"
%%% End: 

\section{Non-disclosure}
\label{sec:nondisclosure}

Following Sutherland~\cite{Sutherland86}, we think of information flow
in terms of \emph{deducibility} or \emph{disclosure}.  A participant
observes part of the system behavior, trying to draw conclusions about
a different part.  If his observations exclude some possible behaviors
of that part, then he can \emph{deduce} that those behaviors did not
occur.  His observations have \emph{disclosed} something.

These observations occur on a set of channels $C_o\subseteq\Channels$,
and the deductions constrain the events on a set of channels
$C_s\subseteq\Channels$.  $C_o$ is the set of \emph{observed}
channels, and $C_s$ is the set of \emph{source} channels.  The
observer has access to the events on the channels in $C_o$ in an
execution, using these events to learn about what happened at the
source.  The observed events may rule out some behaviors on the channels
$C_s$.

\begin{definition}
  \label{def:local}
  Let $C\subseteq\Channels$, and $\bnd\in\ES(\Frame)$.
  \begin{enumerate}
    \item The \emph{restriction} $\restrict\bnd C$ \emph{of} $\bnd$ to
    $C$ is $(B_0,R)$, where
    \begin{description}
      \item[$B_0=$] $\{e\in\bnd\colon\chan(e)\in C\}$, and 
      \item[$R=$] $(\mathop\preceq \mathrel\cap B_0\times B_0)$. 
    \end{description}
%       $$ B_0 = \{e\in\bnd\colon\chan(e)\in C\}  \mbox{ and }
%       R=(\mathop\preceq \mathrel\cap B_0\times B_0).
%       $$
%     
    % If $L\subseteq\Locations$, we write $\chans(L)$ for
    % $\{c\in\Channels\colon\sender(c)\in L$ or $\recipt(c)\in L\}$.
%   
%    $\restrict\bnd L$ means $\restrict\bnd {\chans(L)}$.

%       \item If $S\subseteq\ES(\Frame)$, then the \emph{localization} of
%       $S$ to $C$ is defined
%   %
%       $$\local_C(S) = \{\restrict\bnd C\colon\bnd\in S\}.$$

    \item $\bnd\in\ES(\Frame)$ is a $C$-\emph{run} iff for some
    $\bndA\in\exec(\Frame)$, $\bnd=\restrict{\bndA} C$.  We write
    $\lruns{C}(\Frame)$, or sometimes $\lruns{C}$, for the set of
    $C$-runs of $\Frame$.  A \emph{local run} is a member of
    $\lruns{C}$ for the relevant $C$.  

    \item $\cmpt{C}{C'}{\bnd}$ gives the ${C'}$-runs \emph{compatible
      with} a $C$-run $\bnd$:
    \ifieee\begin{eqnarray*}\else
      $$\fi 
             \cmpt{C}{C'}{\bnd} 
             \ifieee & \fi = \ifieee & \fi
                                       \{\restrict{\bndA}{C'} \colon \bndA\in
                                       \exec(\Frame)
                                       \mbox{ and }\ifieee
             \\ && \qquad\fi \restrict{\bndA} C=\bnd\}.
                   \mbox{\qquad\defsymbol}\ifieee\end{eqnarray*}\else $$ \fi
  \end{enumerate}
\end{definition}
$\restrict\bnd C\in\ES(\Frame)$ by Lemma~\ref{lemma:events}.  In
$\cmpt{C}{C'}{\bnd}$, the lower right index $C$ indicates what type of local
run $\bnd$ is.  The lower left index $C'$ indicates the type of local runs
in the resulting set.  $J$ stands for ``joint.''  $\cmpt{C}{C'}{\bnd}$
makes sense even if $C$ and $C'$ overlap, though behavior on $C\cap
C'$ is not hidden from observations at $C$.
\begin{lemma}
  \begin{enumerate}
    \item $\lruns C=\cmpt{\emptyset}{C}{\emptyset,\emptyset}$,
    i.e.~the local runs at $C$ are all those compatible with the empty
    event set $({\emptyset,\emptyset})$ at the empty set of channels.
    \item $\bnd\not\in\lruns C$ implies $\cmpt{C}{C'}{\bnd}=\emptyset$.
    \item $\bnd\in\lruns C$ implies $\cmpt{C}{C}{\bnd}=\{\bnd\}$.
    \item $\cmpt{C}{C'}{\bnd}\subseteq\lruns{C'}$.  \defend
  \end{enumerate}
\end{lemma}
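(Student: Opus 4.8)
The plan is to unfold Definition~\ref{def:local} in each clause; all four statements reduce to bookkeeping about the restriction operator and the defining comprehension for $\cmpt{C}{C'}{\cdot}$. The one fact worth isolating at the outset is that restriction to the empty channel set collapses everything: for any $\bndA\in\exec(\Frame)$, the event set of $\restrict{\bndA}{\emptyset}$ is $\{e\in\events(\bndA)\colon\chan(e)\in\emptyset\}=\emptyset$, so $\restrict{\bndA}{\emptyset}=(\emptyset,\emptyset)$ holds unconditionally.

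For clause~1 I would expand $\cmpt{\emptyset}{C}{\emptyset,\emptyset}$ to $\{\restrict{\bndA}{C}\colon\bndA\in\exec(\Frame)\text{ and }\restrict{\bndA}{\emptyset}=(\emptyset,\emptyset)\}$. By the observation just made, the side condition $\restrict{\bndA}{\emptyset}=(\emptyset,\emptyset)$ is satisfied by \emph{every} execution, so the comprehension ranges over all of $\exec(\Frame)$ and reduces to $\{\restrict{\bndA}{C}\colon\bndA\in\exec(\Frame)\}$, which is $\lruns{C}$ by definition.

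Clauses~2 and~4 are single-line consequences of the shape of the defining set. Every member of $\cmpt{C}{C'}{\bnd}$ is by construction $\restrict{\bndA}{C'}$ for some $\bndA\in\exec(\Frame)$, hence a $C'$-run; this gives clause~4. For clause~2 I argue contrapositively: a nonempty $\cmpt{C}{C'}{\bnd}$ supplies some $\bndA\in\exec(\Frame)$ with $\restrict{\bndA}{C}=\bnd$, which exhibits $\bnd$ as a $C$-run, so $\bnd\in\lruns{C}$.

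Clause~3 needs both inclusions, but each is immediate. Any element of $\cmpt{C}{C}{\bnd}$ arises as $\restrict{\bndA}{C}$ subject to the constraint $\restrict{\bndA}{C}=\bnd$, so it equals $\bnd$; thus $\cmpt{C}{C}{\bnd}\subseteq\{\bnd\}$. Conversely, $\bnd\in\lruns{C}$ furnishes a witnessing $\bndA$ with $\restrict{\bndA}{C}=\bnd$, and this $\bndA$ contributes $\bnd$ to the set, giving the reverse inclusion. Nothing here is deep; the only place an error could creep in is the index convention for $\cmpt{C}{C'}{\bnd}$, namely that the right-hand entry records the type of the given run $\bnd$ while the left-hand entry records the type of the produced runs. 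Keeping this straight—so that the side condition is imposed on $\restrict{\bndA}{C}$ rather than $\restrict{\bndA}{C'}$—is the sole subtlety, and it is what makes clause~1's instantiation at $(\emptyset,\emptyset)$ line up correctly.
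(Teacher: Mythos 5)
Your proposal is correct, and it matches the paper's treatment: the paper states this lemma without proof, marking it as immediate from Definition~\ref{def:local}, and your clause-by-clause unfolding (including the key observation that $\restrict{\bndA}{\emptyset}=(\emptyset,\emptyset)$ holds for every execution, and the correct reading of the index convention so that the side condition constrains $\restrict{\bndA}{C}$) is exactly the routine argument the authors intend the reader to supply.
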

\emph{No disclosure} means that any observation $\bnd$ at $C$ is
compatible with everything that could have occurred at $C'$, where
\emph{compatible} means that there is some execution that combines the
local $C$-run with the desired $C'$-run.

We summarize ``no disclosure'' by the Leibnizian slogan:
\emph{Everything possible is compossible}, ``compossible'' being his
coinage meaning possible together.  If $\bnd,\bnd'$ are each
separately possible---being $C,C'$-runs respectively---then there's an
execution $\bndA$ combining them, and restricting to each of them.
\begin{definition}~\label{def:non:disclosure} $\Frame$ has \emph{no
    disclosure from} $C$ \emph{to} $C'$ iff, for all $C$-runs $\bnd$,
    $\cmpt{C}{C'}{\bnd} = \lruns {C'}$.  \defend
%       \{\restrict\bndA{C'}\colon\bndA\in\exec(\Frame)\}
%    . \mbox{\qquad\qquad\defend} $
%
\end{definition}

\subsection{Symmetry of disclosure}
\label{sec:nondisclosure:symmetry}
Like Shannon's mutual information and Sutherland's
non-deducibility~\cite{Sutherland86}, ``no disclosure'' is symmetric:

\begin{lemma}
  \label{lemma:symmetry} 
  \begin{enumerate}
    \item $\bnd'\in\cmpt{C}{C'}{\bnd}$ iff $\bnd\in\cmpt{C'}{C}{\bnd'}$.
    \item $\Frame$ has no disclosure from $C$ to $C'$ iff $\Frame$ has
    no disclosure from $C'$ to $C$.
  \end{enumerate}
\end{lemma}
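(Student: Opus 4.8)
The plan is to prove clause (1) by directly unfolding the definition of $\compat$ in Def.~\ref{def:local}, and then to derive clause (2) from clause (1) together with the containment $\cmpt{C}{C'}{\bnd}\subseteq\lruns{C'}$ supplied by the unnumbered lemma immediately preceding this one.

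For clause (1), I would observe that $\bnd'\in\cmpt{C}{C'}{\bnd}$ holds, by definition, exactly when there is an execution $\bndA\in\exec(\Frame)$ with $\restrict{\bndA}{C}=\bnd$ and $\restrict{\bndA}{C'}=\bnd'$. Unfolding the symmetric assertion $\bnd\in\cmpt{C'}{C}{\bnd'}$ gives the existence of an execution $\bndA'\in\exec(\Frame)$ with $\restrict{\bndA'}{C'}=\bnd'$ and $\restrict{\bndA'}{C}=\bnd$. These two existential statements are identical up to the name of the witnessing execution: the very same $\bndA$ serves in both directions. Hence the biconditional is immediate, and no new execution needs to be constructed.

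For clause (2), I would first record that, by the preceding lemma, we always have $\cmpt{C}{C'}{\bnd}\subseteq\lruns{C'}$, so the set equality $\cmpt{C}{C'}{\bnd}=\lruns{C'}$ demanded by Def.~\ref{def:non:disclosure} is equivalent to the single reverse inclusion $\lruns{C'}\subseteq\cmpt{C}{C'}{\bnd}$. Thus ``no disclosure from $C$ to $C'$'' says precisely that for every $C$-run $\bnd$ and every $C'$-run $\bnd'$ we have $\bnd'\in\cmpt{C}{C'}{\bnd}$. Rewriting each membership by clause (1) turns this into the condition that, for all such $\bnd,\bnd'$, we have $\bnd\in\cmpt{C'}{C}{\bnd'}$. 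Since the two universal quantifiers range over the same product $\lruns{C}\times\lruns{C'}$ regardless of order, and applying the containment $\cmpt{C'}{C}{\bnd'}\subseteq\lruns{C}$ in the same way, this is exactly the unfolding of ``no disclosure from $C'$ to $C$.''

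I do not expect a genuine obstacle here; the argument is essentially bookkeeping about quantifiers and a shared witness. The one step deserving care is the reduction of the set \emph{equality} in Def.~\ref{def:non:disclosure} to a single inclusion: this is where the preceding lemma's containment clause is needed, and it is precisely what lets the definition collapse to a condition on pairs $(\bnd,\bnd')$ over $\lruns{C}\times\lruns{C'}$ that is manifestly symmetric once clause (1) is applied.
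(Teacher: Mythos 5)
Your proof is correct and follows essentially the same route as the paper's: clause (1) by unfolding the definition of $\compat$ to a symmetric existential statement about a single witnessing execution, and clause (2) by reducing the set equality in Def.~\ref{def:non:disclosure} to the pairwise condition on $\lruns{C}\times\lruns{C'}$ and applying clause (1). The only difference is that you make explicit the use of the containment $\cmpt{C}{C'}{\bnd}\subseteq\lruns{C'}$, which the paper leaves implicit; this is a welcome clarification, not a divergence.
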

\begin{proof}
  \textbf{1.}  By the definition, $\bnd'\in\cmpt{C}{C'}{\bnd}$ iff
  there exists an execution $\bnd_1$ such that
  $\restrict{\bnd_1}{C}=\bnd$ and $\restrict{\bnd_1}{C'}=\bnd'$.
  Which is equivalent to $\bnd\in\cmpt{C'}{C}{\bnd'}$.

  \smallskip\noindent\textbf{2.}  There is no disclosure from $C'$ to
  $C$ if for every $C$-run $\bnd$, for every $C'$-run $\bnd'$,
  $\bnd'\in\cmpt{C}{C'}{\bnd}$.  By Clause 1, this is the same as
  $\bnd\in\cmpt{C'}{C}{\bnd'}$.  \myqed
\end{proof}%
%
%{\noindent}
%
Because of this symmetry, we speak of no disclosure \emph{between} $C$
and $C'$.

$\bndA$ \emph{witnesses for} $\bnd'\in\cmpt{C}{C'}{\bnd}$ iff
$\bndA\in\exec(\Frame)$, $\bnd=\restrict\bndA{C}$, and
$\bnd'=\restrict\bndA{C'}$.  We also sometimes say that $\bnd_1$
witnesses for $\bnd'\in\cmpt{C}{C'}{\bnd}$ if $\bnd_1\in\lruns{C_1}$,
where $C\cup C'\subseteq {C_1}$, when $\restrict{\bnd_1}C=\bnd$ and
$\restrict{\bnd_1}{C'}=\bnd'$.

\begin{lemma}
  \label{lemma:cmpt:propagation} 
  \begin{enumerate}
    \item Suppose $C_0\subseteq C_1$ and $C_0'\subseteq C_1'$.  If
    $\Frame$ has no disclosure from $C_1$ to $C_1'$, then $\Frame$ has
    no disclosure from $C_0$ to $C_0'$.
    \item When $C_1,
    C_2,C_3\subseteq\Channels$, \label{clause:cmpt:propagation}
    $$ \cmpt{C_1}{C_3}{\bnd_1} \subseteq
    \bigcup_{\bnd_2\in\cmpt{C_1}{C_2}{\bnd_1}}
    \cmpt{C_2}{C_3}{\bnd_2}. \qquad\defsymbol$$
  \end{enumerate}
\end{lemma}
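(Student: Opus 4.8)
The plan is to handle the two clauses separately: Clause~2 is a direct single-witness argument straight from the definition of compatible runs, while Clause~1 combines the no-disclosure hypothesis with a transitivity property of restriction.

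For Clause~2, I would take an arbitrary $\bnd_3\in\cmpt{C_1}{C_3}{\bnd_1}$ and unfold the definition to obtain an execution $\bndA\in\exec(\Frame)$ with $\restrict{\bndA}{C_1}=\bnd_1$ and $\restrict{\bndA}{C_3}=\bnd_3$. The key move is to let this \emph{same} $\bndA$ furnish the intermediate run, setting $\bnd_2=\restrict{\bndA}{C_2}$. Then $\restrict{\bndA}{C_1}=\bnd_1$ together with $\restrict{\bndA}{C_2}=\bnd_2$ witnesses $\bnd_2\in\cmpt{C_1}{C_2}{\bnd_1}$, while $\restrict{\bndA}{C_2}=\bnd_2$ together with $\restrict{\bndA}{C_3}=\bnd_3$ witnesses $\bnd_3\in\cmpt{C_2}{C_3}{\bnd_2}$; hence $\bnd_3$ lands in the union on the right. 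Only the inclusion is claimed, since distinct $\bnd_2$'s in $\cmpt{C_1}{C_2}{\bnd_1}$ need not be jointly realizable with $\bnd_3$ in a single execution.

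For Clause~1, I would fix a $C_0$-run $\bnd$ and aim to show $\cmpt{C_0}{C_0'}{\bnd}=\lruns{C_0'}$. The inclusion $\subseteq$ holds unconditionally (it is one of the basic clauses recorded just above, $\cmpt{C}{C'}{\bnd}\subseteq\lruns{C'}$), so the work is to show $\lruns{C_0'}\subseteq\cmpt{C_0}{C_0'}{\bnd}$. Given an arbitrary $\bnd_0'\in\lruns{C_0'}$, I would first lift both endpoints to the larger channel sets: a witnessing execution for $\bnd$, restricted to $C_1$, yields a $C_1$-run $\bnd_1$ with $\restrict{\bnd_1}{C_0}=\bnd$, and a witnessing execution for $\bnd_0'$, restricted to $C_1'$, yields a $C_1'$-run $\bnd_1'$ with $\restrict{\bnd_1'}{C_0'}=\bnd_0'$. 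The hypothesis of no disclosure from $C_1$ to $C_1'$ then gives $\cmpt{C_1}{C_1'}{\bnd_1}=\lruns{C_1'}$, so $\bnd_1'\in\lruns{C_1'}$ forces $\bnd_1'\in\cmpt{C_1}{C_1'}{\bnd_1}$; let $\bndA$ witness this. Pushing $\bndA$ the rest of the way down recovers $\restrict{\bndA}{C_0}=\bnd$ and $\restrict{\bndA}{C_0'}=\bnd_0'$, placing $\bnd_0'$ in $\cmpt{C_0}{C_0'}{\bnd}$.

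The fact doing the real work in both the lift-up and push-down steps is the transitivity of restriction: for $C_0\subseteq C_1$, $\restrict{(\restrict{\bndA}{C_1})}{C_0}=\restrict{\bndA}{C_0}$, which is immediate from Def.~\ref{def:local} because restricting to $C_1$ and then to $C_0$ retains exactly the events whose channel lies in $C_0\cap C_1=C_0$, under the same induced order. I expect the only delicate point to be this bookkeeping — tracking which restriction lands in which channel set and confirming that a composed restriction collapses to a single one; the rest of the argument is carried entirely by the single-witness idea.
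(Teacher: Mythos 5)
Your proposal is correct and follows essentially the same route as the paper's own proof: Clause~2 by reusing the single witnessing execution $\bndA$ to define $\bnd_2=\restrict{\bndA}{C_2}$, and Clause~1 by lifting the two local runs to $C_1$- and $C_1'$-runs, invoking no-disclosure to obtain a joint witness, and pushing it back down via the identity $\restrict{(\restrict{\bndA}{C_1})}{C_0}=\restrict{\bndA}{C_0}$. The only cosmetic difference is that you make the trivial inclusion $\cmpt{C_0}{C_0'}{\bnd}\subseteq\lruns{C_0'}$ explicit, which the paper leaves implicit.
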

This is not an equality.  $\bnd_1$ and $\bnd_2$ may make incompatible
demands on a location $\ell$.  It may have endpoints on channels in
both $C_1$ and $C_3$; or paths may connect $\ell$ to both $C_1$ and
$C_3$ without traversing $C_2$.  Lemma~\ref{lemma:cut} shows that
otherwise equality holds.  See Appendix~\ref{sec:lemmas} for this
proof, and the longer subsequent proofs.
%
\iffalse
\begin{proof}
  \textbf{1.}  Suppose $\bnd_0$ is a $C_0$-run, and $\bnd_0'$ is a
  $C_0'$-run.  We want to show that
  $\bnd_0'\in\cmpt{C_0}{C_0'}{\bnd_0}$.  

  Since they are local runs, there exist
  $\bndA_0,\bndA_0'\in\exec(\Frame)$ such that
  $\bnd_0=\restrict{\bndA_0}{C_0}$ and
  $\bnd_0'=\restrict{\bndA_0'}{C_0'}$.  But let
  $\bnd_1=\restrict{\bndA_0}{C_1}$ and let
  $\bnd_1'=\restrict{\bndA_0'}{C_1'}$.  By no-disclosure,
  $\bnd_1'\in\cmpt{C_1}{C_1'}{\bnd_1}$.  So there is an
  $\bndA\in\exec(\Frame)$ such that $\bnd_1=\restrict\bndA{C_1}$ and
  $\bnd_1'=\restrict\bndA{C_1'}$.

  However, then $\bndA$ witnesses for
  $\bnd_0'\in\cmpt{C_0}{C_0'}{\bnd_0}$:  After all, since
  $C_0\subseteq C_1$,
  $\restrict\bndA{C_0}=\restrict{(\restrict\bndA{C_1})}{C_0}$.
  Similarly for the primed versions.

  \smallskip\noindent\textbf{2.}  Suppose that
  $\bnd_3\in\cmpt{C_1}{C_3}{\bnd_1}$, so that there exists an
  $\bndA\in\exec(\Frame)$ such that $\bnd_1=\restrict\bndA{C_1}$ and
  $\bnd_3=\restrict\bndA{C_3}$.  Letting $\bnd_2=\restrict\bndA{C_2}$,
  the execution $\bndA$ ensures that
  $\bnd_2\in\cmpt{C_1}{C_2}{\bnd_1}$ and
  $\bnd_3\in\cmpt{C_2}{C_3}{\bnd_2}$.  \myqed
\end{proof}
\fi

\subsection{The Cut Principle for Non-disclosure}
\label{sec:nondisclosure:cut}

Our key observation is that non-disclosure respects the graph
structure of a frame $\Frame$.  If $\lcut\subseteq\Channels$ is a cut
set in the undirected graph $\un\gr(\Frame)$, then disclosure from a
source set $\lsrc\subseteq\Channels$ to a sink
$\lobs\subseteq\Channels$ is controlled by disclosure to $\lcut$.  If
there is no disclosure from $\lsrc$ to $\lcut$, there can be no
disclosure from $\lsrc$ to $\lobs$.  As we will see in
Section~\ref{sec:cut:blur}, this property extends to limited
disclosure in the sense of disclosure to within a blur operator.

We view a cut as separating one set of channels as source from another
set of channels as sink.  Although it is more usual to take a cut to
separate sets of nodes than sets of channels, it is easy to transfer
between the channels and the relevant nodes.  If
$C\subseteq\Channels$, we let
$\pends(C)=\{\ell\colon\exists c\in C\qdot \sender(c)=\ell$ or
$\recipt(c)=\ell\}$; conversely, $\chans(L)=\{c\colon\sender(c)\in L$
or $\recipt(c)\in L\}$.  For a singleton set $\{\ell\}$ we suppress
the curly braces and write $\chans(\ell)$.

\begin{definition}
  \label{def:cut} 
  Let $\lsrc,\lcut,\lobs\subseteq\Channels$ be sets of channels;
  $\lcut$ is an \emph{undirected cut} (or simply a \emph{cut}) between
  $\lsrc,\lobs$ iff
  \begin{enumerate}
    \item $\lsrc,\lcut,\lobs$ are pairwise disjoint; and
    \item every undirected path $p_1$ in $\un\gr(\Frame)$ from any
    $\ell_1\in\pends(\lobs)$ to any $\ell_2\in\pends(\lsrc)$ traverses
    some member of $\lcut$.  \defend
  \end{enumerate}
\end{definition}
For instance, in Fig.~\ref{fig:net:1}, $\{c_1,c_2\}$ is a cut between
$\chans(i)$ and $\chans(\{n_1,n_2\})$.
%   
%   The following lemma is
%   the key principle that relates compatibility with cuts.
%   
Lemma~\ref{lemma:cut} serves as the heart of the proofs of the two
main theorems about cuts, Thms.~\ref{thm:cut:I} and~\ref{thm:cut:II}.
\begin{lemma}
  \label{lemma:cut} 
  Let $\lcut$ be an undirected cut between $\lsrc,\lobs$, and let
  $\bnd_o\in\lruns\lobs$.  Then 
%   %
%     Let $\lcut$ be an undirected cut between $\lsrc,\lobs$.  Let
%     $\bnd_s,\bnd_c,\bnd_o$ be a $\lsrc$-run, a $\lcut$-run, and an
%     $\lobs$-run, resp., where
%   %
%     $$ \bnd_c\in\cmpt{\lobs}{\lcut}{\bnd_o} \qquad
%     \bnd_s\in\cmpt{\lcut}{\lsrc}{\bnd_c}.
%     $$
%   %
%     Then
%     $\bnd_s\in\cmpt{\lobs}{\lsrc}{\bnd_o}$.    %\label{clause:cut:cmpt:trans}
%     Hence:
%   %
  $$ \cmpt{\lobs}{\lsrc}{\bnd_o} =
  \bigcup_{\bnd_c\in\cmpt{\lobs}{\lcut}{\bnd_o}}
  \cmpt{\lcut}{\lsrc}{\bnd_c} .  \qquad\defsymbol 
  $$
\end{lemma}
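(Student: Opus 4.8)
The plan is to prove the two inclusions separately. The inclusion $\subseteq$ is immediate: instantiating Lemma~\ref{lemma:cmpt:propagation}, Clause~\ref{clause:cmpt:propagation}, with $C_1=\lobs$, $C_2=\lcut$, $C_3=\lsrc$, and $\bnd_1=\bnd_o$ yields exactly $\cmpt{\lobs}{\lsrc}{\bnd_o}\subseteq\bigcup_{\bnd_c\in\cmpt{\lobs}{\lcut}{\bnd_o}}\cmpt{\lcut}{\lsrc}{\bnd_c}$. This direction makes no use of the cut hypothesis; the whole force of the lemma lies in the reverse inclusion, where the graph structure finally enters.

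For $\supseteq$, I would fix $\bnd_c\in\cmpt{\lobs}{\lcut}{\bnd_o}$ and $\bnd_s\in\cmpt{\lcut}{\lsrc}{\bnd_c}$, and must produce a single execution restricting to $\bnd_o$ on $\lobs$ and to $\bnd_s$ on $\lsrc$. Unfolding the definitions gives witnesses $\bndA_1,\bndA_2\in\exec(\Frame)$ with $\restrict{\bndA_1}{\lobs}=\bnd_o$, $\restrict{\bndA_1}{\lcut}=\bnd_c=\restrict{\bndA_2}{\lcut}$, and $\restrict{\bndA_2}{\lsrc}=\bnd_s$. Since restriction selects events without renaming them, $\bndA_1$ and $\bndA_2$ contain literally the same events on $\lcut$, ordered identically (both equal to the ordering of $\bnd_c$). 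The task is thus to glue $\bndA_1$ and $\bndA_2$ along their shared cut events.

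The cut hypothesis supplies the partition that makes gluing possible. Deleting the channels of $\lcut$ from $\un\gr(\Frame)$, call a location \emph{observation-reachable} if it remains connected to some member of $\pends(\lobs)$. By Def.~\ref{def:cut}, no observation-reachable location is connected to $\pends(\lsrc)$ off the cut, so the non-cut channels split into $\RIGHT$ (both endpoints observation-reachable) and $\LEFT$ (the rest), with $\lobs\subseteq\RIGHT$, $\lsrc\subseteq\LEFT$, and $\Channels=\LEFT\cup\lcut\cup\RIGHT$. The key structural consequence is that no location owns both a $\LEFT$-channel and a $\RIGHT$-channel, since its non-cut channels all lie in a single component of the deleted graph. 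I then define $\bndA$ to have events $\{e\in\events(\bndA_1)\colon\chan(e)\in\RIGHT\cup\lcut\}\cup\{e\in\events(\bndA_2)\colon\chan(e)\in\LEFT\}$, ordered by the least partial order extending the union of the two inherited orderings.

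The main obstacle is verifying that $\bndA$ is a genuine execution, and this is exactly where the agreement on $\lcut$ does its work. First, the union order has no cycles: any putative cycle must alternate between the two sides, so it passes through cut events, and each same-side arc between consecutive cut events collapses (using that $\bndA_1,\bndA_2$ restrict equally to $\bnd_c$) to a step of $\preceq_{\bnd_c}$, producing a cycle in $\bnd_c$---impossible. The same collapsing argument shows $\preceq_{\bndA}$ adds no new ordering between two events on the same side, so $\restrict{\bndA}{(\RIGHT\cup\lcut)}=\restrict{\bndA_1}{(\RIGHT\cup\lcut)}$ and $\restrict{\bndA}{(\LEFT\cup\lcut)}=\restrict{\bndA_2}{(\LEFT\cup\lcut)}$. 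Finite-predecessor closure (Def.~\ref{def:events}, Clause~\ref{cl:finite:predecessors}) follows since the predecessors of any event are covered by its predecessors in one $\bndA_i$ together with the $\bndA_j$-predecessors of the finitely many cut events below it. Finally, because each location's channels lie entirely in $\RIGHT\cup\lcut$ or entirely in $\LEFT\cup\lcut$, its projection $\proj(\bndA,\ell)$ coincides as an ordered sequence with $\proj(\bndA_1,\ell)$ or $\proj(\bndA_2,\ell)$, hence lies in $\traces(\ell)$, so $\bndA\in\exec(\Frame)$. Restricting $\bndA$ to $\lobs$ and to $\lsrc$ then recovers $\bnd_o$ and $\bnd_s$, giving $\bnd_s\in\cmpt{\lobs}{\lsrc}{\bnd_o}$ as required.
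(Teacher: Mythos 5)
Your proof is correct and takes essentially the same approach as the paper: the paper likewise obtains $\subseteq$ from Lemma~\ref{lemma:cmpt:propagation} and proves $\supseteq$ by gluing the two witness executions along their literally shared $\lcut$-events, ordered by the least partial order extending the union, with the cut-induced partition guaranteeing that no location straddles both sides so every projection comes wholly from one witness. The only differences are organizational: the paper's appendix derives the lemma as a one-frame corollary of the two-frame Lemma~\ref{lemma:cut:two:frames} (whose proof rests on the gluing Lemma~\ref{lemma:combine:executions:two:frames}), whereas you inline the gluing directly, and you actually spell out a point the paper leaves implicit, namely that the glued order adds no new orderings within either side, which is what justifies the restriction equalities.
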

\begin{proof} \emph{(Key idea.)}  First, partition $\Locations$ into
  three classes.  Let $\LEFT$ contain $\ell$ if $\ell$ has an endpoint
  on $\lobs$, or if $\ell$ can be reached by a path not traversing
  $\lcut$.  Let $\RIGHT$ contain $\ell$ if $\ell$ has an endpoint on
  $\lsrc$, or if $\ell$ can be reached by a path not traversing
  $\lcut$.  Let $\MID$ be the remainder, i.e.~locations separated from
  both $\LEFT$ and $\RIGHT$ by a channel in $\lcut$.  

  Suppose that $\bndA_1$ witnesses for
  ${\bnd_c\in\cmpt{\lobs}{\lcut}{\bnd_o}}$, and $\bndA_2$ witnesses
  for $\bnd_s\in\cmpt{\lcut}{\lsrc}{\bnd_c}$.  $\bndA_1$ and $\bndA_2$
  agree for events involving $\MID$, namely the events in $\bnd_c$
  shared between them.

  We build a witness $\bndA$ for
  ${\bnd_s\in\cmpt{\lobs}{\lcut}{\bnd_o}}$ by taking the events in
  $\bndA_1$ involving $\LEFT\cup\MID$, union the the events in
  $\bndA_2$ involving $\RIGHT\cup\MID$.  $\bndA$ is an execution
  because no location has a conflict between events from $\bndA_1$ and
  $\bndA_2$.
%   
%   
%   Lemma~\ref{lemma:cmpt:propagation} gives
%     us inclusion from left to right.  So let
%     ${\bnd_c\in\cmpt{\lobs}{\lcut}{\bnd_o}}$ and let
%     $\bnd_s\in\cmpt{\lcut}{\lsrc}{\bnd_c}$; we must show that
%     $\bnd_s\in\cmpt{\lobs}{\lsrc}{\bnd_o}$.  So consider two executions
%     that witness for the compatibilities:  let $\bndA_1$ witness for the
%     first, so that $\restrict{\bndA_1}{\lobs}={\bnd_o}$ and
%     $\restrict{\bndA_1}{\lcut}={\bnd_c}$.  Likewise, let $\bndA_2$
%     witness for the first statement, so that
%     $\restrict{\bndA_2}{\lcut}={\bnd_c}$ and
%     $\restrict{\bndA_1}{\lsrc}={\bnd_s}$.  
%   
%     Now let $\bndA$ combine them.  It contains all events from $\bndA_1$
%     that occur on any channel that lies on any path through
%     $\un\gr(\Frame)$ starting at ${\bnd_o}$ and not reaching any location
%     in $\lcut$ except possibly at the end.  We call this part of the
%     graph $\LEFT$.  It contains all events from $\bndA_1$ that occur on
%     any channel that lies on any path through $\un\gr(\Frame)$ starting at
%     ${\bnd_s}$ and not reaching any location in $\lcut$ except possibly
%     at the end.  We call this part of the graph $\RIGHT$.  
%   
%     $\bndA$ is an execution:  For any location in $\lcut$, $\bndA_1$ and
%     $\bndA_2$ agree.  For all other locations in $\LEFT$, all of their
%     channels lie in $\LEFT$, and so their behavior in $\bndA$ is
%     identical with their behavior in $\bndA_1$.  For all non-$\lcut$
%     locations in $\RIGHT$, their behavior agrees with $\bndA_2$.  So in
%     all cases it is a possible trace.  
%  
\end{proof}
The partial order semantics means that no arbitrary interleaving is
needed to create the instance $\bndA$.
%
\iffalse 
\begin{proof}
%
  \textbf{1.}  Let $\bndA_1$ be a witness for
  $\bnd_c\in\cmpt{\lobs}{\lcut}{\bnd_o}$ and let $\bndA_2$ be a
  witness for $\bnd_s\in\cmpt{\lcut}{\lsrc}{\bnd_c}$.  Thus,
  $\restrict{\bndA_1}{\lobs}={\bnd_o}$;
  $\restrict{\bndA_1}{\lcut}={\bnd_c}=\restrict{\bndA_2}{\lcut}$; and
  $\restrict{\bndA_2}{\lsrc}=\bnd_s$.

  Now define $L\subseteq\Locations$ by
  $L=\{\sender(c),\recipt(c)\colon c\in \lobs\}$, and partition
  $\Channels$ into the three sets $\lcut$, $\LEFT$, and
  $\RIGHT=(\Channels\setminus\LEFT)\setminus\lcut$, where
  \begin{eqnarray*}
    \LEFT = \{c &\in& \Channels\colon \exists \ell\in L,\exists p\in
                      \mbox{paths}(\un\gr(\Frame)) \qdot \\ 
                && p(0)=\ell \mbox{ and } p \mbox{ traverses } c \mbox{ and } \\
                && p\mbox{ does not traverse }\lcut\} .
  \end{eqnarray*}
  By the cut property, $\lsrc\subseteq\RIGHT$.  So
  $\LEFT,\lcut,\RIGHT$ form a partition of $\Channels$.  Thus, we may
  apply Lemma~\ref{lemma:combine:executions}.  The execution $\bndA$
  witnesses for compatibility.  

  \noindent\textbf{2.}  By Clause~1 of this lemma %\ref{clause:cut:cmpt:trans} 
  and Lemma~\ref{lemma:cmpt:propagation},
  Clause~\ref{clause:cmpt:propagation}, $\cmpt{\lobs}{\lsrc}{\bnd_o}$
  equals the union over $\cmpt{\lcut}{\lsrc}{\bnd_c}$.  \myqed
\end{proof}
\fi
%
Lemma~\ref{lemma:cut} is in fact a corollary of
Lemma~\ref{lemma:cut:two:frames}, which makes an analogous assertion
about a pair of overlapping frames.  
\begin{theorem}
  \label{thm:cut:I}
  Let $\lcut$ be an undirected cut between $\lsrc,\lobs$ in $\Frame$.
  If there is no disclosure between $\lsrc$ and $\lcut$, then there is
  no disclosure between $\lsrc$ and $\lobs$.
\end{theorem}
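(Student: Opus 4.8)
The plan is to reduce the statement directly to Lemma~\ref{lemma:cut}, which has already done the structural work of combining executions across the cut. By Lemma~\ref{lemma:symmetry}, no disclosure between two sets of channels is symmetric, so it is enough to establish no disclosure from $\lobs$ to $\lsrc$. Concretely, I would fix an arbitrary $\lobs$-run $\bnd_o$ and prove that $\cmpt{\lobs}{\lsrc}{\bnd_o} = \lruns{\lsrc}$, which is exactly the defining condition (Def.~\ref{def:non:disclosure}) for no disclosure from $\lobs$ to $\lsrc$.

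First I would invoke Lemma~\ref{lemma:cut}, whose hypotheses hold since $\lcut$ is an undirected cut between $\lsrc,\lobs$ and $\bnd_o\in\lruns\lobs$. This rewrites the quantity of interest as $\cmpt{\lobs}{\lsrc}{\bnd_o} = \bigcup_{\bnd_c\in\cmpt{\lobs}{\lcut}{\bnd_o}} \cmpt{\lcut}{\lsrc}{\bnd_c}$. Next I would feed in the hypothesis: no disclosure between $\lsrc$ and $\lcut$ gives, again via Lemma~\ref{lemma:symmetry}, no disclosure from $\lcut$ to $\lsrc$, so that $\cmpt{\lcut}{\lsrc}{\bnd_c} = \lruns{\lsrc}$ for every $\lcut$-run $\bnd_c$. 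Each index $\bnd_c$ appearing in the union is indeed a $\lcut$-run, since compatible sets are always contained in the local runs of their target type ($\cmpt{C}{C'}{\bnd}\subseteq\lruns{C'}$). Hence every summand of the union equals $\lruns{\lsrc}$.

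The only subtlety, and the step I would be most careful about, is that the union must be taken over a nonempty index set; otherwise it would evaluate to $\emptyset$ rather than to $\lruns{\lsrc}$. To see $\cmpt{\lobs}{\lcut}{\bnd_o}\neq\emptyset$, note that $\bnd_o$ is an $\lobs$-run, so $\bnd_o = \restrict{\bndA}{\lobs}$ for some $\bndA\in\exec(\Frame)$; then $\restrict{\bndA}{\lcut}$ is a member of $\cmpt{\lobs}{\lcut}{\bnd_o}$. With a nonempty index set and every summand equal to $\lruns{\lsrc}$, the union collapses to $\lruns{\lsrc}$, yielding $\cmpt{\lobs}{\lsrc}{\bnd_o} = \lruns{\lsrc}$, which is what we wanted.

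I do not expect a genuine obstacle here: the real difficulty has been front-loaded into Lemma~\ref{lemma:cut}, where the combined witness is built by gluing the $\LEFT\cup\MID$ events of one execution to the $\RIGHT\cup\MID$ events of another, and the cut condition ensures no location faces conflicting demands. Given that lemma and the symmetry of non-disclosure, what remains is a short union calculation together with the nonemptiness check above.
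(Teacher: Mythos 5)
Your proof is correct and follows essentially the same route as the paper's: both reduce the theorem to Lemma~\ref{lemma:cut} plus the symmetry of non-disclosure, and your nonemptiness check (restricting an execution witnessing $\bnd_o$ to $\lcut$) is exactly the move the paper uses to produce its witness $\bnd_c\in\cmpt{\lobs}{\lcut}{\bnd_o}$. The only difference is presentational---you argue at the level of the set identity $\cmpt{\lobs}{\lsrc}{\bnd_o}=\lruns\lsrc$, while the paper chases a single element $\bnd_s$ through the union---and your explicit attention to the empty-index-set pitfall is a point the paper's element-wise phrasing sidesteps silently.
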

\begin{proof}
  Suppose that $\bnd_s\in\lruns\lsrc$ and $\bnd_o\in\lruns\lobs$.
  We must show  $\bnd_s\in\cmpt\lobs\lsrc{\bnd_o}$.  To apply
  Lemma~\ref{lemma:cut}, let $\bndA\in\exec(\Frame)$ such that
  $\bnd_o=\restrict{\bndA}\lobs$; $\bndA$ exists by the definition
  of $\lobs$-run.  Letting $\bnd_c=\restrict{\bndA}\lcut$, we have
  $\bnd_c\in\cmpt\lobs\lcut{\bnd_o}$.

  Since there is no disclosure between $\lcut$ and $\lsrc$,
  $\bnd_s\in\cmpt\lcut\lsrc{\bnd_c}$, and Lemma~\ref{lemma:cut}
  applies.  \myqed
\end{proof}
\begin{eg}
  In Fig.~\ref{fig:net:1} let $r_1$ be configured to discard all
  inbound packets, and $r_2$ to discard all outbound packets.  Then
  the empty event system is the only member of $\lruns{\{c_1,c_2\}}$.
  Hence there is no disclosure between $\chans(i)$ and
  ${\{c_1,c_2\}}$.  By Thm.~\ref{thm:cut:I}, there is no disclosure to
  $\chans(\{n_1,n_2\})$.  \defend
\end{eg}
So disconnected portions of a frame cannot interfere:
\begin{corollary}
  If there is no path between $\lsrc$ and $\lobs$ in
  $\un\gr(\Frame)$, then there is no disclosure between them.  
\end{corollary}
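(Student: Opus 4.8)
The plan is to recognize the empty set of channels as a cut and then invoke Thm.~\ref{thm:cut:I}. Concretely, I would set $\lcut=\emptyset$ and argue that $\emptyset$ is an undirected cut between $\lsrc$ and $\lobs$ in the sense of Def.~\ref{def:cut}. The path condition (second clause) is the easy half: under the hypothesis that no undirected path in $\un\gr(\Frame)$ joins $\pends(\lsrc)$ to $\pends(\lobs)$, there are no paths at all from any $\ell_1\in\pends(\lobs)$ to any $\ell_2\in\pends(\lsrc)$, so the requirement that every such path traverse a member of $\lcut$ holds vacuously.

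The disjointness clause needs a short argument. I would observe that a channel $c\in\lsrc\cap\lobs$ would place the location $\sender(c)$ (equivalently $\recipt(c)$) into $\pends(\lsrc)\cap\pends(\lobs)$, and the trivial length-zero path at that location would then connect $\pends(\lsrc)$ to $\pends(\lobs)$, contradicting the hypothesis. Hence $\lsrc\cap\lobs=\emptyset$, and since $\lcut=\emptyset$ the three sets $\lsrc,\lcut,\lobs$ are pairwise disjoint. This establishes that $\emptyset$ is a cut between $\lsrc$ and $\lobs$.

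It remains to check that there is no disclosure between $\lsrc$ and $\lcut=\emptyset$, so that Thm.~\ref{thm:cut:I} applies. Every execution restricts on $\emptyset$ to the empty event system, so $\lruns{\emptyset}=\{(\emptyset,\emptyset)\}$ is a singleton, and its unique member is the only $\emptyset$-run to consider. By the identity $\lruns{\lsrc}=\cmpt{\emptyset}{\lsrc}{\emptyset,\emptyset}$ recorded in the lemma following Def.~\ref{def:local}, we have $\cmpt{\emptyset}{\lsrc}{\emptyset,\emptyset}=\lruns{\lsrc}$, which is exactly the no-disclosure condition from $\emptyset$ to $\lsrc$ of Def.~\ref{def:non:disclosure}; by Lemma~\ref{lemma:symmetry} this is no disclosure between $\lsrc$ and $\lcut$. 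Applying Thm.~\ref{thm:cut:I} then yields no disclosure between $\lsrc$ and $\lobs$, as required.

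The only real subtlety, and the step I expect to be the main obstacle, is the disjointness clause: the phrase ``no path between $\lsrc$ and $\lobs$'' must be read to include length-zero paths, so that a location lying in both $\pends(\lsrc)$ and $\pends(\lobs)$ is itself regarded as a connecting path. Once that reading is fixed, the disjointness follows and everything else is immediate from the results already in hand.
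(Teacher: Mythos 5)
Your proof is correct and takes essentially the same route as the paper's own: the paper likewise sets $\lcut=\emptyset$, observes that the empty system of events is the only $\emptyset$-run and is compatible with every $\lsrc$-run, and concludes by Thm.~\ref{thm:cut:I}. The additional care you devote to the pairwise-disjointness clause and to formalizing compatibility via $\lruns{\lsrc}=\cmpt{\emptyset}{\lsrc}{\emptyset,\emptyset}$ and Lemma~\ref{lemma:symmetry} merely fills in details the paper leaves implicit.
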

\begin{proof}
  Then $\lcut=\emptyset$ is an undirected cut set, and there is only
  one $\lcut$-run, namely the empty system of events.  It is thus
  compatible with all $\lsrc$-runs.  \myqed
\end{proof}
Thm.~\ref{thm:cut:I} and its analogue Thm.~\ref{thm:cut:II}, while
reminiscent of the max flow/min cut principle
(cf.~e.g.~\cite[Sec.~26.2]{cormen2003introduction}), are however quite
distinct from it, as the latter depends essentially on the
quantitative structure of network flows.  Our results may also seem
reminiscent of the Data Processing Inequality, stating that when three
random variables $X,Y,Z$ form a Markov chain, the mutual information
$I(X;Z)\le I(X;Y)$.  Indeed, Thm.~\ref{thm:cut:I} entails the special
case where $I(X;Y)=0$, choosing $\gr(\Frame)$ to be a single path
$X\rightarrow Y\rightarrow Z$.  For more on quantitative information
flow and our work, see the conclusion (Sec.~\ref{sec:summary}).

%%% Local Variables: 
%%% mode: latex
%%% TeX-master: "ieee_ccut"
%%% End: 

\section{Blur operators}
\label{sec:blur} 

We will now adapt our theory to apply to partial disclosure as well as
no disclosure.  An observer learns something about a source of
information when his observations are compatible with a proper subset
of the behaviors possible for the source.  Thus, the natural way to
measure what has been learnt is the decrease in the set of possible
behaviors at the source (see among many sources of this idea
e.g.~\cite{HalpernEtAl95,askarov2007gradual}).

This starting point suggests focusing, for every frame and regions of
interest $\lsrc\subseteq\Channels$ and $\lobs\subseteq\Channels$, on
the compatibility equivalence relations on $\lsrc$-runs:
\begin{definition}
  \label{def:obs:equiv}  
  Let $\lsrc,\lobs\subseteq\Channels$.  If $\bnd_1,\bnd_2$ are
  $\lsrc$-runs, we say that they are $\lobs$-equivalent, and write
  $\bnd_1\lobsequiv\bnd_2$, iff, for all $\lobs$-runs $\bnd_o$, 
  $$ \bnd_1\in\cmpt{\lobs}{\lsrc}{\bnd_o} \mbox{ iff }
  \bnd_2\in\cmpt{\lobs}{\lsrc}{\bnd_o} \qquad\qquad\defsymbol
  $$
\end{definition}
\begin{lemma}  
  \label{lemma:obs:equiv}
  For each $\lobs$, $\lobsequiv$ is an equivalence relation.  \defend 
\end{lemma}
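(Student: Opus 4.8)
The plan is to verify the three defining properties of an equivalence relation --- reflexivity, symmetry, and transitivity --- directly from Definition~\ref{def:obs:equiv}. The essential observation is that $\bnd_1 \lobsequiv \bnd_2$ is defined by a universally quantified biconditional: it holds iff, for every $\lobs$-run $\bnd_o$, the two membership statements $\bnd_1 \in \cmpt{\lobs}{\lsrc}{\bnd_o}$ and $\bnd_2 \in \cmpt{\lobs}{\lsrc}{\bnd_o}$ have the same truth value. Since ``iff'' is itself reflexive, symmetric, and transitive at the level of truth values, and since universal quantification over $\bnd_o$ preserves each of these three properties, the relation inherits them.

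Concretely, I would argue as follows. For reflexivity, fix any $\lsrc$-run $\bnd$; for every $\lobs$-run $\bnd_o$ the statement $\bnd \in \cmpt{\lobs}{\lsrc}{\bnd_o}$ is trivially equivalent to itself, so $\bnd \lobsequiv \bnd$. For symmetry, suppose $\bnd_1 \lobsequiv \bnd_2$; then for each $\bnd_o$ the biconditional relating the two memberships holds, and a biconditional is unchanged when its two sides are swapped, so $\bnd_2 \lobsequiv \bnd_1$. For transitivity, suppose $\bnd_1 \lobsequiv \bnd_2$ and $\bnd_2 \lobsequiv \bnd_3$; fixing an arbitrary $\lobs$-run $\bnd_o$ and chaining the two biconditionals through the common middle term $\bnd_2 \in \cmpt{\lobs}{\lsrc}{\bnd_o}$ yields that $\bnd_1 \in \cmpt{\lobs}{\lsrc}{\bnd_o}$ iff $\bnd_3 \in \cmpt{\lobs}{\lsrc}{\bnd_o}$; since $\bnd_o$ was arbitrary, $\bnd_1 \lobsequiv \bnd_3$.

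A cleaner packaging, which I might adopt instead, is to note that $\lobsequiv$ is the kernel of a function: associate to each $\lsrc$-run $\bnd$ the set $\Phi(\bnd) = \{\bnd_o \in \lruns{\lobs} \colon \bnd \in \cmpt{\lobs}{\lsrc}{\bnd_o}\}$ of observations compatible with it. Then $\bnd_1 \lobsequiv \bnd_2$ holds exactly when $\Phi(\bnd_1) = \Phi(\bnd_2)$. Equality of sets is an equivalence relation, and the preimage of an equivalence relation under a function is again an equivalence relation, which gives the result in one stroke. I expect no real obstacle here: the content of the lemma is purely logical, and none of the structure of frames, cuts, or executions is needed beyond the bare definition of $\cmpt{\lobs}{\lsrc}{\bnd_o}$. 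The only point requiring care is that $\lobsequiv$ is a relation on $\lsrc$-runs rather than on arbitrary systems of events, so each of the three verifications should be stated for members of $\lruns{\lsrc}$, matching the domain fixed in Definition~\ref{def:obs:equiv}.
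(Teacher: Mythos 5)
Your proof is correct and matches the paper's own argument, which disposes of the lemma in one line: $\lobsequiv$ is reflexive, symmetric, and transitive because ``if and only if'' is. Your expanded verification (and the kernel-of-$\Phi$ repackaging) is just a more detailed rendering of that same observation, so there is nothing to add.
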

\emph{No disclosure} means that all $\lsrc$-runs are
$\lobs$-equivalent.  Any notion of partial disclosure must respect
$\lobs$-equivalence, since no observations can possibly ``split''
apart $\lobs$-equivalent $\lsrc$-runs.  Partial disclosures always
respect unions of $\lobs$-equivalence classes.  

Rather than working directly with these unions of $\lobs$-equivalence
classes, we instead focus on functions on sets of runs that satisfy
three properties.  These properties express the structural principles
on partial disclosure that make our cut-blur and compositional
principles hold.  We call operators satisfying the properties
\emph{blur operators}.  Lemma~\ref{lemma:partition} shows that an
operator that always returns unions of $\lobs$-equivalence classes is
necessarily a blur operator.

When we want to prove results about all notions of partial disclosure,
we prove them for all blur operators.  When we want to show a
particular relation is a possible notion of partial disclosure, we
show that it generates an equivalence relation;
Lemma~\ref{lemma:partition} then justifies us in applying
Thms.~\ref{thm:cut:II}, \ref{cor:compose}.
\begin{definition}
  \label{def:blur}
  A function $f$ on sets is a \emph{blur operator} iff it satisfies:
  \begin{description}
    \item[Inclusion:] For all sets $S$, $S\subseteq f(S)$;
    \item[Idempotence:] $f$ is idempotent, i.e.~for all sets $S$,
    $f(f(S))=f(S)$; and
    \item[Union:] $f$ commutes with unions:  If $S_{a\in I}$ is a
    family indexed by the set $I$, then
    \begin{equation}
      \label{eq:union}
      f(\bigcup_{a\in I} S_a)=\bigcup_{a\in I}
      f(S_a). 
    \end{equation}
  \end{description}
  $S$ is $f$\emph{-blurred} iff $f$ is a blur operator and $S=f(S)$.
  \defend
\end{definition}
By \emph{Idempotence}, $S$ is $f$-blurred iff it is in the range of
the blur operator $f$.  Since $S=\bigcup_{a\in S}\{a\}$, the
\emph{Union} property says that $f$ is determined by its action on the
singleton subsets of $S$.  Thus, \emph{Inclusion} could have said
$a\in f(\{a\})$.  

Monotonicity also follows from the \emph{Union} property; if
$S_1\subseteq S_2$, then $S_2=S_0\cup S_1$, where
$S_0=S_2\setminus S_1$.  Thus $f(S_2)=f(S_0)\cup f(S_1)$, so
$f(S_1)\subseteq f(S_2)$.
\begin{lemma}
  \label{lemma:partition} 
  Suppose that $A$ is a set, and $\mathcal{R}$ is a partition of the
  elements of $A$.  There is a unique function $f_{\mathcal{R}}$ on
  sets $S\subseteq A$ such that
  \begin{enumerate}
    \item $f_{\mathcal{R}}(\{a\})=S$ iff $a\in S$ and $S\in \mathcal{R}$;
    \item $f_{\mathcal{R}}$ commutes with unions (Eqn.~\ref{eq:union}).
  \end{enumerate}
  Moreover, $f_{\mathcal{R}}$ is a blur operator.  
\end{lemma}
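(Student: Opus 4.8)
The plan is to define $f_{\mathcal{R}}$ explicitly as the operator that saturates a set by the blocks of $\mathcal{R}$, check that it satisfies the two listed conditions, derive uniqueness from the \emph{Union} property alone, and finally verify the remaining two blur axioms. For each $a\in A$, let $R_a\in\mathcal{R}$ be the unique block of the partition containing $a$; this is well-defined precisely because $\mathcal{R}$ is a partition. I would then set
\[
  f_{\mathcal{R}}(S)=\bigcup_{a\in S}R_a
\]
for every $S\subseteq A$, with the convention that the empty union yields $f_{\mathcal{R}}(\emptyset)=\emptyset$.

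For condition~1, I would observe that $f_{\mathcal{R}}(\{a\})=R_a$, and that $R_a$ is the only set $S$ for which both $a\in S$ and $S\in\mathcal{R}$ hold: if $S\in\mathcal{R}$ and $S\neq R_a$ then disjointness of blocks gives $a\notin S$, and if $S\notin\mathcal{R}$ the right-hand condition fails outright. Hence the biconditional holds for all $S$ exactly when $f_{\mathcal{R}}(\{a\})=R_a$. Condition~2 (the \emph{Union} property) is immediate for the explicit definition, since $\bigcup_{a\in\bigcup_{b}S_b}R_a=\bigcup_b\bigcup_{a\in S_b}R_a$.

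Uniqueness is the conceptually cleanest step: if $g$ satisfies conditions~1 and~2, then for any $S$ we may write $S=\bigcup_{a\in S}\{a\}$, so condition~2 forces $g(S)=\bigcup_{a\in S}g(\{a\})$, and condition~1 forces each $g(\{a\})=R_a$; thus $g=f_{\mathcal{R}}$. (Taking the empty index family in condition~2 also pins $g(\emptyset)=\emptyset$, matching the convention above.) For the blur axioms, \emph{Inclusion} follows from $a\in R_a$, so $S\subseteq\bigcup_{a\in S}R_a=f_{\mathcal{R}}(S)$; \emph{Union} is condition~2 already.

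The one step that needs a touch of care is \emph{Idempotence}, and I would handle it by noting that the range of $f_{\mathcal{R}}$ consists exactly of the $\mathcal{R}$-\emph{saturated} sets, i.e.\ unions of blocks. For any such saturated $T$, every $a\in T$ lies in a block wholly contained in $T$, so $R_a\subseteq T$ and hence $f_{\mathcal{R}}(T)=\bigcup_{a\in T}R_a=T$. Applying this with $T=f_{\mathcal{R}}(S)$ gives $f_{\mathcal{R}}(f_{\mathcal{R}}(S))=f_{\mathcal{R}}(S)$. This ``saturation'' observation, together with keeping the empty-union conventions consistent, is really the only place where one must be attentive; the remainder is bookkeeping.
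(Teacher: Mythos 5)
Your proof is correct and takes essentially the same route as the paper's: the paper likewise pins $f_{\mathcal{R}}$ down on singletons and extends by unions, obtaining uniqueness from the decomposition $S=\bigcup_{a\in S}\{a\}$, with \emph{Inclusion} and \emph{Union} immediate from the form of the definition. Your ``saturation'' argument for \emph{Idempotence} is simply the explicit form of the paper's one-line remark that membership in the same $\mathcal{R}$-class is transitive, so your write-up just spells out details the paper leaves implicit.
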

\begin{proof}
  Since $S=\bigcup_{a\in S}\{a\}$, $f_{\mathcal{R}}(S)$ is uniquely
  defined by the union principle (Eqn.~\ref{eq:union}).

  \emph{Inclusion} and \emph{Union} are immediate from the form of the
  definition.  \emph{Idempotence} holds because being in the same
  $\mathcal{R}$-equivalence class is transitive.  \myqed
\end{proof}
Although every equivalence relation determines a blur operator, the
converse is not true:  Not every blur operator is of this form.  For
instance, consider the set $A=\{a,b\}$, and let $f(\{a\})=\{a\}$,
$f(\{b\})=f(\{a,b\})=\{a,b\}$.  Although this $f$ is not a useful
notion of partial disclosure---as it is not generated from any $\lobs$
equivalence---Def.~\ref{def:blur} identifies the proof principles that
make Thm.~\ref{thm:cut:II} true.

For intuition about blurs, think of blurring an image: The viewer no
longer knows the details of the scene.  The viewer knows only that it
was some scene which, when blurred, would look like this, as the
following example indicates.
\begin{eg}
  \label{eg:weather-blur}
  Imaginary Weather Forecasting Inc.~(IWF) sells tailored,
  high-resolution weather data and forecasts to airlines, airports,
  and commodity investment firms, and it sells low-resolution weather
  data more cheaply to TV and radio stations.  IWF's low-tier
  subscribers should not learn higher resolution data than they have
  paid for.  There is some disclosure about high resolution data
  because (e.g.)~when low-tier subscribers see warm temperatures, they
  know that the high-resolution data is inconsistent with snow.  We
  can formalize this partial disclosure as a blur.

  Suppose IWF creates its low-resolution data $d$ by applying a lossy
  compression function $\mathsf{comp}$ to high-resolution data
  $D$. Then when low-tier subscribers receive $d$, they know that the
  high-resolution data IWF measured from the environment is some
  element of $\mathsf{comp}^{-1}(d) = \{D: \mathsf{comp}(D) = d\}$.
  These sets are $f$-blurred where
  $f(\{D\}) = \{D': \mathsf{comp}(D') = \mathsf{comp}(D)\}$.  \defend
\end{eg}

%   \begin{rmk} 
%     Suppose that $A$ is a set, and $\mathcal{R}$ is a partition of the
%     elements of $A$, for instance the equivalence classes of the
%     relation $\lobsequiv$.  Then there is a unique function $f$ on
%     subsets of $A$ such that:
%   %
%     \begin{enumerate}
%       \item $f(\{a\})=S$ iff $a\in S$ and $S\in \mathcal{R}$;
%       \item $f$ commutes with unions.
%     \end{enumerate}
%   %
%     This function $f$ is always a blur, idempotence holding by
%     transitivity.
%   
%     However, not every blur operator is of this form.  For instance,
%     consider the set $A=\{a,b\}$, and let $f(\{a\})=\{a\}$,
%     $f(\{b\})=f(\{a,b\})=\{a,b\}$.  \defend
%   \end{rmk}

We will study information disclosure to within blur operators $f$,
which we interpret as meaning that $\cmpt{C}{C'}{\bnd_c}$ is
$f$-blurred.  Essentially, this is an ``upper bound'' on how much
information may be disclosed when $\bnd_c$ is observed.  The observer
will know an $f$-blurred set that the behavior at $C'$ belongs to.
However, the observer cannot infer anything finer than the $f$-blurred
sets.  
\begin{definition}
  Let $\lobs,\lsrc\subseteq\Channels$, and let $f$ be a function on
  sets of $\lsrc$-runs.  

  $\Frame$ \emph{restricts disclosure from} $\lsrc$ to $\lobs$
  \emph{to within} $f$ iff $f$ is a blur operator and, for every
  $\lobs$-run $\bnd_o$, $\cmpt{\lobs}{\lsrc}{\bnd_o}$ is an
  $f$-blurred set.  

  We also say that $\Frame$ $f$\emph{-limits}
  $\lsrc$\emph{-to-}$\lobs$ \emph{flow}.  \defend
\end{definition}
At one extreme, the no-disclosure condition is also disclosure to
within a blur operator, namely the blur operator that ignores $S$ and
adds all $C'$-runs:
\begin{equation*}
  \label{eq:blur:all}
  f_{\mathsf{all}}(S) = \{ \restrict\bndA{C'}\colon \bndA\in
  \exec(\Frame) \} . 
\end{equation*}
At the other extreme, the maximally permissive security policy is
represented as disclosure to within the identity function
$f_{\mathsf{id}}(S)=S$.  The blur operator $f_{\mathsf{id}}$ also
shows that every frame restricts disclosure to within \emph{some} blur
operator.  The equivalence classes generated by compatibility are
always $f_{\mathsf{id}}$-blurred.    

$\Frame$ may $f$-limit $\lsrc$-to-$\lobs$ flow even when the
intersection ${\lobs}\cap{\lsrc}$ is non-empty, as long as $f$ is not
too fine-grained; see below e.g.~in Def.~\ref{def:purge:blur}.
\begin{eg}\label{eg:voting}
  Suppose that $\Frame$ is an electronic voting system such as
  ThreeBallot or its siblings~\cite{rivest2007three}.  Some locations
  $L_{EC}$ are run by the election commission.  We will regard the
  voters themselves as a set of locations $L_{V}$.  Each voter
  delivers a message containing, in some form, his vote for some
  candidate.

  The election officials observe the channels connected to $L_{EC}$,
  i.e.~$\chans(L_{EC})$.  To determine the correct outcome, they must
  infer a property of the local run at $\chans(L_{V})$, namely, how
  many votes for each candidate occurred.  However, they should not
  find out which voter voted for which
  candidate~\cite{delaune2009verifying}.

  We formalize this via a blur operator.  Suppose
  $\bnd'\in\lruns{\chans(L_{V})}$ is a possible behavior of all voters
  in $L_{V}$.  Suppose that $\pi$ is a permutation of $L_{V}$.  Let
  $\pi\cdot \bnd'$ be the behavior in which each voter $\ell\in L_{V}$
  casts not his own actual vote, but the vote actually cast by
  $\pi(\ell)$.  That is, $\pi$ represents one way of reallocating the
  actual votes among different voters.  Now for any
  $S\subseteq\lruns{\chans(L_{V})}$ let
  \begin{equation}
    \label{eq:voting:blur}
    f_0(S)=\{\pi\cdot \bnd'\colon \bnd'\in S\land \pi 
    \mbox{ is a permutation of }L_{V}\}.
  \end{equation}
  This is a blur operator: (i) the identity is a permutation; (ii)
  permutations are closed under composition; and (iii)
  Eqn.~\ref{eq:voting:blur} implies commutation with unions.  The
  election commission should learn nothing about the votes of
  individuals, meaning that, for any $\bnd\in\lruns{\chans(L_{EC})}$
  the commission could observe,
  $\cmpt{\chans(L_{EC})}{\chans(L_{V})}{\bnd}$ is $f_0$-blurred.
  Permutations of compatible voting patterns are also compatible.

  This example is easily adapted to other considerations.  For
  instance, the commissioners of elections are also voters, and they
  know how they voted themselves.  Thus, we could define a (narrower)
  blur operator $f_1$ that only uses the permutations that leave
  commissioners' votes fixed.  

  In fact, voters are divided among different precincts in many cases,
  and tallies are reported on a per-precinct basis.  Thus, we have
  sets $V_1,\ldots,V_k$ of voters registered at the precincts
  $P_1,\ldots,P_k$ respectively.  The relevant blur function says that
  we can permute the votes of any two voters $v_1,v_2\in V_i$ within
  the same precinct.  One cannot permute votes between different
  precincts, since that could change the tallies in the individual
  precincts.  \defend
\end{eg}
%
% In the next example, we write $\select\,\bnd\,p$ for the result of
% selecting those events $e\in\events(\bnd)$ that satisfy the
% predicate $p(e)$, restricting $\preceq$ to the selected events.
%
\begin{eg}
  \label{eg:filtering} 
  Suppose in Fig.~\ref{fig:net:1}:  The inbound interface from $i$ to
  router $r_1$ discards downward-flowing packets unless their source
  is an address in $i$ and the destination is an address in $n_1,n_2$.
  The inbound interface for downward-flowing to router $r_2$ discards
  packets unless the destination address is the IP for a web server
  \texttt{www} in $n_1$, and the destination port is 80 or 443, or
  else their source port is 80 or 443 and their destination port is
  $\ge 1024$.

  We filter outbound (upward-flowing) packets symmetrically.

  A packet is \emph{importable} iff its source address is in $i$ and
  either its destination is \texttt{www} and its destination port is
  80 or 443; or else its destination address is in $n_1,n_2$, its
  source port is 80 or 443, and its destination port is $\ge 1024$.

  It is \emph{exportable} iff, symmetrically, its destination address
  is in $i$ and either its source is \texttt{www} and its source port
  is 80 or 443; or else its source address is in $n_1,n_2$, its
  destination port is 80 or 443, and its source port is $\ge 1024$.

  We will write $\select\,\bnd\,p$ for the result of selecting those
  events $e\in\events(\bnd)$ that satisfy the predicate $p(e)$,
  restricting $\preceq$ to the selected events.  Now consider the
  operator $f_i$ on $\lruns{\chans(i)}$ generated as in
  Lemma~\ref{lemma:partition} from the equivalence relation:
  \begin{description}
    \item[$\bnd_1\approx_i \bnd_2$] iff they agree on all
    \emph{importable} events, i.e.:
    \begin{eqnarray*}
      & \select  \,\bnd_1\,(\lambda e\qdot \msg(e)\mbox{ is
      importable})\cong \\ 
      & \quad \select\,\bnd_2\,(\lambda e\qdot \msg(e)\mbox{
      is importable}).
    \end{eqnarray*}
%   
%   
%   
%    the result of omitting all events $e$ from $\bnd_1$ such
%       that $\msg(e)$ is not importable is isomorphic to the result of
%       omitting all events $e$ from $\bnd_2$ such that $\msg(e)$ is not
%       importable.
  \end{description}
  The router configurations mentioned above are intended to ensure
  that there is $f_i$-limited flow from $\chans(i)$ to
  $\chans(\{n_1,n_2\})$.  

  This is an \emph{integrity} condition; it is meant to ensure that
  systems in $n_1,n_2$ cannot be affected by bad (i.e.~non-importable)
  packets from $i$.

  Outbound, the blur $f_e$ on $\lruns{\chans(\{n_1,n_2\})}$ is
  generated from the equivalence relation:
  \begin{description}
    \item[$\bnd_1\approx_e \bnd_2$] iff they agree on all
    \emph{exportable} events, i.e.:
    \begin{eqnarray*}
      & \select\,\bnd_1\,(\lambda e\qdot \msg(e)\mbox{ is
        exportable}) \cong \\
      & \quad \select\,\bnd_2\,(\lambda e\qdot \msg(e)\mbox{
        is exportable}).
    \end{eqnarray*}
%
%   
%    iff the result of omitting all
%       events $e$ from $\bnd_1$ such that $\msg(e)$ is not exportable is
%       isomorphic to the result of omitting all events $e$ from $\bnd_2$
%       such that $\msg(e)$ is not exportable.  
  \end{description}
  The router configurations are also intended to ensure that there is
  $f_e$-limited flow from $\chans(\{n_1,n_2\})$ to $\chans(i)$.  

  This is a \emph{confidentiality} condition; it is meant to ensure
  that external observers learn nothing about the non-exportable
  traffic, which was not intended to exit the organization.

  In this example, it is helpful that the exportable and importable
  packets are disjoint, and the transmission of one of these packets
  is never dependent on the reception of a packet of the other kind.
  In applications lacking this property, proving flow limitations is
  harder.  \defend
\end{eg}

%   \subsection{Blur operators and compatibility}
%   \label{sec:blur:compat}
%   
%   
%   
%   \subsection{The Cut principle}
%   \label{sec:blur:cut}

% LocalWords: subsequence

%%% Local Variables: 
%%% mode: latex
%%% TeX-master: "ieee_ccut"
%%% End: 
 
%

\section{The Cut-Blur Principle}
\label{sec:cut:blur}

The symmetry of non-disclosure (Lemma~\ref{lemma:symmetry}) no longer
holds for disclosure to within a blur.  We have, however, the natural
extension of Thm~\ref{thm:cut:I}:
\begin{theorem}[Cut-Blur Principle]
  \label{thm:cut:II}
  Let $\lcut$ be an undirected cut between $\lsrc,\lobs$ in $\Frame$.
  If $\Frame$ $f$-limits $\lsrc$-to-$\lcut$ flow, then $\Frame$
  $f$-limits $\lsrc$-to-$\lobs$ flow.
\end{theorem}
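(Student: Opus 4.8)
The plan is to reduce the theorem directly to the set-theoretic identity of Lemma~\ref{lemma:cut} together with the \emph{Union} axiom of Def.~\ref{def:blur}. Fix an arbitrary $\lobs$-run $\bnd_o$; the goal is to show that $\cmpt{\lobs}{\lsrc}{\bnd_o}$ is $f$-blurred, i.e.~that $f(\cmpt{\lobs}{\lsrc}{\bnd_o}) = \cmpt{\lobs}{\lsrc}{\bnd_o}$. Since $\bnd_o$ is arbitrary, establishing this equality for every $\lobs$-run is exactly the statement that $\Frame$ $f$-limits $\lsrc$-to-$\lobs$ flow.

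First I would invoke Lemma~\ref{lemma:cut} to write $\cmpt{\lobs}{\lsrc}{\bnd_o}$ as the union $\bigcup_{\bnd_c}\cmpt{\lcut}{\lsrc}{\bnd_c}$ indexed by $\bnd_c\in\cmpt{\lobs}{\lcut}{\bnd_o}$. Here each index $\bnd_c$ is a genuine $\lcut$-run, since $\cmpt{\lobs}{\lcut}{\bnd_o}\subseteq\lruns{\lcut}$, and every set in sight is a set of $\lsrc$-runs, so the blur operator $f$ is applicable to each. Applying $f$ to both sides and pushing it through the union via the \emph{Union} property gives $f(\cmpt{\lobs}{\lsrc}{\bnd_o}) = \bigcup_{\bnd_c} f(\cmpt{\lcut}{\lsrc}{\bnd_c})$. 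The hypothesis that $\Frame$ $f$-limits $\lsrc$-to-$\lcut$ flow says precisely that each $\cmpt{\lcut}{\lsrc}{\bnd_c}$ is $f$-blurred, so $f(\cmpt{\lcut}{\lsrc}{\bnd_c}) = \cmpt{\lcut}{\lsrc}{\bnd_c}$ for every $\bnd_c$. Substituting collapses the right-hand side back to $\bigcup_{\bnd_c}\cmpt{\lcut}{\lsrc}{\bnd_c}$, which by a second appeal to Lemma~\ref{lemma:cut} equals $\cmpt{\lobs}{\lsrc}{\bnd_o}$. Chaining the equalities yields $f(\cmpt{\lobs}{\lsrc}{\bnd_o}) = \cmpt{\lobs}{\lsrc}{\bnd_o}$, as required.

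I do not expect a genuine obstacle here: all the graph-structural content is already carried by Lemma~\ref{lemma:cut}, which encodes the fact that any observation at $\lobs$ factors through the cut $\lcut$, and the blur layer contributes essentially nothing new. The only property of $f$ that does any work is \emph{Union}, which is exactly what permits $f$ to distribute over the cut decomposition so that per-cut-run blurredness can be applied term by term; \emph{Inclusion} and \emph{Idempotence} are not needed for this direction. The single point demanding care is bookkeeping: tracking which set the union ranges over, and checking the type discipline that every argument supplied to $f$ is a set of $\lsrc$-runs, so that Lemma~\ref{lemma:cut} may legitimately be applied twice in succession.
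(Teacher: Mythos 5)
Your proposal is correct and follows essentially the same route as the paper's own proof: both reduce the theorem to Lemma~\ref{lemma:cut} plus the \emph{Union} property, using the hypothesis that each $\cmpt{\lcut}{\lsrc}{\bnd_c}$ is $f$-blurred to identify $\bigcup_{\bnd_c} f(\cmpt{\lcut}{\lsrc}{\bnd_c})$ with $f\bigl(\bigcup_{\bnd_c} \cmpt{\lcut}{\lsrc}{\bnd_c}\bigr)$. The only difference is cosmetic ordering—you apply $f$ to the whole set first and then collapse term by term, while the paper rewrites each term and then pulls $f$ out of the union—and your observation that \emph{Inclusion} and \emph{Idempotence} do no work here is accurate, since being $f$-blurred is defined outright as $S = f(S)$.
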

\begin{proof}
  By the hypothesis, $f$ is a blur operator.  Let $\bnd_o$ be a
  $\lobs$-run.  We want to show that $\cmpt{\lobs}{\lsrc}{\bnd_o}$ is
  an $f$-blurred set,
  i.e.~$\cmpt{\lobs}{\lsrc}{\bnd_o}=f(\cmpt{\lobs}{\lsrc}{\bnd_o})$.

  For convenience, let $S_c=\cmpt{\lobs}{\lcut}{\bnd_o}$.

  By Lemma~\ref{lemma:cut}, $\cmpt{\lobs}{\lsrc}{\bnd_o} =
  \bigcup_{\bnd_c\in S_c} \cmpt{\lcut}{\lsrc}{\bnd_c}$.  Thus, we must
  show that the latter is $f$-blurred.  

  By the assumption that each $\cmpt{\lcut}{\lsrc}{\bnd_c}$ is
  $f$-blurred and by idempotence,
  $\cmpt{\lcut}{\lsrc}{\bnd_c}=f(\cmpt{\lcut}{\lsrc}{\bnd_c})$.  Now:
  \begin{eqnarray*}
    \bigcup_{\bnd_c\in S_c} \cmpt{\lcut}{\lsrc}{\bnd_c} 
    & = & \bigcup_{\bnd_c\in S_c} f(\cmpt{\lcut}{\lsrc}{\bnd_c}) \\
    & = & f(\bigcup_{\bnd_c\in S_c} \cmpt{\lcut}{\lsrc}{\bnd_c}) ,
  \end{eqnarray*}
  applying the union property (Eqn.~\ref{eq:union}).  Hence,
  $\bigcup_{\bnd_c\in S_c} \cmpt{\lcut}{\lsrc}{\bnd_c}$ is
  $f$-blurred.  \myqed
\end{proof}
This proof is the reason we introduced the \emph{Union} principle,
rather than simply considering all closure
operators~\cite{mclean1994general}.  Eqn.~\ref{eq:union} distinguishes
those closure operators that allow the ``long distance reasoning''
summarized in the proof.

\begin{eg}
  \label{eg:cut:blur}
  The frame of Example~\ref{eg:filtering} has $f_i$-limited flow from
  $\chans(i)$ to the cut $\{c_1,c_2\}$.  Thus, it has $f_i$-limited
  flow from $\chans(i)$ to $\chans(\{n_1,n_2\})$.  

  It also has $f_e$-limited flow from $\chans(\{n_1,n_2\})$ to the cut
  $\{c_1,c_2\}$.  This implies $f_e$-limited flow to $\chans(i)$.
  \defend 
\end{eg}

\subsection{A Compositional Relation between Frames}
\label{sec:cut:blur:compose}

Our next technical result gives us a way to ``transport'' a blur
security property from one frame $\Frame_1$ to another frame
$\Frame_2$.  It assumes that the two frames share a common core, some
set of locations $L_0$.  These locations should hold the same channel
endpoints in each of $\Frame_1,\Frame_2$, and should engage in the
same traces.  The boundary separating $L_0$ from the remainder of
$\Frame_1,\Frame_2$ necessarily forms a cut set $\lcut$.  Assuming
that the local runs at $\lcut$ are respected, blur properties are
preserved from $\Frame_1$ to $\Frame_2$.

\begin{definition}
  \label{def:shared:locations} 
  A set $L_0$ of locations is \emph{shared between} $\Frame_1$ and
  $\Frame_2$ iff $\Frame_1,\Frame_2$ are frames with locations
  $\Locations_1,\Locations_2$, endpoints $\pends_1,\pends_2$ and
  traces $\traces_1,\traces_2$, resp., where
  $L_0\subseteq\Locations_1\cap\Locations_2$, and for all
  $\ell\in L_0$, $\pends_1(\ell)=\pends_2(\ell)$ and
  $\traces_1(\ell)=\traces_2(\ell)$.

  When $L_0$ is shared between $\Frame_1$ and $\Frame_2$, let:
  \begin{description}
    \item[$\LEFTZ=$] $\{c\in\Channels_1\colon$ both endpoints of $c$
    are locations $\ell\in L_0\}$;
    \item[$\lcutz=$] $\{c\in\Channels_1\colon$ exactly one endpoint of
    $c$ is a location $\ell\in L_0\}$; and 
    \item[$\RIGHT_i=$] $\{c\in\Channels_i\colon$ neither endpoint of
    $c$ is a location $\ell\in L_0\}$, for $i=1,2$.  
  \end{description}
  We will also use $\lrunsone{C}$ and $\lrunstwo{C}$ to refer to the
  local runs of $C$ within $\Frame_1$ and $\Frame_2$, resp.; and
  $\cmptone{C}{C'}{\bnd}$ and $\cmpttwo{C}{C'}{\bnd}$ will refer to
  the compatible $C'$ runs in the frames $\Frame_1$ and $\Frame_2$,
  resp.  \defend
\end{definition}
Indeed, $\lcutz$ is an undirected cut between $\LEFTZ$ and $\RIGHT_i$
in $\Frame_i$, for $i=1$ and 2.  In an undirected path that starts in
$\LEFTZ$ and never traverses $\lcutz$, each arc always has both ends
in $L_0$.  We next prove a two-frame analog of Lemma~\ref{lemma:cut}.

\begin{lemma}\label{lemma:cut:two:frames}
  Let $L_0$ be shared between frames $\Frame_1,\Frame_2$.  Let
  $\lsrc\subseteq\LEFTZ$, and
  $\bnd_c\in\lrunsone\lcutz\cap\lrunstwo\lcutz$.
  \begin{enumerate}
%       \item If $\bnd_s\in\cmptone{\lcutz}{\lsrc}{\bnd_c}$, then
%       $\bnd_s\in\cmpttwo{\lcutz}{\lsrc}{\bnd_c}$.  
    \item $\cmptone{\lcutz}{\lsrc}{\bnd_c} =
    \cmpttwo{\lcutz}{\lsrc}{\bnd_c}$.\label{clause:cmptone:cmpttwo}
    \item Assume
    $\lruns\lcutz(\Frame_2) \subseteq \lruns\lcutz(\Frame_1)$.  Let
    $\lobs\subseteq\RIGHT_2$, and $\bnd_o\in\lrunstwo\lobs$.
    Then\label{clause:cmptone:cmpttwo:equality}
    $$ \cmpttwo{\lobs}{\lsrc}{\bnd_o} =
    \bigcup_{\bnd_c\in\cmpttwo{\lobs}{\lcutz}{\bnd_o}}
    \cmptone{\lcutz}{\lsrc}{\bnd_c} .
    $$
  \end{enumerate}
\end{lemma}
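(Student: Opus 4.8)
The plan is to prove Clause~\ref{clause:cmptone:cmpttwo} as the substantive step, and then obtain Clause~\ref{clause:cmptone:cmpttwo:equality} cheaply from it together with the single-frame cut lemma applied inside $\Frame_2$. The leverage for Clause~\ref{clause:cmptone:cmpttwo} is that every channel in $\lsrc\cup\LEFTZ\cup\lcutz$ meets a location of $L_0$ while every channel in $\RIGHT_i$ avoids $L_0$, and $\lcutz$ separates the two sides, so no location carries channels from both sides at once. Since $L_0$ is shared, each $\ell\in L_0$ has the same endpoints and the same trace set in both frames; hence any behavior of the $L_0$-region is legal in $\Frame_1$ exactly when it is legal in $\Frame_2$.

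I would prove both inclusions of Clause~\ref{clause:cmptone:cmpttwo} by a single gluing argument, using the two-frame combining lemma (Lemma~\ref{lemma:combine:executions:two:frames}): given a left local run drawn from one frame and a right local run drawn from the other that agree on $\lcutz$, it returns one execution of the target frame restricting to each. For $\subseteq$, take a witness $\bndA_1\in\exec(\Frame_1)$ for $\bnd_s\in\cmptone{\lcutz}{\lsrc}{\bnd_c}$ and keep its left part $\restrict{\bndA_1}{(\LEFTZ\cup\lcutz)}$; using $\bnd_c\in\lrunstwo\lcutz$, choose $\bndA_2'\in\exec(\Frame_2)$ with $\restrict{\bndA_2'}\lcutz=\bnd_c$ and keep its right part $\restrict{\bndA_2'}{(\RIGHT_2\cup\lcutz)}$. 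The two parts agree on $\lcutz$, so gluing produces $\bndA_2\in\exec(\Frame_2)$ with $\restrict{\bndA_2}\lsrc=\bnd_s$ (as $\lsrc\subseteq\LEFTZ$) and $\restrict{\bndA_2}\lcutz=\bnd_c$, i.e.~$\bnd_s\in\cmpttwo{\lcutz}{\lsrc}{\bnd_c}$. The reverse inclusion is symmetric, now using $\bnd_c\in\lrunsone\lcutz$ to furnish the right part and gluing inside $\Frame_1$.

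For Clause~\ref{clause:cmptone:cmpttwo:equality} I would first check that $\lcutz$ is an undirected cut between $\lsrc\subseteq\LEFTZ$ and $\lobs\subseteq\RIGHT_2$ in $\Frame_2$: the three channel sets are pairwise disjoint, and by the remark preceding the lemma any $\lcutz$-avoiding path out of $\LEFTZ$ stays inside $L_0$ and so cannot reach the non-$L_0$ endpoints of $\lobs$. Applying Lemma~\ref{lemma:cut} inside $\Frame_2$ then gives
\[
  \cmpttwo{\lobs}{\lsrc}{\bnd_o}=\bigcup_{\bnd_c\in\cmpttwo{\lobs}{\lcutz}{\bnd_o}}\cmpttwo{\lcutz}{\lsrc}{\bnd_c}.
\]
Each index $\bnd_c$ lies in $\cmpttwo{\lobs}{\lcutz}{\bnd_o}\subseteq\lrunstwo\lcutz$, so the hypothesis $\lruns\lcutz(\Frame_2)\subseteq\lruns\lcutz(\Frame_1)$ places it in $\lrunsone\lcutz\cap\lrunstwo\lcutz$; Clause~\ref{clause:cmptone:cmpttwo} then rewrites $\cmpttwo{\lcutz}{\lsrc}{\bnd_c}$ as $\cmptone{\lcutz}{\lsrc}{\bnd_c}$, and substitution finishes the claim. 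This is the only place the containment hypothesis on $\lcutz$-runs is consumed.

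The main obstacle is the gluing underlying Clause~\ref{clause:cmptone:cmpttwo}: one must verify that the spliced structure is a genuine execution of $\Frame_2$, i.e.~that the induced order is acyclic with finite predecessor sets and that every location's projection is a legal trace. The cut property is exactly what makes this go through, since each location's channels lie wholly on one side of $\lcutz$, so its projection is inherited entirely from the left part (legal in $\Frame_2$ because $\traces_1=\traces_2$ on $L_0$) or entirely from the right part (legal in $\Frame_2$ by construction), with the shared $\lcutz$-events providing the only interface. This is precisely what Lemma~\ref{lemma:combine:executions:two:frames} packages, and the partial-order semantics means no interleaving decisions are forced.
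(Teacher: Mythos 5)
Your proof of Clause~\ref{clause:cmptone:cmpttwo} is exactly the paper's: glue the left part of a $\Frame_1$-witness for $\bnd_s\in\cmptone{\lcutz}{\lsrc}{\bnd_c}$ onto the right part of a $\Frame_2$-execution restricting to $\bnd_c$, via Lemma~\ref{lemma:combine:executions:two:frames}, and obtain the converse inclusion from the symmetry of ``$L_0$ is shared.''  No issues there.

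Clause~\ref{clause:cmptone:cmpttwo:equality} is where you diverge, and the divergence is circular in the context of this paper.  You invoke Lemma~\ref{lemma:cut} inside $\Frame_2$ to get the equality $\cmpttwo{\lobs}{\lsrc}{\bnd_o}=\bigcup_{\bnd_c\in\cmpttwo{\lobs}{\lcutz}{\bnd_o}}\cmpttwo{\lcutz}{\lsrc}{\bnd_c}$, and then rewrite each term using Clause~\ref{clause:cmptone:cmpttwo}.  But Lemma~\ref{lemma:cut} is not an independently established prior result here: the paper states that it ``is in fact a corollary of Lemma~\ref{lemma:cut:two:frames},'' and its only full proof (in the appendix) constructs $L_0$ as the $\lcut$-avoiding closure of the $\lsrc$-locations and then applies Clause~\ref{clause:cmptone:cmpttwo:equality} of the very lemma you are proving; the main text gives only a ``key idea'' sketch for Lemma~\ref{lemma:cut}.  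So as written, your Clause~\ref{clause:cmptone:cmpttwo:equality} rests on a statement whose official proof rests on your Clause~\ref{clause:cmptone:cmpttwo:equality}.

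The repair is cheap, and it is what the paper actually does, bypassing Lemma~\ref{lemma:cut} entirely.  For $\subseteq$, apply Clause~\ref{clause:cmpt:propagation} of Lemma~\ref{lemma:cmpt:propagation} within $\Frame_2$ (with $C_1=\lobs$, $C_2=\lcutz$, $C_3=\lsrc$) to get $\cmpttwo{\lobs}{\lsrc}{\bnd_o}\subseteq\bigcup_{\bnd_c\in\cmpttwo{\lobs}{\lcutz}{\bnd_o}}\cmpttwo{\lcutz}{\lsrc}{\bnd_c}$ --- note that this lemma gives an inclusion, not an equality, which is why the two directions need separate arguments --- and then pass to $\cmptone{\lcutz}{\lsrc}{\bnd_c}$ termwise by Clause~\ref{clause:cmptone:cmpttwo}, exactly as you do, using the hypothesis $\lruns\lcutz(\Frame_2)\subseteq\lruns\lcutz(\Frame_1)$ to see that Clause~\ref{clause:cmptone:cmpttwo} applies to each index $\bnd_c$.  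For $\supseteq$, given $\bnd_s\in\cmptone{\lcutz}{\lsrc}{\bnd_c}$ with $\bnd_c\in\cmpttwo{\lobs}{\lcutz}{\bnd_o}$, apply Lemma~\ref{lemma:combine:executions:two:frames} a second time, to a $\Frame_1$-witness for the former and a $\Frame_2$-witness for the latter; the resulting $\bndA\in\exec(\Frame_2)$ restricts to $\bnd_s$, $\bnd_c$, and $\bnd_o$, so it witnesses $\bnd_s\in\cmpttwo{\lobs}{\lsrc}{\bnd_o}$ (here $\lobs\subseteq\RIGHT_2$ guarantees $\bnd_o$ lives in the right-hand part being glued).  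Alternatively, you could keep your outline by first proving the single-frame cut equality for $\lcutz$ in $\Frame_2$ directly, but that is just the diagonal instance $\Frame_1=\Frame_2$ of the same two-step argument, so nothing is saved.
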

Part~\ref{clause:cmptone:cmpttwo} states that causality acts locally.
The variable portions $\RIGHT_1,\RIGHT_2$ of $\Frame_1$ and $\Frame_2$
can affect what happens in their shared part $\LEFT$.  But it does so
only by changing which $\lruns\lcutz$ are possible.  Whenever both
frames agree on any $\bnd_c\in\lrunsone\lcutz\cap\lrunstwo\lcutz$,
then the $\lruns\LEFT$ runs compatible with $\bnd_c$ are the same.
The variation between $\RIGHT_1,\RIGHT_2$ affects the distant behavior
within $\LEFT$ only by changing the possible local runs at the
boundary $\lcutz$.  

The assumption
$\lruns\lcutz(\Frame_2) \subseteq \lruns\lcutz(\Frame_1)$ in
Part~\ref{clause:cmptone:cmpttwo:equality} and Thm.~\ref{cor:compose}
is meant to limit this variability in one direction.
\begin{theorem}%[Compositional Flow]
  \label{cor:compose}
  Suppose that $L_0$ is shared between frames $\Frame_1,\Frame_2$, and
  assume $\lruns\lcutz(\Frame_2) \subseteq \lruns\lcutz(\Frame_1)$.
  Consider any $\lsrc\subseteq \LEFTZ$ and $\lobs\subseteq \RIGHT_2$.
  If $\Frame_1$ $f$-limits $\lsrc$-to-$\lcutz$ flow, then $\Frame_2$
  $f$-limits $\lsrc$-to-$\lobs$ flow.
\end{theorem}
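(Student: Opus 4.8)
The plan is to run the very same argument as in the proof of Thm.~\ref{thm:cut:II}, but with the single-frame decomposition Lemma~\ref{lemma:cut} replaced by its two-frame analogue, Lemma~\ref{lemma:cut:two:frames}. By hypothesis $f$ is a blur operator, so it remains only to check the blurring condition. First I would fix an arbitrary $\lobs$-run $\bnd_o$ of $\Frame_2$ and reduce the goal to showing that $\cmpttwo{\lobs}{\lsrc}{\bnd_o}$ is $f$-blurred. Writing $S_c=\cmpttwo{\lobs}{\lcutz}{\bnd_o}$, and noting that the theorem's hypotheses match those of Lemma~\ref{lemma:cut:two:frames}, Part~\ref{clause:cmptone:cmpttwo:equality} (we have $\lsrc\subseteq\LEFTZ$, $\lobs\subseteq\RIGHT_2$, and the containment $\lruns\lcutz(\Frame_2)\subseteq\lruns\lcutz(\Frame_1)$), I would invoke that part to obtain the decomposition
\[
\cmpttwo{\lobs}{\lsrc}{\bnd_o}=\bigcup_{\bnd_c\in S_c}\cmptone{\lcutz}{\lsrc}{\bnd_c}.
\]
It then suffices to prove that the right-hand side is $f$-blurred.

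The step I would treat most carefully is the passage from $\Frame_2$ to $\Frame_1$ hidden in this identity: the terms of the union are compatibility sets computed in $\Frame_1$ (note the subscript on $\cmptone{\lcutz}{\lsrc}{\bnd_c}$), whereas $\bnd_o$ and $S_c$ live in $\Frame_2$. To bring the hypothesis on $\Frame_1$ to bear, I would observe that each index $\bnd_c\in S_c\subseteq\lrunstwo\lcutz$ is, by the containment assumption, also a $\lcutz$-run of $\Frame_1$. Hence the assumption that $\Frame_1$ $f$-limits $\lsrc$-to-$\lcutz$ flow applies to each such $\bnd_c$, so every $\cmptone{\lcutz}{\lsrc}{\bnd_c}$ is $f$-blurred, and by \emph{Idempotence} equals $f(\cmptone{\lcutz}{\lsrc}{\bnd_c})$. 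Finally, exactly as in Thm.~\ref{thm:cut:II}, the \emph{Union} property (Eqn.~\ref{eq:union}) lets me pull $f$ outside:
\[
\bigcup_{\bnd_c\in S_c}\cmptone{\lcutz}{\lsrc}{\bnd_c}
=\bigcup_{\bnd_c\in S_c}f\bigl(\cmptone{\lcutz}{\lsrc}{\bnd_c}\bigr)
=f\Bigl(\bigcup_{\bnd_c\in S_c}\cmptone{\lcutz}{\lsrc}{\bnd_c}\Bigr),
\]
showing the union is $f$-blurred, which closes the proof.

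I do not expect a serious obstacle at this level: essentially all of the real work has been pushed into Lemma~\ref{lemma:cut:two:frames}, whose Part~\ref{clause:cmptone:cmpttwo:equality} already encodes the delicate claim that the behaviours of the shared core $L_0$ compatible with a fixed boundary run $\bnd_c$ are insensitive to whether one sits in $\Frame_1$ or $\Frame_2$. Given that lemma, the argument is a mechanical parallel of the cut-blur principle. The only genuinely load-bearing bookkeeping point is the $\cmptone$-versus-$\cmpttwo$ mismatch noted above, together with the role of the containment $\lruns\lcutz(\Frame_2)\subseteq\lruns\lcutz(\Frame_1)$ in licensing the application of the $\Frame_1$ flow hypothesis to every term of the union; I would be careful to record explicitly that $S_c\subseteq\lruns\lcutz(\Frame_1)$ before invoking that hypothesis.
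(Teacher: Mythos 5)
Your proposal is correct and follows essentially the same route as the paper's own proof: fix $\bnd_o$, decompose $\cmpttwo{\lobs}{\lsrc}{\bnd_o}$ via Lemma~\ref{lemma:cut:two:frames}, apply the $\Frame_1$ flow hypothesis to each term, and pull $f$ out by the \emph{Union} property. Your explicit remark that the containment $\lruns\lcutz(\Frame_2)\subseteq\lruns\lcutz(\Frame_1)$ is what licenses applying the $\Frame_1$ hypothesis to each $\bnd_c\in S_c$ is a point the paper leaves implicit, and it is a welcome clarification rather than a deviation.
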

The proof is similar to the proof of the cut-blur principle, which
effectively results from it by replacing
Lemma~\ref{lemma:cut:two:frames} by Lemma~\ref{lemma:cut}, and
omitting the subscripts on frames and their local runs.  The cut-blur
principle is in fact the corollary of Thm.~\ref{cor:compose} for
$\Frame_1=\Frame_2$.

\subsection{Two Applications}
\label{sec:cut:blur:applications}

Thm.~\ref{cor:compose} is a useful compositional principle.  It
implies, for instance in connection with Example~\ref{eg:cut:blur},
that non-exportable traffic in $n_1,n_2$ remains unobservable even as
we vary the top part of Fig.~\ref{fig:net:1}.
\begin{eg}
  \label{eg:compositional} 
  Regarding Fig.~\ref{fig:net:1} as the frame $\Frame_1$, let $L_0$
  be the locations below $\{c_1,c_2\}$, and let $\lcut=\{c_1,c_2\}$.
  Let $\Frame_2$ contain $L_0,\lcut$ as shown, and have any graph
  structure above $\lcut$ such that $\lcut$ remains a cut between the
  new structure and $\Frame_0$.  Let the new locations have any
  transition systems such that the local runs agree,
  i.e.~$\lruns\lcut(\Frame_2)=\lruns\lcut(\Frame_1)$.
  Then by Thm.~\ref{cor:compose}, external observations of
  $\chans(\{n_1,n_2\})$ are guaranteed to blur out non-exportable
  events.  \defend
\end{eg}
It is appealing that our security goal is independent of changes in
the structure of the internet that we do not control.  A similar
property holds for the integrity goal of Example~\ref{eg:cut:blur} as
we alter the internal network.  The converse questions---preserving
the confidentiality property as the internal network changes, and the
integrity property as the internet changes---appear to require a
different theorem; cf.~\cite{GuttmanRowe14:cut}.

\begin{eg}
  \label{eg:voting:compositional}
  Consider a frame $\Frame_1$ representing a precinct, as shown in
  Fig.~\ref{fig:precinct}.  It consists of a set of voters
  $\overline v=\{v_1,\ldots,v_k\}$, a ballot box $\mathit{BB}_1$, and
  a channel $c_1$ connecting that to the election commission
  $\mathit{EC}$.  The $\mathit{EC}$ publishes the results over the
  channel $p$ to the public ${\mathit{Pub}}$.

  We have proved that a particular implementation of $\mathit{BB}_1$
  ensures that $\Frame_1$ blurs the votes; we formalized this within
  the theorem prover PVS.  That is, if a pattern of voting in precinct
  $i$ is compatible with an observation at $c_1$, then any permutation
  of the votes at $\overline v$ is also compatible.

  The cut-blur principle implies this blur also applies to
  observations at channel $p$ to the public.  Other implementations of
  $\mathit{BB}_1$ also achieve this property.  ThreeBallot and
  VAV~\cite{rivest2007three} appear to have this effect; they involve
  some additional data delivered to ${\mathit{Pub}}$, namely the
  receipts for the ballots.\footnote{Our claim is non-probabilistic.
    For quantitative conclusions, this no longer holds:  Some
    permutations are more likely than others, given the
    receipts~\cite{clark2007security,MoranHeatherSchneider2015}.}
%   %
%     \begin{figure}[tb]
%       \centering\tiny
%       $$\xymatrix@R=.5mm@C=5mm{
%         *+[F.]{v_1} \ar[rd] & & \\
%         \vdots & *+[F-:<3pt>]{\mathit{BB}_1}\ar[r]^{c_1}
%         & *+[F-:<3pt>]{\mathit{EC}}\ar[r]^p & *+[F-:<3pt>]{\mathit{Pub}} \\
%         *+[F.]{v_k} \ar[ru] & & }
%       $$
%       \caption{A single precinct: $\Frame_1$}
%       \label{fig:precinct}
%     \end{figure}
%   %

  However, elections generally concern many precincts.  Frame
  $\Frame_2$ contains $i$ precincts, all connected to the election
  commission $\mathit{EC}$ (Fig.~\ref{fig:election}).  Taking
  $L_0=\overline v\cup\{\mathit{BB}_1\}$, we may apply
  Thm.~\ref{cor:compose}.  We now have $\lcut=\{c_1\}$.  Thus, to
  infer that $\Frame_2$ blurs observations of the voters in precinct
  1, we need only check that $\{c_1\}$ has no new local runs in
  $\Frame_2$.

  By symmetry, each precinct in $\Frame_2$ enjoys the same blur.

  Thus---for a given local run at $p$---any permutation of the votes
  at $\overline v$ preserves compatibility in $\Frame_2$, and any
  permutation of the votes at $\overline w$ preserves compatibility in
  $\Frame_2$.  However, Thm.~\ref{cor:compose} does not say that any
  pair of permutations at $\overline v$ and $\overline w$ must be
  jointly compatible.  That is, does every permutation on
  $\overline v\cup\overline w$ that respects the division between the
  precinct of the $\overline v$s and the precinct of the
  $\overline w$s preserve compatibility?  Although
  Thm.~\ref{cor:compose} does not answer this question, the answer is
  yes, as we can see by applying
  Lemma~\ref{lemma:combine:executions:two:frames} to $\Frame_2$.
 \defend
%   %
%     \begin{figure}[tb]
%       \centering\tiny
%       $$\xymatrix@R=.5mm@C=3mm{
%         *+[F.]{v_1} \ar[rd] & & \\
%         \vdots & *+[F-:<3pt>]{\mathit{BB}_1}\ar[rdd]^{c_1}
%         &  \\
%         *+[F.]{v_k} \ar[ru] & &  \\
%         \vdots & & *+[F-:<3pt>]{\mathit{EC}}\ar[r]^p & *+[F-:<3pt>]{\mathit{Pub}} \\
%         *+[F.]{w_1} \ar[rd] & & \\
%         \vdots & *+[F-:<3pt>]{\mathit{BB}_i}\ar[ruu]_{c_i}
%         & \\
%         *+[F.]{w_j} \ar[ru] & &  }
%       $$
%       \caption{Multiple precincts report to $\mathit{EC}$: $\Frame_2$}
%       \label{fig:election}
%     \end{figure}
%   %
\end{eg}
%   %
Thm.~\ref{cor:compose} is a tool to justify abstractions.
Fig.~\ref{fig:net:1} is a sound abstraction of a variety of networks,
and Fig.~\ref{fig:precinct} is a sound abstraction of the various
multiple precinct instances of Fig.~\ref{fig:election}.

%%% Local Variables:
%%% mode: latex
%%% TeX-master: "ccut"
%%% End:

%%% Local Variables:
%%% mode: latex
%%% TeX-master: "ccut"
%%% End:

%\input{hom}
%\input{refine}

\section{Relating Blurs to Noninterference and Nondeducibility} 
\label{sec:blur:non:interference}

If we specialize frames to state machines (see
Fig.~\ref{fig:monolith}), we can reproduce some of the traditional
definitions.  Let $D=\{d_1,\ldots,d_k\}$ be a finite set of
\emph{domains}, i.e.~sensitivity labels;
$\mathop{\hookrightarrow}\mathrel\subseteq D\times D$ specifies which
domains \emph{may influence} each other.  We assume $\hookrightarrow$
is reflexive, though not necessarily transitive.
$A$ is a set of \emph{actions}, and $\dom\colon A\rightarrow D$
assigns a domain to each action; $O$ is a set of outputs.

$M=\seq{S,s_0,A,\delta,\lobs}$ is a (possibly non-deterministic) state
machine with states $S$, initial state $s_0$, transition relation
$\delta\subseteq S\times A\times S$, and observation function
$\lobs\colon S\times D\rightarrow O$.  $M$ has a set of traces
$\alpha$, and each trace $\alpha$ determines a sequence of
observations for each
domain~\cite{rushby1992noninterference,van2007indeed,van2012architectural}.

$M$ accepts commands from $A$ along the incoming channels
$c_i^{\mathsf{in}}$ from the $d_i$; each command $a\in A$ received
from $d_i$ has sensitivity $\dom(a)=d_i$.  $M$ delivers observations
along the outgoing channels $c_i^{\mathsf{out}}$.  The frame requires
a little extra memory, in addition to the states of $M$, to deliver
outputs over the channels $c_i^{\mathsf{out}}$.

$\Frame$ being star-like, each $\bndA\in\exec(\Frame)$ has a linearly
ordered $\preceq_\bndA$, since all events are in the linearly ordered
$\proj(\bndA,M)$.
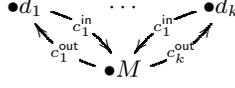
\begin{figure}[tb]
  $$\xymatrix@C=6mm@R=4mm{
    \bullet d_1\ar@/^/[dr]|{c_{1}^{\mathsf{in}}} & \cdots & \bullet
    d_k\ar@/_/[dl]|{c_{1}^{\mathsf{in}}} \\
    & \bullet
    M\ar@/^/[ul]|{c_{1}^\mathsf{out}}\ar@/_/[ur]|{c_{k}^\mathsf{out}}
    & }
  $$
  \caption{Machine $M$, domains $\{d_1,\ldots,d_k\}$}
  \label{fig:monolith}
\end{figure}
Let us write:
\begin{quote}
  \begin{description}
    \item[$C_i=\{c_{i}^{\mathsf{in}},c_{i}^{\mathsf{out}}\}$] for
    $d_i$'s input and output for $M$; %, for each $d_i$;
    \item[${\vis(d_i)}=\{c_j^{\mathsf{in}}\colon d_j\hookrightarrow
    d_i\}$] for inputs visible to $d_i$;
%       \item[${{\hid}(d_i)}=\{c_j^{\mathsf{in}}\colon
%       d_j\not\hookrightarrow d_i\}$]
%       for the input channels to be hidden from $d_i$; 
    \item[${\mathsf{IN}}={\{{c_{x}^{\mathsf{in}}}\colon 1\le x\le
      k\}}$] for the input channels;
    \item[$\inp(\bndA)=\restrict\bndA{\mathsf{IN}}$] for all input
    behavior in $\bndA$.
  \end{description}
\end{quote}

\paragraph{Noninterference and nondeducibility.}
\emph{Noninterference}~\cite{GoguenMeseguer82} and its variants are
defined by \emph{purge} functions $p$ for each target domain $d_i$,
defined by recursion on input behaviors $\inp(\bndA)$.  The original
Goguen-Meseguer (GM) purge function $p_o$ for
$d_i$~\cite{GoguenMeseguer82} retains the events $e\in\inp(\bndA)$
satisfying the predicate
$$ \chan(e)\in{\vis(d_i)} .  $$
A purge function for intransitive $\hookrightarrow$ relations was
subsequently proposed by Haigh and Young~\cite{HaighYoung1987}.  In
the purge function for domain $d_i$, any input event
$e_0\in\inp(\bndA)$ is retained if $\inp(\bndA)$ has an increasing
subsequence $e_0\preceq e_1\preceq\ldots\preceq e_{j}$ where
$\dom(\chan(e_{j}))=d_i$ and, for each $k$ with $0\le k <j$,
$$ \chan(e_k) \in{\vis(\dom(e_{k+1}))} .
$$
%   \begin{center}
%     $\chan(e_0)\in{\vis(\dom(\chan(e_1)))}$,
%     $\chan(e_1)\in{\vis(\dom(\chan(e_2)))}$,\dots,
%     $\chan(e_{j-1})\in{\vis(\dom(\chan(e_j)))}$.
%   \end{center}
%
In~\cite{van2007indeed}, van der Meyden's purge functions yield tree
structures instead of subsequences; every path from a leaf to the root
in these trees is a subsequence consisting of permissible effects
$\chan(e_k) \in{\vis(\dom(e_{k+1}))}$.  This tightens the
notion of security, because the trees do not include ordering
information between events that lie on different branches to the root.

We formalize a \emph{purge function} for a domain $d_i\in D$ as being
a function from executions $\bndA$ to some range set $A$.  It should
be sensitive only to \emph{input} events in $\bndA$
(condition~\ref{clause:inputs:only}), and it should certainly reflect
\emph{all} the inputs \emph{visible} to level $d_i$
(condition~\ref{clause:visible:inputs}).  In most existing
definitions, the range $A$ consists of sequences of input events,
though in van der Meyden's~\cite{van2007indeed}, they are trees of
input events.  In~\cite{ChongvanderMeyden14}, the range depends on how
declassification conditions are defined.
\begin{definition}\label{def:purge}
  Let $\Frame$ be as in Fig.~\ref{fig:monolith}, and $A$ any set.  A
  function $p\colon\exec(\Frame)\rightarrow A$ is a $d_i$-\emph{purge
    function}, where $d_i\in D$, iff
    \begin{enumerate}
      \item $\inp(\bndA)=\inp(\bndA')$\label{clause:inputs:only}
      implies $p(\bndA)=p(\bndA')$;
      \item $p(\bndA)=p(\bndA')$\label{clause:visible:inputs} implies
      $\restrict\bndA{\vis(d_i)}=\restrict{\bndA'}{\vis(d_i)}$.
    \end{enumerate}
    If $p$ is a $d_i$-purge, $\bndA\fequiv\bndA'$ means
    $p(\bndA)=p(\bndA')$.  \defend
\end{definition}
Each purge $p$ determines notions of noninterference and
nondeducibility.
\begin{definition}  \label{def:noninterference}
  Let $p$ be a purge function for $d_i\in D$.  $\Frame$ \emph{is
    $p$-noninterfering}, written $\Frame\in\NI^p$, iff, for all
  $\bndA,\bndA'\in\exec(\Frame)$,
    $$\bndA\fequiv\bndA' \mbox{ implies }
    \restrict\bndA{C_i}=\restrict{\bndA'}{C_i}.  $$
    \item $\Frame$ \emph{is $p$-nondeducible} ($\Frame\in\ND^p$), iff,
    for all $\bndA,\bndA'\in\exec(\Frame)$,
    $$\bndA\fequiv\bndA' \mbox{ implies }
    \restrict{\bndA'}{\mathsf{IN}} \in
    \cmpt{C_i}{{\mathsf{IN}}}{\restrict\bndA{C_i}}. 
    \qquad\defsymbol
    $$
\end{definition}
%
%   $\restrict{\bndA'}{\mathsf{IN}}$ matches Sutherland's
%   \emph{hidden\_from}~\cite[Sec.~5.2]{Sutherland86}.
%
Here we take non-deducibility to mean that $d_i$'s observations
provide no more information about all inputs than the purge $p$
preserves.  Thus, ${\restrict\bndA{C_i}}$ is akin to Sutherland's
\emph{view}~\cite[Sec.~5.2]{Sutherland86}.  We have adapted it
slightly.  Sutherland assumes that $d_i$ also observes the outputs on
the channels $c_{j}^{\mathsf{out}}$ where $d_j\hookrightarrow d_i$; we
instead let the definition of machine $M$ replicate them onto
$c_{i}^{\mathsf{out}}$ when desired.

Sutherland's \emph{hidden\_from} could be interpreted as
$\restrict{\bndA'}{\{c_j^{\mathsf{in}}\colon d_j\not\hookrightarrow
  d_i\}}$,
i.e.~the inputs that would not be visible to $d_i$.  This agrees with
our proposed definition in the case Sutherland considered, namely the
classic GM purge for noninterference.  The assumption
$\bndA\fequiv\bndA'$ is meant to extend nondeducibility for other
purges.  As expected, noninterference is tighter than
nondeducibility~\cite[Sec.~7]{Sutherland86}:
\begin{lemma}\label{lemma:ni:nd}
  Let $p$ be a purge function for domain $d_i$.  $\Frame\in\NI^p$
  implies $\Frame\in\ND^p$.
\end{lemma}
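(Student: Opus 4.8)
The plan is to unwind the definition of $\ND^p$ and then exhibit $\bndA'$ itself as the witnessing execution. Fix $\bndA,\bndA'\in\exec(\Frame)$ with $\bndA\fequiv\bndA'$; by Def.~\ref{def:noninterference} what must be shown is $\restrict{\bndA'}{\mathsf{IN}} \in \cmpt{C_i}{\mathsf{IN}}{\restrict\bndA{C_i}}$. Unwinding the definition of compatibility (Def.~\ref{def:local}), this amounts to producing some $\bndA''\in\exec(\Frame)$ with $\restrict{\bndA''}{C_i}=\restrict{\bndA}{C_i}$ and $\restrict{\bndA''}{\mathsf{IN}}=\restrict{\bndA'}{\mathsf{IN}}$.

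First I would invoke the hypothesis $\Frame\in\NI^p$. Applied to the pair $\bndA,\bndA'$, which satisfy $\bndA\fequiv\bndA'$, it yields $\restrict{\bndA}{C_i}=\restrict{\bndA'}{C_i}$: the two executions present exactly the same input/output behavior on $d_i$'s own channels $C_i$. This single consequence is the crux of the argument, since noninterference forces \emph{equality} of the $C_i$-restrictions, whereas nondeducibility only demands their \emph{compatibility}. Then I would take $\bndA''=\bndA'$ as the witness. It lies in $\exec(\Frame)$ by assumption; it satisfies $\restrict{\bndA''}{\mathsf{IN}}=\restrict{\bndA'}{\mathsf{IN}}$ trivially; and it satisfies $\restrict{\bndA''}{C_i}=\restrict{\bndA'}{C_i}=\restrict{\bndA}{C_i}$ by the equality just obtained from $\NI^p$. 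Hence $\restrict{\bndA'}{\mathsf{IN}}$ is compatible with $\restrict{\bndA}{C_i}$, which is precisely the conclusion $\Frame\in\ND^p$.

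There is essentially no obstacle here beyond checking that $\bndA'$ qualifies as its own witness; the possible overlap $c_i^{\mathsf{in}}\in\mathsf{IN}\cap C_i$ causes no difficulty, since $\cmpt{C}{C'}{\bnd}$ in Def.~\ref{def:local} is well-behaved even when the two channel sets meet, as noted there. The only genuine content is recognizing that equality of the $C_i$-views (noninterference) is strictly stronger than the compatibility of the source input behavior with that view (nondeducibility), so the implication is immediate once $\NI^p$ is applied.
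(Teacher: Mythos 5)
Your proof is correct and is essentially the paper's own argument: apply $\NI^p$ to obtain $\restrict\bndA{C_i}=\restrict{\bndA'}{C_i}$, then observe that $\bndA'$ itself serves as the witness for $\restrict{\bndA'}{\mathsf{IN}}\in\cmpt{C_i}{\mathsf{IN}}{\restrict\bndA{C_i}}$. The paper merely phrases the final step by first rewriting $\cmpt{C_i}{\mathsf{IN}}{\restrict\bndA{C_i}}$ as $\cmpt{C_i}{\mathsf{IN}}{\restrict{\bndA'}{C_i}}$ before exhibiting $\bndA'$ as its own witness, which is the same reasoning in a different order.
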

\begin{proof}
  Assume that $\Frame\in\NI^p$ and $\bndA,\bndA'\in\exec(\Frame)$,
  where $\bndA\fequiv\bndA'$.  

  By the definition, $\restrict\bndA{C_i}=\restrict{\bndA'}{C_i}$.
  Thus,
  $\cmpt{C_i}{\mathsf{IN}}{\restrict\bndA{C_i}} =
  \cmpt{C_i}{\mathsf{IN}}{\restrict{\bndA'}{C_i}}$. 

  But
  $\restrict{\bndA'}{\mathsf{IN}}\in\cmpt{C_i}{\mathsf{IN}}{\restrict{\bndA'}{C_i}}$
  because $\bndA'$ is itself a witness.  \myqed
\end{proof}
$\NI^p$ and $\ND^p$ are not equivalent, as $\ND^p$ has an additional
(implicit) existential quantifier.  The witness execution showing that
$\restrict{\bndA'}{\mathsf{IN}}\in\cmpt{C_i}{\mathsf{IN}}{\restrict\bndA{C_i}}$
may differ from $\bndA'$ on channels $c\not\in{\mathsf{IN}}\cup{C_i}$,
namely the output channels $c_{j}^{\mathsf{out}}$ for $j\not=i$.

The symmetry of nondisclosure (Lemma~\ref{lemma:symmetry}) does not
hold for $\NI^p$ and $\ND^p$.  For instance, relative to the GM purge
for flow to $d_i$, there may be noninterference for inputs at $d_j$,
while there is interference for flow from $d_i$ to $d_j$.  The
asymmetry arises because the events to be concealed are only inputs at
the source, while the observed events are both inputs and
outputs~\cite{Sutherland86}.  

The idea of $p$-noninterference is useful only when $M$ is
deterministic, since otherwise the outputs observed on
$c_{i}^{\mathsf{out}}$ may differ even when
$\inp(\bndA)=\inp(\bndA')$.  For non-deterministic $M$,
nondeducibility is more natural.

\paragraph{Purges and blurs.}  We can associate a blur operator $f^p$
with each purge function $p$, such that $\ND^p$ amounts to respecting
the blur operator $f^p$.  We regard $\ND^p$ as saying that the
input/output events on $C_i$ tell $d_i$ no more about all the inputs
than the purged input $p(\bndA)$ would disclose.  We use a
compatibility relation where the observed channels and the source
channels overlap on $c_{i}^{\mathsf{in}}$.
\begin{definition} \label{def:purge:blur} 
  Let $p$ be a purge function for $d_i$, and define the equivalence
  relation
  ${\mathcal{R}}\subseteq(\lruns{\mathsf{IN}}\times\lruns{\mathsf{IN}})$
  by the condition: $ {\mathcal{R}}(\bnd_1,\bnd_2)$ iff
  \begin{equation}
    \label{eq:nd:blur}
    \exists\bndA_1,\bndA_2\in\exec(\Frame)\qdot
    \bigwedge_{j=1,2} \bnd_j=\restrict{\bndA_j}{\mathsf{IN}} \land
    \bndA_1\fequiv\bndA_2 .
  \end{equation}
%   %
%     \begin{eqnarray*}
%       \exists\bndA_1,\bndA_2\qdot 
%       & \bndA_1\fequiv\bndA_2 & \land \\ 
%       & \bnd_1=\restrict{\bndA_1}{\hid(d_i)} & \land \;
%         \bnd_2=\restrict{\bndA_2}{\hid(d_i)}  .
%     \end{eqnarray*}
%   %
  Define
  $f^p \colon \mathcal{P}(\lruns{\mathsf{IN}}) \rightarrow
  \mathcal{P}(\lruns{\mathsf{IN}}) $
  to close under the ${\mathcal{R}}$-equivalence classes as in
  Lemma~\ref{lemma:partition}.  \defend
\end{definition}
%
\iffalse
By Def.~\ref{def:purge}, Clause~\ref{clause:inputs:only},
Eqn.~\ref{eq:nd:blur} is independent of the choice of
$\bndA_1,\bndA_2$:  
%
\begin{lemma}
  \label{lemma:nd:blur} 
  %
  When $p$ is a purge function for $d_i$, Eqn.~\ref{eq:nd:blur} is
  equivalent to
  \begin{equation}
    \label{eq:nd:blur:alt}
    \forall\bndA_1,\bndA_2\in\exec(\Frame)\qdot \bigwedge_{j=1,2}
    \bnd_j=\restrict{\bndA_j}{\mathsf{IN}} \supset
    \bndA_1\fequiv\bndA_2 .
  \end{equation}
  % 
\end{lemma}
%
\fi
In fact, $\ND$ is a form of disclosure limited to within a blur:
\begin{lemma} \label{lemma:nd:blur} 
  Let $p$ be a purge function for domain $d_i$.  For all $\Frame$,
  $\Frame\in\ND^p$ iff $\Frame$ $f^p$-limits ${\mathsf{IN}}$-to-$C_i$
  flow.
\end{lemma}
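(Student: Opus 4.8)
The plan is to prove the biconditional by unfolding both sides to a common statement about the relationship between the compatibility relation $\cmpt{C_i}{\mathsf{IN}}{\cdot}$ and the equivalence relation $\mathcal{R}$ defining the blur $f^p$. Recall that $\Frame$ $f^p$-limits $\mathsf{IN}$-to-$C_i$ flow means that for every $C_i$-run $\bnd_o$, the set $\cmpt{C_i}{\mathsf{IN}}{\bnd_o}$ is $f^p$-blurred, i.e.\ equals $f^p(\cmpt{C_i}{\mathsf{IN}}{\bnd_o})$. By the construction of $f^p$ from $\mathcal{R}$ via Lemma~\ref{lemma:partition}, a set $S$ of $\mathsf{IN}$-runs is $f^p$-blurred exactly when it is a union of $\mathcal{R}$-equivalence classes; equivalently, whenever $\bnd_1\in S$ and $\mathcal{R}(\bnd_1,\bnd_2)$, then $\bnd_2\in S$. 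So the goal reduces to showing: $\Frame\in\ND^p$ iff each compatibility set $\cmpt{C_i}{\mathsf{IN}}{\bnd_o}$ is closed under $\mathcal{R}$.

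First I would record the concrete meaning of membership in the compatibility set. By Def.~\ref{def:local}, $\restrict{\bndA'}{\mathsf{IN}}\in\cmpt{C_i}{\mathsf{IN}}{\restrict{\bndA}{C_i}}$ holds iff there is some execution witnessing both restrictions --- that is, iff there is $\bndA''\in\exec(\Frame)$ with $\restrict{\bndA''}{C_i}=\restrict{\bndA}{C_i}$ and $\restrict{\bndA''}{\mathsf{IN}}=\restrict{\bndA'}{\mathsf{IN}}$. Meanwhile $\ND^p$ says: whenever $\bndA\fequiv\bndA'$ (i.e.\ $p(\bndA)=p(\bndA')$), then $\restrict{\bndA'}{\mathsf{IN}}\in\cmpt{C_i}{\mathsf{IN}}{\restrict{\bndA}{C_i}}$. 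The key observation linking these is that $\mathcal{R}(\bnd_1,\bnd_2)$ is defined by Eqn.~\ref{eq:nd:blur} to mean precisely that there exist executions $\bndA_1,\bndA_2$ projecting to $\bnd_1,\bnd_2$ on $\mathsf{IN}$ with $\bndA_1\fequiv\bndA_2$.

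For the forward direction, assume $\Frame\in\ND^p$ and fix a $C_i$-run $\bnd_o=\restrict{\bndA}{C_i}$. To show $\cmpt{C_i}{\mathsf{IN}}{\bnd_o}$ is $\mathcal{R}$-closed, take $\bnd_1$ in it and suppose $\mathcal{R}(\bnd_1,\bnd_2)$; I must place $\bnd_2$ in the set too. Since $\bnd_1\in\cmpt{C_i}{\mathsf{IN}}{\bnd_o}$, there is a witness execution $\bndA^\ast$ with $\restrict{\bndA^\ast}{C_i}=\bnd_o$ and $\restrict{\bndA^\ast}{\mathsf{IN}}=\bnd_1$. From $\mathcal{R}(\bnd_1,\bnd_2)$ I get executions $\bndA_1,\bndA_2$ with $\restrict{\bndA_j}{\mathsf{IN}}=\bnd_j$ and $\bndA_1\fequiv\bndA_2$. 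The technical heart is to transfer $\fequiv$ from the pair $(\bndA_1,\bndA_2)$ to the pair $(\bndA^\ast,\bndA_2)$: because $p$ depends only on input events (Def.~\ref{def:purge}, Clause~\ref{clause:inputs:only}) and $\restrict{\bndA^\ast}{\mathsf{IN}}=\bnd_1=\restrict{\bndA_1}{\mathsf{IN}}$, we have $p(\bndA^\ast)=p(\bndA_1)=p(\bndA_2)$, hence $\bndA^\ast\fequiv\bndA_2$. Now $\ND^p$ applied to $\bndA^\ast,\bndA_2$ yields $\restrict{\bndA_2}{\mathsf{IN}}=\bnd_2\in\cmpt{C_i}{\mathsf{IN}}{\restrict{\bndA^\ast}{C_i}}=\cmpt{C_i}{\mathsf{IN}}{\bnd_o}$, as required.

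For the converse, assume every compatibility set is $\mathcal{R}$-closed, and take $\bndA\fequiv\bndA'$; I must show $\restrict{\bndA'}{\mathsf{IN}}\in\cmpt{C_i}{\mathsf{IN}}{\restrict{\bndA}{C_i}}$. Setting $\bnd_o=\restrict{\bndA}{C_i}$, note $\restrict{\bndA}{\mathsf{IN}}\in\cmpt{C_i}{\mathsf{IN}}{\bnd_o}$ since $\bndA$ witnesses for itself. The executions $\bndA,\bndA'$ directly witness Eqn.~\ref{eq:nd:blur}, so $\mathcal{R}(\restrict{\bndA}{\mathsf{IN}},\restrict{\bndA'}{\mathsf{IN}})$; $\mathcal{R}$-closure of the set then delivers $\restrict{\bndA'}{\mathsf{IN}}\in\cmpt{C_i}{\mathsf{IN}}{\bnd_o}$. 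I expect the main obstacle to be the bookkeeping in the transfer-of-$\fequiv$ step of the forward direction, where one must be careful that the purge invariance lets us swap the witness execution $\bndA^\ast$ for $\bndA_1$ without disturbing the $C_i$-projection; the crucial fact is that $\mathsf{IN}$ and $C_i$ share the channel $c_i^{\mathsf{in}}$, so the two restrictions interact, but Clause~\ref{clause:inputs:only} guarantees $p$ is insensitive to exactly the output coordinates where they could differ. I would also state explicitly, as a preliminary remark, the characterization ``$S$ is $f^p$-blurred iff $S$ is closed under $\mathcal{R}$'' that Lemma~\ref{lemma:partition} underwrites, so that both directions can invoke it cleanly.
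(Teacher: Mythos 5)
Your proposal is correct and follows essentially the same route as the paper's proof: in the forward direction you use the witness execution for $\bnd_1\in\cmpt{C_i}{\mathsf{IN}}{\bnd_o}$, transfer $\fequiv$ to it via Def.~\ref{def:purge}, Clause~\ref{clause:inputs:only}, and then apply $\ND^p$; in the converse you observe that $\bndA,\bndA'$ themselves witness ${\mathcal{R}}$ and invoke closure of the $f^p$-blurred compatibility set. The only cosmetic difference is that you state the ``$f^p$-blurred iff ${\mathcal{R}}$-closed'' characterization explicitly as a preliminary remark, which the paper leaves implicit in its use of Def.~\ref{def:purge:blur} and Lemma~\ref{lemma:partition}.
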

\begin{proof}
\iffalse
  \textbf{1.  $\ND^p$ implies $f^p$-limited flow.}  Suppose that
  $\Frame\in\ND^p$; $\bnd_i\in\lruns{C_i}$; and
  $\bnd_1\in\cmpt{C_i}{\mathsf{IN}}{\bnd_i}$.  If
  $\bnd_2\in f^p(\bnd_1)$, we must show that
  $\bnd_2\in \cmpt{C_i}{\mathsf{IN}}{\bnd_i}$.  Let $\bndA_1$ witness
  for $\bnd_1\in\cmpt{C_i}{\mathsf{IN}}{\bnd_i}$, and let $\bndA_2$
  witness for $\bnd_2\in \cmpt{C_i}{\mathsf{IN}}{\bnd_i}$.

  By Eqn.~\ref{eq:nd:blur:alt}, $\bndA_1\fequiv\bndA_2$.  Since
  $\Frame\in\ND^p$,
  $(\restrict{\bndA_2}{\mathsf{IN}})\in\cmpt{C_i}{\mathsf{IN}}{\restrict{\bndA_1}{C_i}}$.
  However, $\bnd_2=\restrict{\bndA_2}{\mathsf{IN}}$, and
  $\bnd_i={\restrict{\bndA_1}{C_i}}$.
\fi
%
  \textbf{1.  $\ND^p$ implies $f^p$-limited flow.} Suppose that
  $\Frame\in\ND^p$; $\bnd_i\in\lruns{C_i}$; and
  $\bnd_1\in\cmpt{C_i}{\mathsf{IN}}{\bnd_i}$.  If $\bnd_2\in
  f^p(\bnd_1)$, we must show that $\bnd_2\in
  \cmpt{C_i}{\mathsf{IN}}{\bnd_i}$. 

  By Def.~\ref{def:purge:blur} there are $\bndA_1,\bndA_2$ so that
  $\restrict{\bndA_1}{\mathsf{IN}} = \bnd_1$ and
  $\restrict{\bndA_2}{\mathsf{IN}} = \bnd_2$ and
  $\bndA_1\fequiv\bndA_2$. Furthermore, let $\bndA$ witness
  $\bnd_1\in\cmpt{C_i}{\mathsf{IN}}{\bnd_i}$. Then
  $\restrict{\bndA}{\mathsf{IN}} = \bnd_1 =
  \restrict{\bndA_1}{\mathsf{IN}}$. So Def.~\ref{def:purge},
  Clause~\ref{clause:inputs:only} says $\bndA\fequiv\bndA_1$ and so
  also $\bndA\fequiv\bndA_2$. Since $\Frame\in\ND^p$,
  $\restrict{\bndA_2}{\mathsf{IN}} \in
  \cmpt{C_i}{\mathsf{IN}}{\restrict{\bndA}{C_i}}$. That is, $\bnd_2
  \in \cmpt{C_i}{\mathsf{IN}}{\bnd_i}$ as required.

  \smallskip\noindent\textbf{2.  $f^p$-limited flow implies $\ND^p$.}
  Assume $\cmpt{C_i}{\mathsf{IN}}{\bnd_i}$ is $f^p$-blurred for all
  $\bnd_i$.  We must show that, for all $\bndA_1,\bndA_2$ such that
  $\bndA_1\fequiv\bndA_2$,
  $(\restrict{\bndA_2}{\mathsf{IN}})\in\cmpt{C_i}{\mathsf{IN}}{\restrict{\bndA_1}{C_i}}$.

  So choose executions with $\bndA_1\fequiv\bndA_2$.  By
  Def.~\ref{def:purge:blur},
  ${\mathcal{R}}(\restrict{\bndA_1}{\mathsf{IN}},\restrict{\bndA_2}{\mathsf{IN}})$,
  since $\bndA_1,\bndA_2$ satisfy the condition.  Thus, since
  $\cmpt{C_i}{\mathsf{IN}}{\restrict{\bndA_1}{C_i}}$ is $f^p$-blurred
  and contains $\restrict{\bndA_1}{\mathsf{IN}}$,
  $\restrict{\bndA_2}{\mathsf{IN}}\in\cmpt{C_i}{\mathsf{IN}}{\restrict{\bndA_1}{C_i}}$.
  \myqed
\end{proof}
%
%   
%   
%   
%   When the behavior of the source is defined in terms of both the inputs
%   and the outputs,
%   i.e.~${\hid^+(d_i)}=\{{c_{j}^{\mathsf{in}}},{c_{j}^{\mathsf{out}}}\colon
%   d_j\not\hookrightarrow d_i\}$,
%   then no disclosure between ${\hid^+(d_i)}$ and $C_i$ expresses
%   non-deducibility on strategies~
\nocite{wittbold1990information}
\paragraph{Semantic sensitivity.}  Blur operators provide an explicit
semantic representation of the information that will not be disclosed
when flow is limited.  This is in contrast to intransitive
non-interference~\cite{rushby1992noninterference,HaighYoung1987,van2007indeed},
which considers only whether the ``$\hookrightarrow$ plumbing'' among
domains is correct.
\begin{eg}
  \label{eg:weather-compare}
  We represent Imaginary Weather Forecasting (IWF, see
  Example~\ref{eg:weather-blur}) as a state machine frame as in
  Fig.~\ref{fig:monolith}.  It has domains
  $\{\mathit{ws},\ell,p,\mathit{cmp}\}$ for the weather service,
  low-tier customer, premium-tier customer and compression service
  respectively. Let $\hookrightarrow$ be the smallest reflexive (but
  intransitive) relation extending Eqn.~\ref{eq:weather:flow}, where
  all reports must flow through the compression service:
  \begin{equation}
    \label{eq:weather:flow}
    \mathit{ws}\hookrightarrow\mathit{cmp}\hookrightarrow p\mbox{ and
    } \mathit{cmp}\hookrightarrow \ell .  
  \end{equation}

  The $\mathit{cmp}$ service should compress reports lossily before
  sending them to $\ell$ and compress them losslessly for $p$.
  However, a faulty $cmp$ may compress losslessly for both $\ell$ and
  $p$.  Purge functions~\cite{HaighYoung1987,van2007indeed} do not
  distinguish between correct and faulty $\mathit{cmp}$s.  In both
  cases, all information from $\mathit{ws}$ does indeed pass through
  $\mathit{cmp}$.  The blur of Example~\ref{eg:weather-blur}, however,
  defines the desired goal semantically.  With the faulty
  $\mathit{cmp}$, the high-resolution data compatible with the
  observation of $\ell$ is more sharply defined than an $f$-blurred
  set.  \defend
%   %
%   \begin{figure}[tb]
%     % 
%     $$\xymatrix@C=6mm@R=1mm{
%       & & p\\
%       ws\ar@{^{(}->}[r] & cmp\ar@{^{(}->}[ur]\ar@{_{(}->}[dr] & \\  %)))
%       & & \ell}
%     $$
%     \caption{Intrasitive information flow policy}
%     \label{fig:flowrel}
%   \end{figure}
\end{eg}

%%% Local Variables:
%%% mode: latex
%%% TeX-master: "ccut"
%%% End:

\section{Future Work}
\label{sec:summary}

We have explored how the graph structure of a distributed system helps
to constrain information flow.  We have established the cut-blur
principle.  It allows us to propagate conclusions about limited
disclosure from a cut set $\lcut$ to more remote parts of the graph.
We have also showed a sufficient condition for limitations on
disclosure to be preserved under homomorphisms.  Most of our examples
concern networks and filtering, but the ideas are much more widely
applicable.

\paragraph{Quantitative treatment.}  It should be possible to equip
frames with a quantitative information flow semantics.  One obstacle
here is that our execution model mixes some choices which are natural
to view probabilistically---for instance, selection between different
outputs when both are permitted by an \textsc{lts}---with others that
seem non-deterministic.  The choice between receiving an input and
emitting an output is an example of this, as is the choice between
receiving inputs on different channels.  This problem has been studied
(e.g.~\cite{CanettiCKLLPS08,chatzikokolakis2007making}), but a
tractable semantics may require new ideas.

\paragraph{A Dynamic Model.}
\label{sec:summary:dynamic}
Instead of building $\pends(\ell)$ into the frame, so that it remains
fixed through execution, we may alternatively regard it as a component
of the states of the individual locations.  Let us regard
$\traces(\ell)$ as generated by a labeled transition system
$\lts\ell$.  Then we may enrich the labels $c,v$
%   
%   $\deliver(c,v)$ and
%   $\accept(c,v)$
%   
so that they also involve a sequence of endpoints
$\overline p\subseteq \Endpoints$:
$$(c,v,\overline p).  
$$
The transition relation of $\lts\ell$ is then constrained to allow a
transmission $(c,v,\overline p)$ in a state only if
$p\subseteq\pends(\ell)$ holds in that state, in which case
$\overline p$ is omitted in the next state.  A reception
$(c,v,\overline p)$ causes $\overline p$ to be added to the next state
of the receiving location.

The cut-blur principle remains true in an important case:  A set
$\lcut$ is an \emph{invariant cut} between $\lsrc$ and $\lobs$ if it
is an undirected cut, and moreover the execution of the frame
preserves this property.  Then the cut-blur principle holds in the
dynamic model for invariant cuts.  

%   Our definition of homomorphism
%   (Appendix~\ref{sec:hom}) continues to apply.

This dynamic model suggests an analysis of security-aware software
using object capabilities.  Object capabilities may be viewed as
transmission endpoints.  To use a capability, one transmits a message
down the channel to the object itself, which holds the reception
endpoint.  To transfer a capability, one transmits its
transmission endpoint down another channel.

McCamant and Ernst~\cite{mccamant2008quantitative}'s quantitative
approach generates a directed graph of this sort in memory at runtime.
Providing a maximum over all possible runs would appear to depend on
inferring some invariants on the structure of the graphs.  Our methods
might be helpful for this.

\paragraph{Cryptographic Masking.}  Encryption is not a blur.
Encrypting messages makes their contents unavailable in locations
lacking the decryption keys.  In particular, locations lacking the
decryption key may form a cut set between the source and destination
of the encrypted message.  However, at the destinations, where the
keys are available, the messages can be decrypted and their contents
observed.  Thus, the cut-blur theorem says it would be wrong to view
encryption as a blur in this set-up:  Its effects can be undone beyond
the cut.

Instead, we enrich the model.  The messages on a channel are not
observed in their specificity.  Instead, the observation is a
projection of messages on the channel.  One could represent this
projection using the Abadi-Rogaway~\cite{AbadiRogaway02} pattern
operator $\squareforqed$.  Alternatively, one could rely directly on
the cryptographic indistinguishability
relation~\cite{askarov2008cryptographically}.

In general, this means applying the compatibility function
$\cmpt{C}{C'}{\bnd}$ not only to local $\lruns C$ $\bnd$, but also to
their projections $\cmpt{C}{C'}{\mathsf{proj}(\bnd)}$ for projection
functions of interest.  This change is quite compatible with many of
our proof methods, but the statements of theorems will need
adaptation.  We would like to use the resulting set-up to reason about
cryptographic voting systems, such as Helios and
Pr{\^e}t-{\`a}-Voter~\cite{Adida2008,ryan2009pret}.

We also intend to provide tool support for defining relevant blurs and
establishing that they limit disclosure in several application areas.

%%% Local Variables: 
%%% mode: latex
%%% TeX-master: "ccut"
%%% End: 

\paragraph{Acknowledgments.}  We are grateful to Megumi Ando, Aslan
Askarov, Stephen Chong, John Ramsdell, and Mitchell Wand.  In
particular, John Ramsdell formalized Thms.~\ref{thm:cut:II}
and~\ref{cor:compose} in PVS, as well as Example~\ref{eg:voting}; this
suggested Example~\ref{eg:voting:compositional}.

\iffalse 
\paragraph{Notice.}  This technical data was produced for the US
Government under Contract No. W15P7T-13-C-A802, and is subject to the
Rights in Technical Data--Noncommercial Items clause at DFARS
252.227-7013 (Feb 2012).
\fi

\bibliography{../../../bibtex/secureprotocols}

\begin{thebibliography}{10}

\bibitem{AbadiRogaway02}
Mart{\'\i}n Abadi and Phillip Rogaway.
\newblock Reconciling two views of cryptography (the computational soundness of
  formal encryption).
\newblock {\em Journal of Cryptology}, 15(2):103--127, 2002.

\bibitem{adao2014mignis}
P~Adao, Claudio Bozzato, R~Focardi Gian-Luca Dei~Rossi, and F~Luccio.
\newblock Mignis: A semantic based tool for firewall configuration.
\newblock In {\em {IEEE CSF}}, 2014.

\bibitem{Adida2008}
Ben Adida.
\newblock Helios: Web-based open-audit voting.
\newblock In {\em Usenix Security Symposium}. Usenix Association, 2008.

\bibitem{askarov2008cryptographically}
Aslan Askarov, Daniel Hedin, and Andrei Sabelfeld.
\newblock Cryptographically-masked flows.
\newblock {\em Theoretical Computer Science}, 402(2):82--101, 2008.

\bibitem{askarov2007gradual}
Aslan Askarov and Andrei Sabelfeld.
\newblock Gradual release: Unifying declassification, encryption and key
  release policies.
\newblock In {\em {IEEE} Symp. Security and Privacy}, pages 207--221. IEEE,
  2007.

\bibitem{balliu2011epistemic}
Musard Balliu, Mads Dam, and Gurvan~Le Guernic.
\newblock Epistemic temporal logic for information flow security.
\newblock In {\em Programming Languages and Analysis for Security}. ACM, 2011.

\bibitem{bohannon2009reactive}
Aaron Bohannon, Benjamin~C Pierce, Vilhelm Sj{\"o}berg, Stephanie Weirich, and
  Steve Zdancewic.
\newblock Reactive noninterference.
\newblock In {\em {ACM} conference on Computer and Communications Security},
  pages 79--90. ACM, 2009.

\bibitem{bossi2004verifying}
Annalisa Bossi, Riccardo Focardi, Carla Piazza, and Sabina Rossi.
\newblock Verifying persistent security properties.
\newblock {\em Computer Languages, Systems \& Structures}, 30(3):231--258,
  2004.

\bibitem{CanettiCKLLPS08}
Ran Canetti, Ling Cheung, Dilsun~Kirli Kaynar, Moses Liskov, Nancy~A. Lynch,
  Olivier Pereira, and Roberto Segala.
\newblock Analyzing security protocols using time-bounded task-{PIOAs}.
\newblock {\em Discrete Event Dynamic Systems}, 18(1):111--159, 2008.

\bibitem{chatzikokolakis2007making}
Konstantinos Chatzikokolakis and Catuscia Palamidessi.
\newblock Making random choices invisible to the scheduler.
\newblock In {\em CONCUR}, pages 42--58. Springer, 2007.

\bibitem{chong2007secure}
Stephen Chong, Jed Liu, Andrew~C Myers, Xin Qi, Krishnaprasad Vikram, Lantian
  Zheng, and Xin Zheng.
\newblock Secure web applications via automatic partitioning.
\newblock In {\em {ACM SIGOPS} Operating Systems Review}, volume~41, pages
  31--44. ACM, 2007.

\bibitem{chong2009deriving}
Stephen Chong and Ron {van der Meyden}.
\newblock Deriving epistemic conclusions from agent architecture.
\newblock In {\em Theoretical Aspects of Rationality and Knowledge}, pages
  61--70. ACM, 2009.

\bibitem{ChongvanderMeyden14}
Stephen Chong and Ron {van der Meyden}.
\newblock Using architecture to reason about information security.
\newblock Technical Report arXiv:1409.0309, Arxiv, Sept. 2014.
\newblock \url{http://arxiv.org/abs/1409.0309v1}.

\bibitem{clark2007security}
Jeremy Clark, Aleks Essex, and Carlisle Adams.
\newblock On the security of ballot receipts in e2e voting systems.
\newblock In {\em Workshop on Trustworthy Elections ({WOTE})}, 2007.

\bibitem{CokerEtAl10}
George Coker, Joshua Guttman, Peter Loscocco, Amy Herzog, Jonathan Millen,
  Brian O'Hanlon, John Ramsdell, Ariel Segall, Justin Sheehy, and Brian
  Sniffen.
\newblock Principles of remote attestation.
\newblock {\em International Journal of Information Security}, 10(2):63--81,
  2011.

\bibitem{cormen2003introduction}
Thomas~H. Cormen, Charles~E. Leiserson, Ronald~L. Rivest, and Clifford Stein.
\newblock {\em Introduction to Algorithms}.
\newblock MIT Press, Cambridge, MA, second edition, 2003.

\bibitem{delaune2009verifying}
St{\'e}phanie Delaune, Steve Kremer, and Mark Ryan.
\newblock Verifying privacy-type properties of electronic voting protocols.
\newblock {\em Journal of Computer Security}, 17(4):435--487, 2009.

\bibitem{denning1977certification}
Dorothy~E Denning and Peter~J Denning.
\newblock Certification of programs for secure information flow.
\newblock {\em Communications of the ACM}, 20(7):504--513, 1977.

\bibitem{HalpernEtAl95}
Ronald Fagin, Joseph~Y. Halpern, Yoram Moses, and Moshe~Y. Vardi.
\newblock {\em Reasoning about Knowledge}.
\newblock {MIT} Press, Cambridge, MA, 1995.

\bibitem{FocardiGorrieri97}
Riccardo Focardi and Roberto Gorrieri.
\newblock The compositional security checker: A tool for the verification of
  information flow security properties.
\newblock {\em IEEE Transactions on Software Engineering}, 23(9), September
  1997.

\bibitem{focardi2001classification}
Riccardo Focardi and Roberto Gorrieri.
\newblock Classification of security properties.
\newblock In {\em Foundations of Security Analysis and Design}, pages 331--396.
  Springer, 2001.

\bibitem{GoguenMeseguer82}
Joseph~A. Goguen and Jos\'e Meseguer.
\newblock Security policies and security models.
\newblock In {\em {IEEE} Symposium on Security and Privacy}, 1982.

\bibitem{GuttmanHerzog05}
Joshua~D. Guttman and Amy~L. Herzog.
\newblock Rigorous automated network security management.
\newblock {\em International Journal for Information Security}, 5(1--2):29--48,
  2005.

\bibitem{GuttmanRowe14:cut}
Joshua~D. Guttman and Paul~D. Rowe.
\newblock A cut principle for information flow.
\newblock Arxiv, October 2014.
\newblock \url{http://arxiv.org/abs/1410.4617}.

\bibitem{HaighYoung1987}
J~Thomas Haigh and William~D Young.
\newblock Extending the non-interference version of mls for sat.
\newblock {\em {IEEE} Transactions on Software Engineering}, 13(2), 1987.

\bibitem{jacob1992basic}
Jeremy Jacob.
\newblock Basic theorems about security.
\newblock {\em Journal of Computer Security}, 1(3):385--411, 1992.

\bibitem{johnson1988security}
Dale~M Johnson and F~Javier Thayer.
\newblock Security and the composition of machines.
\newblock In {\em CSFW}, volume~88, pages 72--89, 1988.

\bibitem{mantel2000possibilistic}
Heiko Mantel.
\newblock Possibilistic definitions of security-an assembly kit.
\newblock In {\em {IEEE} Computer Security Foundations}, pages 185--199. IEEE,
  2000.

\bibitem{mantel2001preserving}
Heiko Mantel.
\newblock Preserving information flow properties under refinement.
\newblock In {\em {IEEE} Computer Security Foundations}, pages 78--91, 2001.

\bibitem{mantel2002composition}
Heiko Mantel.
\newblock On the composition of secure systems.
\newblock In {\em {IEEE} Security and Privacy}, pages 88--101, 2002.

\bibitem{mantel2011assumptions}
Heiko Mantel, David Sands, and Henning Sudbrock.
\newblock Assumptions and guarantees for compositional noninterference.
\newblock In {\em {IEEE} Computer Security Foundations Symposium}, pages
  218--232, 2011.

\bibitem{Mattern1989}
Friedemann Mattern.
\newblock Virtual time and global states of distributed systems.
\newblock In M.~Cosnard, editor, {\em Proc. Workshop on Parallel and
  Distributed Algorithms}, pages 215--226, North-Holland / Elsevier, 1989.
\newblock (Reprinted in: Z. Yang, T.A. Marsland (Eds.), "Global States and Time
  in Distributed Systems", IEEE, 1994, pp. 123-133.).

\bibitem{mccamant2008quantitative}
Stephen McCamant and Michael~D Ernst.
\newblock Quantitative information flow as network flow capacity.
\newblock {\em ACM SIGPLAN Notices ({PLDI})}, 43(6):193--205, 2008.

\bibitem{mccullough1987specifications}
Daryl McCullough.
\newblock Specifications for multi-level security and a hook-up property.
\newblock In {\em {IEEE} Symposium on Security and Privacy}, pages 161--161,
  1987.

\bibitem{mccullough1988noninterference}
Daryl McCullough.
\newblock Noninterference and the composability of security properties.
\newblock In {\em {IEEE} Symposium on Security and Privacy}, pages 177--186,
  1988.

\bibitem{mclean1994general}
John McLean.
\newblock A general theory of composition for trace sets closed under selective
  interleaving functions.
\newblock In {\em {IEEE} Symp.~Security and Privacy}, pages 79--93, 1994.

\bibitem{MoranHeatherSchneider2015}
Murat Moran, James Heather, and Steve Schneider.
\newblock Automated anonymity verification of the {ThreeBallot} and {VAV}
  voting systems.
\newblock {\em Software and Systems Modeling}, 2015.
\newblock Forthcoming.

\bibitem{morgan2006shadow}
Carroll Morgan.
\newblock The shadow knows: Refinement of ignorance in sequential programs.
\newblock In {\em Mathematics of program construction}, pages 359--378.
  Springer, 2006.

\bibitem{RafnssonSabelfeld14}
Willard Rafnsson and Andrei Sabelfeld.
\newblock Compositional information-flow security for interactive systems.
\newblock In {\em {IEEE} Symp.~{CSF}}, pages 277--292, July 2014.

\bibitem{rivest2007three}
Ronald~L Rivest and Warren~D Smith.
\newblock Three voting protocols: {ThreeBallot}, {VAV}, and {Twin}.
\newblock \url{http://people.csail.mit.edu/rivest/pubs/RS07.pdf}, 2007.

\bibitem{RoscoeGoldsmith99}
A.~W. Roscoe and M.~H. Goldsmith.
\newblock What is intransitive noninterference?
\newblock In {\em 12th {IEEE} Computer Security Foundations Workshop}, pages
  228--238. {IEEE} {CS} Press, June 1999.

\bibitem{Roscoe1995csp}
A~William Roscoe.
\newblock {CSP} and determinism in security modelling.
\newblock In {\em {IEEE} Security and Privacy}, pages 114--127. IEEE, 1995.

\bibitem{roscoe1994non}
A.W. Roscoe, J.C.P. Woodcock, and L.~Wulf.
\newblock Non-interference through determinism.
\newblock In {\em Computer Security---ESORICS 94}, pages 31--53. Springer,
  1994.

\bibitem{rossi2009information}
Sabina Rossi and Damiano Macedonio.
\newblock Information flow security for service compositions.
\newblock In {\em ICUMT}, pages 1--8, 2009.

\bibitem{rushby1992noninterference}
John Rushby.
\newblock {\em Noninterference, transitivity, and channel-control security
  policies}.
\newblock SRI International, Computer Science Laboratory, 1992.

\bibitem{RyanSchneider2001}
Peter Ryan and Steve Schneider.
\newblock Process algebra and non-interference.
\newblock {\em J. Comput. Secur.}, 9(1-2):75--103, January 2001.

\bibitem{ryan2009pret}
Peter~YA Ryan, David Bismark, James Heather, Steve Schneider, and Zhe Xia.
\newblock Pr{\^e}t {\`a} voter: a voter-verifiable voting system.
\newblock {\em Information Forensics and Security, IEEE Transactions on},
  4(4):662--673, 2009.

\bibitem{SabelfeldMyers03}
Andrei Sabelfeld and Andrew~C. Myers.
\newblock Language-based information-flow security.
\newblock {\em {IEEE} Journal on Selected Areas in Communication}, 21(1):5--19,
  January 2003.

\bibitem{sabelfeld2004model}
Andrei Sabelfeld and Andrew~C Myers.
\newblock A model for delimited information release.
\newblock In {\em Software Security-Theories and Systems}, pages 174--191.
  Springer, 2004.

\bibitem{sabelfeld2009declassification}
Andrei Sabelfeld and David Sands.
\newblock Declassification: Dimensions and principles.
\newblock {\em Journal of Computer Security}, 17(5):517--548, 2009.

\bibitem{Sutherland86}
David Sutherland.
\newblock A model of information.
\newblock In {\em 9th National Computer Security Conference}. National
  Institute of Standards and Technology, 1986.

\bibitem{van2007indeed}
Ron {van der Meyden}.
\newblock What, indeed, is intransitive noninterference?
\newblock In {\em Computer Security--ESORICS 2007}, pages 235--250. Springer,
  2007.

\bibitem{van2012architectural}
Ron {van der Meyden}.
\newblock Architectural refinement and notions of intransitive noninterference.
\newblock {\em Formal Aspects of Computing}, 24(4-6):769--792, 2012.

\bibitem{VolpanoEtAl96}
Dennis Volpano, Geoffrey Smith, and Cynthia Irvine.
\newblock A sound type system for secure flow analysis.
\newblock {\em Journal of Computer Security}, 4(3):1--21, 1996.

\bibitem{wittbold1990information}
J~Todd Wittbold and Dale~M Johnson.
\newblock Information flow in nondeterministic systems.
\newblock In {\em {IEEE} Symp. Security and Privacy}, pages 144--144, 1990.

\bibitem{zakinthinos1995composability}
Aris Zakinthinos and E~Stewart Lee.
\newblock The composability of non-interference.
\newblock {\em Journal of Computer Security}, 3(4):269--281, 1995.

\bibitem{zdancewic2002secure}
Steve Zdancewic, Lantian Zheng, Nathaniel Nystrom, and Andrew~C Myers.
\newblock Secure program partitioning.
\newblock {\em ACM Transactions on Computer Systems (TOCS)}, 20(3):283--328,
  2002.

\end{thebibliography}
\bibliographystyle{plain}

\appendix
\ifieee \subsection{Some Lemmas and Proofs}\else\section{Some Lemmas
  and Proofs}\fi
\label{sec:lemmas}

% We start by providing the proof of
% Lemma~\ref{lemma:cmpt:propagation}.

\begin{repeatthm}{Lemma}{lemma:cmpt:propagation}
  \begin{enumerate}
    \item Suppose $C_0\subseteq C_1$ and $C_0'\subseteq C_1'$.  If
    $\Frame$ has no disclosure from $C_1$ to $C_1'$, then $\Frame$ has
    no disclosure from $C_0$ to $C_0'$.
    \item When $C_1,
    C_2,C_3\subseteq\Channels$, % \label{clause:cmpt:propagation}
    $$ \cmpt{C_1}{C_3}{\bnd_1} \subseteq
    \bigcup_{\bnd_2\in\cmpt{C_1}{C_2}{\bnd_1}}
    \cmpt{C_2}{C_3}{\bnd_2}.$$
  \end{enumerate}
\end{repeatthm}
\begin{proof}
  \textbf{1.}  Suppose $\bnd_0$ is a $C_0$-run, and $\bnd_0'$ is a
  $C_0'$-run.  We want to show that
  $\bnd_0'\in\cmpt{C_0}{C_0'}{\bnd_0}$.  

  Since they are local runs, there exist
  $\bndA_0,\bndA_0'\in\exec(\Frame)$ such that
  $\bnd_0=\restrict{\bndA_0}{C_0}$ and
  $\bnd_0'=\restrict{\bndA_0'}{C_0'}$.  But let
  $\bnd_1=\restrict{\bndA_0}{C_1}$ and let
  $\bnd_1'=\restrict{\bndA_0'}{C_1'}$.  By no-disclosure,
  $\bnd_1'\in\cmpt{C_1}{C_1'}{\bnd_1}$.  So there is an
  $\bndA\in\exec(\Frame)$ such that $\bnd_1=\restrict\bndA{C_1}$ and
  $\bnd_1'=\restrict\bndA{C_1'}$.

  However, then $\bndA$ witnesses for
  $\bnd_0'\in\cmpt{C_0}{C_0'}{\bnd_0}$:  After all, since
  $C_0\subseteq C_1$,
  $\restrict\bndA{C_0}=\restrict{(\restrict\bndA{C_1})}{C_0}$.
  Similarly for the primed versions.

  \smallskip\noindent\textbf{2.}  Suppose that
  $\bnd_3\in\cmpt{C_1}{C_3}{\bnd_1}$, so that there exists an
  $\bndA\in\exec(\Frame)$ such that $\bnd_1=\restrict\bndA{C_1}$ and
  $\bnd_3=\restrict\bndA{C_3}$.  Letting $\bnd_2=\restrict\bndA{C_2}$,
  the execution $\bndA$ ensures that
  $\bnd_2\in\cmpt{C_1}{C_2}{\bnd_1}$ and
  $\bnd_3\in\cmpt{C_2}{C_3}{\bnd_2}$.  \myqed
\end{proof}

We now consider different frames $\Frame_1,\Frame_2$ that overlap on a
common subset $L_0$, and show how local runs in the two can be pieced
together.  In this context, we use the notation of
Def.~\ref{def:shared:locations}.  
%
\iffalse 
\begin{repeatthm}{Definition}{def:shared:locations}
%
  A set $L_0$ of locations is \emph{shared between} $\Frame_1$ and
  $\Frame_2$ iff $\Frame_1,\Frame_2$ are frames with locations
  $\Locations_1,\Locations_2$, endpoints $\pends_1,\pends_2$ and
  traces $\traces_1,\traces_2$, resp., where
  $L_0\subseteq\Locations_1\cap\Locations_2$, and for all
  $\ell\in L_0$, $\pends_1(\ell)=\pends_2(\ell)$ and
  $\traces_1(\ell)=\traces_2(\ell)$.

  When $L_0$ is shared between $\Frame_1$ and $\Frame_2$, let:
  \begin{description}
    \item[$\LEFTZ=$] $\{c\in\Channels_1\colon$ both endpoints of $c$
    are locations $\ell\in L_0\}$;
    \item[$\lcutz=$] $\{c\in\Channels_1\colon$ exactly one endpoint of
    $c$ is a location $\ell\in L_0\}$; and 
    \item[$\RIGHT_i=$] $\{c\in\Channels_i\colon$ neither endpoint of
    $c$ is a location $\ell\in L_0\}$, for $i=1,2$.  
  \end{description}
%
  We will also use $\lrunsone{C}$ and $\lrunstwo{C}$ to refer to the
  local runs of $C$ within $\Frame_1$ and $\Frame_2$, resp.; and
  $\cmptone{C}{C'}{\bnd}$ and $\cmpttwo{C}{C'}{\bnd}$ will refer to
  the compatible $C'$ runs in the frames $\Frame_1$ and $\Frame_2$,
  resp.  \defend
%
\end{repeatthm}
\fi 
%
%   Lemma~\ref{lemma:combine:executions} is the special case of
%   Lemma~\ref{lemma:combine:executions:two:frames} in which
%   $\Frame_1=\Frame_2$, and we take
%   $L_0=\{\ell\in\Locations\colon \LEFT\cap\chans(\ell)\not=\emptyset\}$.

\begin{lemma}\label{lemma:combine:executions:two:frames}
  Let $L_0$ be shared between frames $\Frame_1,\Frame_2$.  Let
  $$\bnd_{lc}\in\lrunsone{(\LEFT\cup\lcut)} \mbox{ and }
  \bnd_{rc}\in\lrunstwo{(\RIGHT_2\cup\lcut)}$$ 
  agree on $\lcut$,
  i.e.~$\restrict{\bnd_{lc}}{\lcut} = \restrict{\bnd_{rc}}{\lcut}$.
  Then there is an $\bndA\in\exec(\Frame_2)$ such that
  $$\bnd_{lc}=\restrict\bndA{(\LEFT\cup\lcut)} \mbox{ and } \bnd_{rc}= 
  \restrict\bndA{(\RIGHT_2\cup\lcut)}.$$
\end{lemma}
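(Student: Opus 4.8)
The plan is to mirror the single-frame gluing construction, using the shared core $L_0$ to stitch together a witnessing execution of $\Frame_1$ and one of $\Frame_2$ into a single execution of $\Frame_2$. Since $\bnd_{lc}$ and $\bnd_{rc}$ are local runs, I would first choose witnesses $\bndA_1\in\exec(\Frame_1)$ with $\restrict{\bndA_1}{(\LEFT\cup\lcut)}=\bnd_{lc}$ and $\bndA_2\in\exec(\Frame_2)$ with $\restrict{\bndA_2}{(\RIGHT_2\cup\lcut)}=\bnd_{rc}$. I then define $\bndA$ by setting $\events(\bndA)=\events(\bnd_{lc})\cup\events(\bnd_{rc})$ and taking $\preceq_{\bndA}$ to be the least partial order extending $\preceq_{\bnd_{lc}}\cup\preceq_{\bnd_{rc}}$. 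Because the two runs agree on $\lcut$, their only shared events are the $\lcut$-events, so by construction the restrictions of $\bndA$ to $\LEFT\cup\lcut$ and to $\RIGHT_2\cup\lcut$ recover $\bnd_{lc}$ and $\bnd_{rc}$ respectively.

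Three things then need checking: that $\preceq_{\bndA}$ is a genuine partial order, that $\bndA$ is well-founded, and that $\bndA\in\exec(\Frame_2)$. For the first two the key observation is that any chain in $\preceq_{\bndA}$ can switch between the two orders only at a shared event, i.e.\ at a $\lcut$-event, and on $\lcut$-events the two runs induce the \emph{same} order since $\restrict{\bnd_{lc}}{\lcut}=\restrict{\bnd_{rc}}{\lcut}$. Consecutive $\lcut$-events along such a chain are therefore already comparable in this common order. This rules out cycles, since a cycle would force two distinct $\lcut$-events to be mutually ordered in the common $\lcut$-order; and it lets every chain be collapsed to one crossing at most a single $\lcut$-event, so the $\preceq_{\bndA}$-predecessors of any event form a finite union of the (finite) predecessor sets taken from $\bnd_{lc}$ and $\bnd_{rc}$, giving Clause~\ref{cl:finite:predecessors} of Def.~\ref{def:events}.

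For $\bndA\in\exec(\Frame_2)$ I would use the clean partition of locations that $\lcut$ induces, as noted after Def.~\ref{def:shared:locations}. Every $\ell\in L_0$ has $\chans(\ell)\subseteq\LEFT\cup\lcut$, so $\proj(\bndA,\ell)=\proj(\bndA_1,\ell)$, which is linearly ordered and lies in $\traces_1(\ell)=\traces_2(\ell)$; every $\ell\in\Locations_2\setminus L_0$ has $\chans(\ell)\subseteq\lcut\cup\RIGHT_2$, so $\proj(\bndA,\ell)=\proj(\bndA_2,\ell)\in\traces_2(\ell)$. Here the shared-locations hypothesis $\traces_1(\ell)=\traces_2(\ell)$ for $\ell\in L_0$ is exactly what legalizes the imported $\Frame_1$-fragment inside $\Frame_2$, and the fact that no location owns channels in both $\LEFT$ and $\RIGHT_2$ — because $\lcut$ is the boundary of $L_0$ — is what guarantees that each projection is inherited from a single witness.

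The main obstacle I anticipate is the soundness of $\preceq_{\bndA}$ as a well-founded partial order, not the per-location execution check. Establishing acyclicity and finite predecessors across two distinct frames is where the work lies, and it turns entirely on the agreement of the two runs on $\lcut$, as sketched above. The partial-order semantics is an asset here: no interleaving of the two fragments is required, so once the order is shown sound the remaining verification is the routine projection argument driven by the location partition and the shared traces.
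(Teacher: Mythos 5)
Your proposal is correct and follows essentially the same approach as the paper's own proof: the same choice of witnesses $\bndA_1\in\exec(\Frame_1)$, $\bndA_2\in\exec(\Frame_2)$, the same construction of $\bndA$ as the union of events under the least partial order extending ${\preceq_{\bnd_{lc}}}\cup{\preceq_{\bnd_{rc}}}$, the same soundness argument (acyclicity and finite predecessors both resting on agreement of the two runs on $\lcut$), and the same final execution check via the fact that no location has channels in both $\LEFT$ and $\RIGHT_2$, with $\traces_1(\ell)=\traces_2(\ell)$ legitimizing the imported $\Frame_1$-behavior for $\ell\in L_0$. No gaps; the emphasis you place on the cut-agreement argument matches exactly where the paper puts the work.
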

\begin{proof}
  Since $\bnd_{lc}$ and $\bnd_{rc}$ are local runs of
  $\Frame_1,\Frame_2$ resp., they are restrictions of executions, so
  choose $\bndA_1\in\exec(\Frame_1)$ and $\bndA_2\in\exec(\Frame_2)$
  so that $\bnd_{lc}=\restrict{\bndA_1}{(\LEFT\cup\lcut)}$ and
  $\bnd_{rc}=\restrict{\bndA_2}{(\RIGHT_2\cup\lcut)}$.
  Now define $\bndA$ by stipulating:
  \begin{eqnarray}
    \events(\bndA) &=& \events(\bnd_{lc}) \cup
                       \events({\bnd_{rc}}
                       )  \\
    \mathop{\preceq_\bndA} &=& \mbox{\ifieee\small\fi the least
                               partial order extending
                               } \label{clause:union:preceq:two:frames}    
                               {\preceq_{\bnd_{lc}}}
                               \mathrel\cup{\preceq_{\bnd_{rc}}}
                               \ifieee\!\fi . 
  \end{eqnarray}
%   %
%     \begin{description}
%       \item[$\events(\bndA)=$]
%       $\events(\restrict{\bndA_1}{\LEFT\cup\lcut}) \cup
%       \events(\restrict{\bndA_2}{\RIGHT})$;
%       \item[$\mathop{\preceq_\bndA}=$] the least partial order extending
%       $\preceq_{\bndA_1}$ on $\events(\restrict{\bndA_1}{\LEFT\cup\lcut})$
%       and $\preceq_{\bndA_2}$ on
%       $\events(\restrict{\bndA_2}{\RIGHT\cup\lcut})$.  
%     \end{description}
%   %
  Since $\bndA_1,\bndA_2$ agree on $\lcut$,
  $\events(\bndA)=\events(\restrict{\bnd_{lc}}{\LEFT}) \cup
  \events({\bnd_{rc}})$, and we could have used the latter as an
  alternate definition of $\events(\bndA)$, as well as the symmetric
  restriction of ${\bnd_{rc}}$ to $\RIGHT_2$ leaving ${\bnd_{lc}}$
  whole.

  The definition of ${\preceq_\bndA}$ as a partial order is sound,
  because there are no cycles in the union
  (\ref{clause:union:preceq:two:frames}).  Cycles would require
  $\bndA_1$ and $\bndA_2$ to disagree on the order of events in their
  restrictions to $\lcut$, contrary to assumption.  Likewise, the
  finite-predecessor property is preserved:  $x_0\preceq_\bndA x_1$
  iff $x_0,x_1$ belong to the same $\bnd_{?c}$ and are ordered there,
  or else there is an event in $\restrict{\bnd_{?c}}\lcut$ which comes
  between them.  So the events preceding $x_1$ form the finite union
  of finite sets.  Thus, $\bndA\in\ES(\Frame_2)$.

  Moreover, $\bndA$ is an execution $\bndA\in\exec(\Frame_2)$:  If
  $\ell\in L_0$, then $\proj(\bndA,\ell)=\proj({\bnd_{lc}},\ell)$, and
  the latter is a trace in $\traces_1(\ell)=\traces_2(\ell)$.  If
  $\ell\not\in L_0$, then $\proj(\bndA,\ell)=\proj({\bnd_{rc}},\ell)$,
  and the latter is a trace in $\traces_2(\ell)$.

  There is no $\ell$ with channels in both $\LEFT$ and $\RIGHT_2$.
  \myqed
\end{proof}
What makes this proof work?  Any one location either has all of its
channels lying in $\LEFTZ\cup\lcutz$ or else all of them lying in
$\RIGHT_i\cup\lcut$.  When piecing together the two executions
$\bndA_1,\bndA_2$ into a single execution $\bndA$, no location needs
to be able to execute a trace that comes partly from $\bndA_1$ and
partly from $\bndA_2$.  This is what determines our definition of cuts
using the undirected graph $\un\gr(\Frame)$.

We next prove the two-frame analog of Lemma~\ref{lemma:cut}.

\begin{repeatthm}{Lemma}{lemma:cut:two:frames}
  Let $L_0$ be shared between frames $\Frame_1,\Frame_2$.  Let
  $\lsrc\subseteq\LEFT$, and
  $\bnd_c\in\lrunsone\lcutz\cap\lrunstwo\lcutz$.
  \begin{enumerate}  
    \item $\cmptone{\lcutz}{\lsrc}{\bnd_c} =
    \cmpttwo{\lcutz}{\lsrc}{\bnd_c}$.\label{clause:cmptone:cmpttwo}
    \item Assume
    $\lruns\lcutz(\Frame_2) \subseteq \lruns\lcutz(\Frame_1)$.  Let
    $\lobs\subseteq\RIGHT_2$, and $\bnd_o\in\lrunstwo\lobs$.
    Then\label{clause:cmptone:cmpttwo:equality}
    $$ \cmpttwo{\lobs}{\lsrc}{\bnd_o} =
    \bigcup_{\bnd_c\in\cmpttwo{\lobs}{\lcutz}{\bnd_o}}
    \cmptone{\lcutz}{\lsrc}{\bnd_c} .
    $$
  \end{enumerate}
\end{repeatthm}
\begin{proof} \textbf{\ref{clause:cmptone:cmpttwo}.}  First, we show
  that $\bnd_s\in\cmptone{\lcutz}{\lsrc}{\bnd_c}$ implies
  $\bnd_s\in\cmpttwo{\lcutz}{\lsrc}{\bnd_c}$.

  Let $\bndA_1$ witness for $\bnd_s\in\cmptone{\lcut}{\lsrc}{\bnd_c}$,
  and let $\bndA_2$ witness for $\bnd_c\in\lrunstwo{\lcut}$.  Define
  $$\bnd_{lc}=\restrict{\bndA_1}{(\LEFT\cup\lcut)} \mbox{ and }
  \bnd_{rc}=\restrict{\bndA_2}{(\RIGHT_2\cup\lcut)} . $$ 
  Now the assumptions for
  Lemma~\ref{lemma:combine:executions:two:frames} are satisfied.  So
  let $\bndA\in\exec(\Frame_2)$ restrict to $\bnd_{lc}$ and
  $\bnd_{rc}$ as in the conclusion.  Thus,
  $\restrict\bndA{\lsrc}=\bnd_s$.

%  \medskip\noindent \textbf{2.}  
  For the converse, we rely on the symmetry of ``$L_0$ is shared
  between frames $\Frame_1,\Frame_2$.''

  \medskip\noindent \textbf{\ref{clause:cmptone:cmpttwo:equality}.}
  By the assumption, whenever
  ${\bnd_c\in\cmpttwo{\lobs}{\lcut}{\bnd_o}}$, then also
  ${\bnd_c\in\lrunsone{\lcut}}$.  Thus, we can apply
  part~\ref{clause:cmptone:cmpttwo} after using
  Lemma~\ref{lemma:cmpt:propagation}:
  \begin{eqnarray*}
    \cmpttwo{\lobs}{\lsrc}{\bnd_o} 
    &\subseteq& 
                \bigcup_{\bnd_c\in\cmpttwo{\lobs}{\lcutz}{\bnd_o}}
                \cmpttwo{\lcutz}{\lsrc}{\bnd_c} \\ 
    &\subseteq&
                \bigcup_{\bnd_c\in\cmpttwo{\lobs}{\lcutz}{\bnd_o}}
                \cmptone{\lcutz}{\lsrc}{\bnd_c} . 
  \end{eqnarray*}
  For the reverse inclusion, assume that
  $\bnd_s\in\cmptone{\lcutz}{\lsrc}{\bnd_c}$, where
  ${\bnd_c\in\cmpttwo{\lobs}{\lcutz}{\bnd_o}}$.  Thus, we can apply
  Lemma~\ref{lemma:combine:executions:two:frames}, obtaining
  $\bndA\in\exec(\Frame_2)$ which agrees with $\bnd_s$, $\bnd_c$, and
  $\bnd_o$.  So $\bndA$ witnesses for
  $\bnd_s\in\cmpttwo{\lobs}{\lsrc}{\bnd_o}$.  \myqed
\end{proof}
We now turn to the one-frame corollary, which we presented earlier as
Lemma~\ref{lemma:cut}.  
\begin{repeatthm}{Lemma}{lemma:cut} 
  Let $\lcut$ be an undirected cut between $\lsrc,\lobs$, and let
  $\bnd_o\in\lruns\lsrc$.  Then 
  $$ \cmpt{\lobs}{\lsrc}{\bnd_o} =
  \bigcup_{\bnd_c\in\cmpt{\lobs}{\lcut}{\bnd_o}}
  \cmpt{\lcut}{\lsrc}{\bnd_c} . 
  $$
\end{repeatthm}
\begin{proof}
  Define $L_0$ to be the smallest set of locations such that
  \begin{enumerate}
    \item $\ell\in L_0$ if $\chans(\ell)\cap\lsrc\not=\emptyset$;
    \item $L_0$ is closed under reachability by paths that do not
    traverse $\lcut$.
  \end{enumerate}
  $L_0$ is shared between $\Frame$ and itself.  Moreover, for the set
  of channels $\lcutz$ defined in Def.~\ref{def:shared:locations}, we
  have $\lcutz\subseteq\lcut$:  $\lcutz$ is the part of $\lcut$ that
  actually lies on the boundary of $L_0$.

  By Lemma~\ref{lemma:cut:two:frames}, we have 
  $$ \cmpt{\lobs}{\lsrc}{\bnd_o} =
  \bigcup_{\bnd_c\in\cmpt{\lobs}{\lcutz}{\bnd_o}}
  \cmpt{\lcut}{\lsrc}{\bnd_c} .
  $$
  Since $\lcutz\subseteq\lcut$, 
  $$  \bigcup_{\bnd_c\in\cmpt{\lobs}{\lcutz}{\bnd_o}}
  \cmpt{\lcut}{\lsrc}{\bnd_c} \subseteq
  \bigcup_{\bnd_c\in\cmpt{\lobs}{\lcut}{\bnd_o}}
  \cmpt{\lcut}{\lsrc}{\bnd_c} .
  $$
  For the converse, suppose that
  $\bnd_s\in\cmpt{\lcut}{\lsrc}{\bnd_c}$, for
  ${\bnd_c\in\cmpt{\lobs}{\lcut}{\bnd_o}}$.  Then there is $\bndA$
  such that $\restrict\bndA{\lsrc}=\bnd_s$ and
  $\restrict\bndA{\lobs}=\bnd_o$.  Thus,
  $\bnd_s\in\cmpt{\lcut}{\lsrc}{\restrict\bndA{\lcutz}}$ and
  ${\restrict\bndA{\lcutz}}\in\cmpt{\lobs}{\lcutz}{\bnd_o}$.  \myqed
\end{proof}

The cut-blur principle is also the one-frame corollary of
Thm.~\ref{cor:compose}.  The proofs are very similar.

\begin{repeatthm}{Theorem}{cor:compose}
  Suppose that $L_0$ is shared between frames $\Frame_1,\Frame_2$, and
  assume $\lruns\lcut(\Frame_2) \subseteq \lruns\lcut(\Frame_1)$.
  Consider any $\lsrc\subseteq \LEFT$ and $\lobs\subseteq \RIGHT_2$.
  If $\Frame_1$ $f$-limits $\lsrc$-to-$\lcut$ flow, then $\Frame_2$
  $f$-limits $\lsrc$-to-$\lobs$ flow.
\end{repeatthm}
\begin{proof}
  By the hypothesis, $f$ is a blur operator.  Letting
  $\bnd_o\in\lrunstwo\lobs$, we want to show that
  $\cmpttwo{\lobs}{\lsrc}{\bnd_o}$ is an $f$-blurred set,
  i.e.~$\cmpttwo{\lobs}{\lsrc}{\bnd_o}=f(\cmpttwo{\lobs}{\lsrc}{\bnd_o})$.

  For convenience, let $S_c=\cmpttwo{\lobs}{\lcut}{\bnd_o}$.  By
  Lemma~\ref{lemma:cut:two:frames}, 
  $$ \cmpttwo{\lobs}{\lsrc}{\bnd_o} =
  \bigcup_{\bnd_c\in S_c} \cmptone{\lcut}{\lsrc}{\bnd_c} ;
  $$
  thus, we must show that the latter is $f$-blurred.  By the
  assumption that each $\cmptone{\lcut}{\lsrc}{\bnd_c}$ is
  $f$-blurred, we have
  $\cmptone{\lcut}{\lsrc}{\bnd_c}=f(\cmptone{\lcut}{\lsrc}{\bnd_c})$.
  Using this and the union property (Eqn.~\ref{eq:union}):
  \begin{eqnarray*}
    \bigcup_{\bnd_c\in S_c} \cmptone{\lcut}{\lsrc}{\bnd_c} 
    & = & \bigcup_{\bnd_c\in S_c} f(\cmptone{\lcut}{\lsrc}{\bnd_c}) \\
    & = & f(\bigcup_{\bnd_c\in S_c} \cmptone{\lcut}{\lsrc}{\bnd_c}) ,
  \end{eqnarray*}
  Hence, $\cmpttwo{\lobs}{\lsrc}{\bnd_o}$ is $f$-blurred.  \myqed
\end{proof}

%%% Local Variables:
%%% mode: latex
%%% TeX-master: "ccut"
%%% End:

%\input{hom}

\end{document}

%%% Local Variables: 
%%% mode: latex
%%% TeX-master: t
%%% End: 